\definecolor{darkgreen}{rgb}{0.0,0,0.9}
\newtheorem{theorem}{Theorem}[section]
\newtheorem{lemma}{Lemma}[section]
\newtheorem{observation}{Observation}[section]
\newcommand{\eh}[1]{h #1}
\newcommand{\vy}[1]{y #1}
\newcommand{\ex}[1]{x #1}
\title{Evacuating Equilateral Triangles and Squares in the Face-to-Face Model\thanks{A preliminary version of this paper appeared in proceedings of the 21st International Conference on Principles of Distributed Systems (OPODIS 2017)~\cite{ChuangpishitMNO17}. Research of Lata Narayanan and Jaroslav Opatrny is supported in part by Natural Sciences and Engineering Research Council of Canada (NSERC).}}
\author[1]{Huda Chuangpishit}
\author[2]{Saeed Mehrabi}
\author[3]{Lata Narayanan}
\author[3]{Jaroslav Opatrny}
\affil[1]{{\small Department of Mathematics, Ryerson University, Toronto,  Canada.

			\texttt{hoda.chuang@gmail.com}}}
\affil[2]{{\small School of Computer Science, Carleton University, Ottawa, Canada.

			\texttt{saeed.mehrabi@carleton.ca}}}
\affil[3]{{\small Department of Computer Science, Concordia University, Montreal, Canada.

\texttt{\{lata,opatrny\}@cs.concordia.ca}}}
\date{}
\begin{document}

\maketitle

\begin{abstract}
Consider $k$ robots initially located at a point inside a region $T$. Each robot can move anywhere in $T$ independently of other robots with maximum speed one. The goal of the robots is to \emph{evacuate} $T$ through an exit at an unknown location on the boundary of $T$.  The objective is to minimize the \emph{evacuation time}, which is defined as the time the {\em last} robot reaches the exit. We consider the {\em face-to-face} communication model for the robots: a robot can communicate with another robot only when they meet in $T$.

In this paper, we give upper and lower bounds for the face-to-face evacuation time by $k$ robots that are initially located at the centroid of a unit-sided equilateral triangle  or square. For the case of a triangle with $k=2$ robots, we give a lower bound of $1+2/\sqrt{3} \approx 2.154$, and an algorithm with upper bound of 2.3367 on the worst-case evacuation time. We  show that for any $k$, any algorithm for evacuating $k\geq 2$ robots  requires at least $\sqrt{3}$ time. This bound is asymptotically optimal, as we show that even a straightforward strategy of evacuation by $k$ robots gives an upper bound of $\sqrt{3} + 3/k$. For $k=3$ and $4$, we give better algorithms with evacuation times of 2.0887 and 1.9816, respectively. For the case of the square and $k=2$, we give an algorithm with evacuation time of $3.4645$ and show that any algorithm requires time at least $3.118$ to evacuate in the worst-case. Moreover, for $k=3$, and $4$, we give algorithms with evacuation times 3.1786 and 2.6646, respectively. The algorithms given for $k=3$ and $4$ for evacuation in the triangle or the square can be easily generalized for larger values of $k$.
\end{abstract}

\section{Introduction}
\label{sec:introduction}
Searching for an object at an unknown location in a specific domain in the plane is a well-studied problem in theoretical computer science~\cite{bookAW,Beck1964,bookBN,bookN,bookS}. The problem was initially studied when there is only one searcher, whom we refer to as a \emph{robot}.
The target is assumed to be a point in the domain, and the robot can only find the target when it visits that point. The goal then is to design a trajectory for the robot that finds the target as soon as possible. Recent work has  focused more on {\em parallel} search by several robots, which can reduce the search time as the robots can distribute the search effort among themselves. The {\em search time} by $k$ robots is generally defined to be the time the {\em first} robot reaches the target. 

A natural generalization of the parallel search problem, called the {\em evacuation problem}, was recently proposed in \cite{CzyzowiczGGKMP14}: consider several robots inside a region that has a single {\em exit} at an unknown location on its boundary. All robots need to reach the exit; i.e., \emph{evacuate} the region, as soon as possible. This is essentially the parallel search problem where the exit is the search target, however we are interested in minimizing the time the {\em last} robot arrives at the exit. Since then, the evacuation problem have been studied in several papers for different regions and types of communication among the robots \cite{BrandtFRW17, BW2017,CzyzowiczGKNOV15,CzyzowiczKKNOS15}.

The time needed for evacuation  substantially depends on the way robots can communicate among themselves. Two models of communication have been proposed: in the {\em wireless model}, each robot can communicate wirelessly with the other robots instantaneously, regardless of their locations. In the {\em face-to-face model}, two robots can communicate with each other only when they meet; i.e., when they occupy the same location at the same time. Since in the wireless model robots can communicate with each other regardless of their locations, as soon as a robot finds the exit, it can announce it to the other robots, which can  then take a straight-line path to the exit. This is not possible in the face-to-face communication model. In this case either a robot that found the exit must intercept other robot(s), or the trajectories of robots need to have a possibility of {\em meeting} so that  information about the location of the exit can be shared. This limited communication capability makes  the design of trajectories of robots of the evacuation problem more challenging.

In this paper, we study the problem of evacuating a unit equilateral triangle and a unit-side square in the face-to-face model with $k\geq 2$ robots,  all of which are initially located at the centroid of the triangle or the square. Our objective is to design the trajectories of the robots so as to minimize the \emph{worst-case evacuation time}, which is defined as the time it takes for \emph{all the robots} to reach the exit.

\paragraph{Related work.} A classical problem related to our paper is the well-known \emph{cow-path} problem introduced by A. Beck \cite{Beck1964}, in which a cow searches for a hole in an infinite linear fence. An optimal deterministic algorithm for this problem and for its generalization to several fences is known, e.g., Baeza-Yates et al.~\cite{Baeza-YatesCR88}. Since then several variants of the problem have been studied~\cite{Baeza-YatesS95, BrandtFRW17, ChrobakGGM15,CzyzowiczKKNOS15, FlocchiniPSW05,GhoshK10,JezL09,LiC09a,Lopez-OrtizS01}. 

Lopez-Ortiz and Sweet~\cite{Lopez-OrtizS01} asked for the worst-case trajectories of a set of robots searching {\em in parallel} for a target point at an unknown location in the plane. Feinerman et al.~\cite{FeinermanKLS12} (see also~\cite{FeinermanK13}) introduced a similar problem in which a set of robots that are located at a cell of an infinite grid and being controlled by a Turing machine (with no space constraints) need to find the target at a hidden location in the grid.  In these two models of multi-robot searching, the robots cannot communicate at all. By controlling each robot by an asynchronous finite state machine, Emek et al.~\cite{EmekLUW14} studied this problem in which the robot can have a ``local'' communication in some sense and proved that the collaboration performance of the robots remains the same, even if they possess a constant-size memory. Lenzen et al.~\cite{LenzenLNR14} extended this problem by introducing the \emph{selection complexity} measure as another factor in addition to studying the time complexity of the problem.

The evacuation problem with several robots has been studied in  recent years under wireless and face-to-face models of communications. For the wireless model,  Czyzowicz et al.~\cite{CzyzowiczGGKMP14} studied the problem of evacuating a unit disk, starting at the center of the disk. They gave a tight bound of $1+2\pi/3 +\sqrt(3)\approx 4.83$ for the evacuation time of two robots, as well as upper and lower bounds of, respectively, 4.22 and 4.159 for three robots. These bounds for $k$ robots become $3+\pi/k+O(k^{-4/3})$ and $3+\pi/k$, respectively~\cite{CzyzowiczGGKMP14}. Czyzowicz et al.~\cite{CzyzowiczKKNOS15} also studied the evacuation problem for $k$ robots for unit-side squares and equilateral triangles in the wireless model. For a unit-side square, they gave optimal algorithms for evacuating $k=2$ robots when located at the boundary of the square. Moreover, for an equilateral triangle, they gave optimal evacuation algorithms for $k=2$ robots in any initial position on the boundary or inside the triangle. They also showed that increasing the number of robots cannot improve the evacuation time when the starting position is on the boundary, but three robots can improve the evacuation time when the starting position is the centroid of the triangle. Recently, Brandt et al.~\cite{BrandtFRW17} considered the evacuation problem for $k$ robots on $m$ concurrent rays under the wireless model. Finally, the evacuation problem on a disk with three robots at most one of which is  \emph{faulty} was recently studied by Czyzowicz et al.~\cite{CzyzowiczGGKKRW17}, the case of several exit on a disk with two robots was considered in~\cite{CzyzowiczDGKM18}, and the priority evacuation of a specific robot in~\cite{CzyzowiczGKKKNO18}, all these  under the wireless model.

For the face-to-face model, Czyzowicz et al.~\cite{CzyzowiczGGKMP14} gave upper and lower bounds of, respectively, 5.74 and 5.199 for the evacuation time of two robots initially located at the center of a unit disk. Both the upper and lower bounds were improved by Czyzowicz et al.~\cite{CzyzowiczGKNOV15} to 5.628 and 5.255, respectively. A further improvement of the upper bound by 0.003 is given in \cite{BW2017}.  Closing the gap between upper and lower bounds remains open. When $k=3$ the upper and lower bounds for the face-to-face model are $5.09$ and $5.514$, respectively, and $3+2\pi/k$ and $3+2\pi/k-O(k^{-2})$ for any $k>3$~\cite{CzyzowiczGGKMP14}.

\paragraph{Our results.} In this paper, we study the evacuation of $k$ robots from an equilateral triangle and unit-side square under the face-to-face model. We prove the following results for an equilateral triangle $T$:
\begin{itemize}
\item  For $k=2$ we prove a lower bound of 2.154 on the evacuation time. We use {\em Equal-Travel with Detour} strategy to get  an evacuation algorithm with evacuation time of 2.3866 in which trajectories include a {\em detour} inside the triangle  before the entire boundary is  explored by the robots. We then show that a further improvement can be obtained  by  algorithms that uses detours {\em recursively}. The algorithm with two detours  improves the evacuation time for two robots to 2.3367.
\item For $k\geq 3$, we show that any algorithm for evacuating $k$ robots from triangle $T$ requires at least $\sqrt{3}$ time. We prove that this bound is asymptotically optimal by giving a simple algorithm that achieves an upper bound of $\sqrt{3} + 3/k$. 
\item We show that a significant improvement on the above upper bound can be obtained using the {\em Equal-Travel  Early-Meeting} strategy. In this strategy the trajectories  of all robots are of the same length and they include a \emph{meeting point} inside the triangle for all robots before the entire boundary is explored to share information about the exploration results so far. Applying this strategy we design algorithms for $k=3$, $4$, and $5$ with evacuation times of $\approx$  $2.0887$, $1.982$, and $1.876$, respectively.
\end{itemize}

We then study the face-to-face evacuation problem in a unit-side square $S$:
\begin{itemize}
\item For $k=2$, we first prove a lower bound of $3.118$ on the evacuation time of any deterministic algorithm. Then, we give an  {\em Equal-Travel with Detour} algorithm that achieves the evacuation time of 3.46443. The possibility of recursive application of detours is pointed out. 
\item For $k=3,4$, we give  {\em Equal-Travel  Early-Meeting} algorithms with evacuation times of 3.178 and 2.664, respectively.
\end{itemize}

\paragraph{Organization.} We first specify some preliminaries and notation in Section~\ref{sec:prelimins}. Then, we present our results for the equilateral triangle; we give the proofs of our lower bounds in Section~\ref{sec:lowerBounds}, and our evacuation algorithms in Section~\ref{sec:upperBounds}.  Section~\ref{sec:square} contains our lower and upper bounds results for the unit-side square. Finally, we conclude the paper with a summary of our results and a discussion of open problems in Section~\ref{sec:conclusion}.

\section{Preliminaries}
\label{sec:prelimins}
For two points $p$ and $q$ in the plane, we denote the line segment connecting $p$ and $q$ by $pq$ and its length  by $|pq|$. We assume the robots are initially all located at a point in the given region, the exit is located at an unknown location on the boundary of the region, and the robots communicate using the face-to-face model. Every robot moves at maximum speed $1$.

A deterministic algorithm for the evacuation problem by $k$ robots takes as input a specification of the given region (a triangle or a square in our case)  and the point $O$ where the $k$ robots located initially. It outputs for each robot a {\em fixed trajectory} consisting of a sequence of connected line segments or curves to be followed. We assume every robot knows the trajectories of all the robots. A robot $R$ follows its trajectory unless:
\begin{itemize}
\item $R$ sees the exit: $R$ may then quit its trajectory and go to a point where it can intercept another robot (or other robots)  and inform it (or them) about the exit. 
\item $R$ meets another robot who has found the exit: $R$ then quits its trajectory and proceeds directly to the exit.
\end{itemize}

Observe that the robots are initially co-located, and the initial part of their trajectories may be identical; i.e., when going to the boundary of the triangle or the square. Later on, the trajectories of two or several robots may intersect and the intersection point may be reached by all robots at the same time. We call such a point a {\em meeting point}.
A meeting point might be in the interior of the triangle or the square, and it can serve as a place for the robots to exchange information about the progress in the search for exit until this time. The meeting point can be reached by robots before the whole boundary is explored. When the robots meet, if one of the robots has found the exit, they can all proceed towards it. Otherwise, the robots can continue in the search for the exit separately or together. As shown in~\cite{CzyzowiczKKNOS15}, and  in Section~\ref{sec:upperBounds}, an algorithm with a meeting point in the interior of the region can improve the evacuation time in some cases.

If we have two robots; i.e., $k=2$, then the trajectories do not have to contain a meeting point in order to improve the evacuation time. The trajectory of each robot can leave the boundary of the region and return to the boundary without intersecting the other trajectory. We say that the robot makes a {\em detour}.  The detour part of the trajectory is designed so that the robot can be intercepted during the detour by the other robot in case the other robot has found the exit already. 
In the absence of such an interception, the robot has gained information about the {\em absence}  of the exit in some part of the boundary. Since the trajectories do not need to intersect, it allows the trajectories to be shorter. The trajectories of two robots  with one detour was used in \cite{CzyzowiczGGKMP14}, and also in \cite{BW2017} to improve the evacuation time of two robots in the circle. We use this strategy in Subsection \ref{ssec:2robots} as well as in Section~\ref{sec:square}, and show it can be used recursively for a further improvement.

\section{Evacuation of the Equilateral Triangle}
Throughout this section we denote an equilateral triangle by $T$ and its vertices  by $A, B$, and $C$. Thus we sometimes write $ABC$ to refer to $T$. We always assume that the sides of $T$ have length $1$. Throughout the paper we use the following triangle terminology, see Figure~\ref{fig:terminology}:
\begin{itemize}
\item  By $h$ we denote the {\em height} of the equilateral triangle. Observe that $\eh=\sqrt{3}/2$.
\item   We denote by $O$ the {\em centroid} of $T$ (i.e., the intersection point of the three heights of $T$).
\item We use $\ex$, and $\vy$ to denote the distance of $O$ to a vertex, and to the side of the triangle,  respectively; notice that $\ex=2\eh/3=\sqrt{3}/3$ and  $\vy=\eh/3=x/2=\sqrt{3}/6\approx 0.288$. 
\end{itemize}

\begin{figure}[t]
\centering
\includegraphics[width=0.35\textwidth]{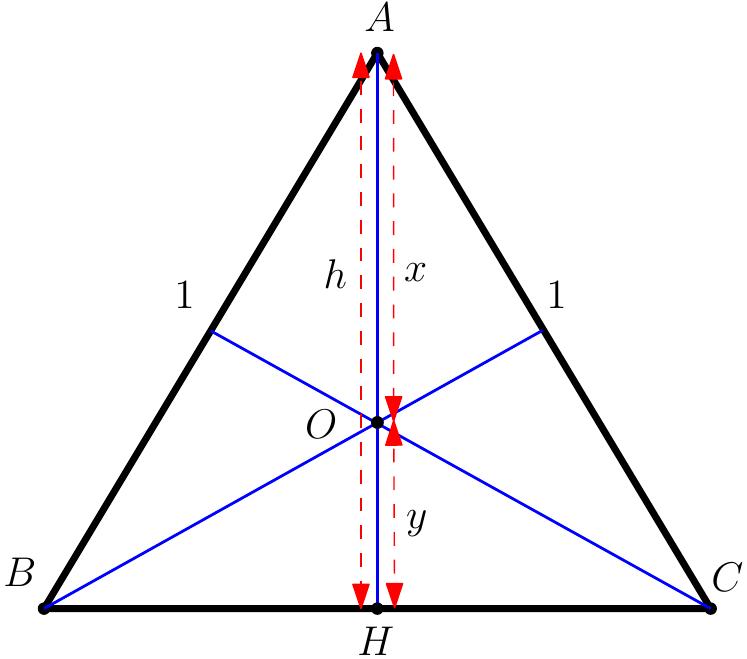}
\caption{Our triangle notation.}
\label{fig:terminology}
\end{figure}

We define $E_{\mathcal  {A}}(T, k)$  to be the worst-case evacuation time of the unit-sided equilateral triangle $T$  by $k$ robots using algorithm $\mathcal{A}$, and we define $E^*(T, k)$  to be the {\em optimal} evacuation time of the triangle  by $k$ robots in the face-to-face model.

\subsection{Lower Bounds}
\label{sec:lowerBounds}
In this subsection, we  prove lower bounds on the evacuation time of the equilateral triangle. We first show that regardless of the number of robots, $\sqrt{3}$ is a lower bound on the evacuation time.  This bound holds even if the exit is {\em a priori}  known to be at one of the three vertices of the triangle. 
\begin{theorem}
\label{th:krobots}
Consider $k$ robots $R_1, R_2,\ldots, R_k$,  initially located at the centroid of an equilateral triangle $T$. In the face-to-face model, the evacuation time of $k$ robots $E^*(T,k) \geq \sqrt{3}\approx 1.732 $.
\end{theorem}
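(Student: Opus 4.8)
The plan is to keep the stated reduction (the exit is a priori known to sit on one of the three vertices $A,B,C$) and then squeeze out the bound $\sqrt{3}=3x$, where $x=|OA|=|OB|=|OC|=\sqrt{3}/3$ is the common centroid-to-vertex distance. The two ingredients I would use are purely elementary geometry of the configuration — each vertex is at distance $x$ from the starting point $O$, the three vertices are seen from $O$ in directions $120^\circ$ apart, and $3x=\sqrt{3}$ — together with an adversary/indistinguishability argument that is forced by the face-to-face model. First I would record the trivial but crucial fact that, since the exit lies on a vertex at distance $x$ from $O$ and every robot starts at $O$ with speed at most one, no robot reaches any vertex, and hence no robot can discover the exit, before time $x$. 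Next I would set up the adversary: because the algorithm is deterministic and a robot learns something only by visiting a vertex or by meeting an already-informed robot, the three scenarios ``exit at $A$'', ``exit at $B$'', ``exit at $C$'' are indistinguishable to every robot until vertices actually start being visited, so the adversary may defer committing and place the exit at whichever vertex $v$ is visited last. This guarantees that the single piece of knowledge ``the exit is at $v$'' is created at the point $v$ at some time $t_0\ge x$, and not earlier anywhere.

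The structural heart of the proof is an information-propagation bound. The knowledge of $v$ emanates from the single space-time point $(v,t_0)$ and can be transported through $T$ at speed at most one (carried by the discoverer or relayed meeting-to-meeting), so the set of robots that know $v$ at time $t$ is contained in the ball of radius $t-t_0$ about $v$. Consider a robot that, at the discovery time $t_0$, is still near $O$, hence at distance about $x$ from $v$: it cannot learn $v$ before roughly $t_0+x$, and it must then still traverse the distance $x$ from the interior out to $v$, so it arrives no earlier than $t_0+2x\ge 3x=\sqrt{3}$. The reason a robot cannot simply head to $v$ ahead of being informed is exactly the indistinguishability above: any vertex a robot commits to before learning the answer can be declared \emph{not} the exit by the adversary (another unvisited vertex is used instead), and since the candidate vertices are $120^\circ$ apart from $O$, no single direction of uninformed motion makes progress toward more than one of them.

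The main obstacle, and the step I expect to require the most care, is making this last argument rigorous \emph{uniformly in $k$}: with many cooperating robots one must rule out that they short-circuit the ``discover $(\ge x)\to$ relay inward $(\ge x)\to$ travel back out $(\ge x)$'' chain, for instance by pre-positioning relays so that information reaches a robot already close to $v$, or by hedged (non-committal) motion that happens to approach $v$. I would control the relay issue with the single-source speed-one containment (all exit-knowledge lies in the growing ball about $v$, independent of how many robots carry it), and control the hedging issue with the adversary's freedom to choose $v$ among the still-unvisited vertices so as to punish whatever direction an uninformed robot has taken; the delicate point is to combine these to force the existence of one robot that is simultaneously uninformed and at distance at least $x$ from the eventual $v$ at the moment its knowledge could first arrive. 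Because the entire argument uses only the speed-one propagation and the three equal distances $|OA|=|OB|=|OC|=x$, it does not refer to the number of robots, which is what makes the bound $\sqrt{3}$ hold for every $k$ and consistent with the near-matching upper bound $\sqrt{3}+3/k$.
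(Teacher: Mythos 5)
Your high-level plan (restrict to vertex exits, let the adversary place the exit at the last-visited vertex, and bound how fast knowledge of the exit can spread at speed one) matches the spirit of the paper's argument, but two of your central claims fail as stated, and the step you yourself flag as delicate is in fact the entire content of the proof. First, the single-source containment is false: knowledge that the exit is at $v$ need not emanate from the space-time point $(v,t_0)$, because a robot can locate the exit \emph{by elimination} after learning that the other two vertices are not exits. For instance, with $k=3$ and one robot sent to each vertex, the robots arriving at $B$ and $C$ at time $x$ can meet at the midpoint of $BC$ at time $x+1/2$ and both then know the exit is at $A$ --- possibly before $A$ has been visited at all --- so the informed robots are not confined to any ball growing out of $(v,t_0)$. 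Second, even where propagation from $v$ is the only source, your arithmetic $t_0+x+x$ presumes a robot that is at distance $x$ from $v$ \emph{at the moment it is first informed}, not at time $t_0$: a robot near $O$ at time $t_0$ that happens to move toward $v$ can be intercepted arbitrarily close to $v$ and evacuate by roughly $t_0+x=2x<\sqrt{3}$. Ruling this out requires showing that uninformed motion toward one candidate exit is punished by another candidate, and that is exactly the step you leave open; moreover, there is no reason a robot ``still near $O$'' at time $t_0$ exists at all (all robots may rush to the boundary immediately).

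The paper closes precisely these gaps with a concrete two-input adversary. Taking $A$ to be the last-visited vertex, it considers the robot $R$ that visits the \emph{second} vertex, say $B$, at time $t$, and shows (via $t\ge t_M+|MB|\ge x+|MA|+|MB|\ge x+1>3x-1$) that if $t<3x-1$ then $R$ cannot have met any robot carrying information about $A$ or $C$ before reaching $B$, so $R$ behaves identically on the inputs ``exit at $A$'' and ``exit at $C$'' up to time $t\ge x$. Then, since at time $2x$ robot $R$ is within distance $2x-t\le x$ of $B$, and no point of the triangle other than $O$ lies within distance $x$ of all three vertices, $R$ must be at distance at least $x$ from $A$ or from $C$; the adversary places the exit at that vertex, forcing evacuation time at least $2x+x=3x=\sqrt{3}$. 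To complete your write-up you would need an argument of this concrete, position-at-a-fixed-time form (with an explicit pair of candidate inputs) rather than the general propagation bound, which by itself only yields an arrival time of about $2x$.
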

\begin{proof}
Consider an arbitrary evacuation algorithm ${\mathcal A}$ for  $k$ robots. We first run the algorithm to see which vertex is the last one visited by the robots (two or even three vertices could be visited at the same time, as the last ones in which case we choose an arbitrary one as last). Assume without loss of generality that $A$ is the last vertex visited by any of the robots; let $I_1$ be the input in which the exit is at $A$. Consider the execution of the algorithm on input $I_1$, and let $t$ be the time the second of the three vertices is visited by some robot $R$. Without loss of generality, let this second vertex be $B$; that is, $R$ visits vertex $B$ at time $t$ on input $I_1$. Let $I_2$ be the input where the exit is at the remaining vertex $C$. We argue that the evacuation time of the algorithm must be $\geq 3x$ on one of these two inputs, where $x$ is the distance of the centroid to a vertex.

If $t\geq 3x-1$, then it takes  additional time 1 for robot $R$ to reach the exit at $A$, leading to a total evacuation time of at least $3x$ on input $I_1$. Therefore, assume that $t<3x-1$. We claim that since $R$ has to reach $B$ before time $3x-1$, it is impossible for $R$ to meet a robot $R'$ that has already visited $A$ or $C$ before $R$ reaches $B$ at time $t$. Suppose $R$ was able to meet $R'$ that had visited $A$ (without loss of generality) at some meeting point  $M$ at time $t_M$. Then clearly $t_M \geq x + |AM|$. After meeting $R'$, the robot $R$ needs time at least $|MB|$ to get to $B$. We conclude that $t \geq t_M + |MB| \geq x + |MB| + |MA| \geq x + 1$. However, $x+1 > 3x-1$, a contradiction. Thus, $R$'s trajectory to $B$, reaching $B$ at time $t < 3x-1$ cannot allow a meeting between $R$ and any robot
that has already visited $A$ or $C$. Therefore, the behaviour of the robot $R$ must be the same on inputs $I_1$ and $I_2$ until time $t$ when $R$ arrives at $B$. Observe now that $t \geq x$. At time $2x$, if the robot $R$ is at distance $>x$ from $A$, the adversary puts the exit at $A$ (input $I_1$), and if it is at distance $> x$ from $C$, it puts the exit at $C$. Combined with the fact that at time $2x$, the robot $R$ can travel at most distance $2x-t\leq x$ from $B$, we have the desired result.
\end{proof}

The above bound is asymptotically optimal, as we will describe in Section~\ref{sec:upperBounds} a simple algorithm  that evacuates $k$ robots in $\sqrt{3} + 3/k$ time from an exit situated anywhere on the boundary. We remark also that  in the wireless communication model, $E^*(T,6) = \frac{2\sqrt{3}}{3}$~ (D.~Krizanc, private communication, 2015), showing that for the equilateral triangle, evacuation even with arbitrarily many robots takes much more time in the face-to-face model, than evacuating only six robots in the  wireless model.

When $k=2$, we are able to prove a better lower bound of $1+2\sqrt{3} \approx 2.15$. The argument used for the lower bound is an adversary argument: depending on what the algorithm does, the adversary places the exit in such a way so as to force at least the claimed evacuation time. The key insight can be summarized as follows: if an algorithm is to do better than the claimed lower bound, either the robots cannot meet in a useful way to shorten the time to reach the exit, or they simply cannot finish the exploration. To this end, we first prove  the following technical lemma. We first need some notation. For the equilateral triangle $ABC$, let $D, E$ and $F$ denote the midpoints of sides $AB, AC$ and $BC$, respectively, and let $\mathcal{S}=\{A,B,C,D,E,F\}$. We say two points in $\mathcal{S}$ have {\em opposite positions} if one point is a vertex of the triangle $T$ and the other point is located on the opposite side of that vertex. For example, the vertex $C$ and a point in $\{A, D, B\}$ have opposite positions.
\begin{lemma}[Meeting Lemma]
\label{lem:noMeetingBeforeT}
Consider a deterministic algorithm $\mathcal{A}$ for evacuating two robots in an equilateral triangle $T$, and let $p_1, p_2\in\mathcal{S}$ have opposite positions. If $\mathcal A$ specifies a trajectory for one of the robots in which it visits $p_1$ at time $t'$ and a trajectory for the other robot in which it visits $p_2$ at time $t$ such that
$|t-t'| <h$, then the two robots cannot meet between time $t$ and $t'$. \end{lemma}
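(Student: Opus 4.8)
The plan is to argue by contradiction, using a single geometric fact together with the speed-one constraint and the triangle inequality. First I would record the geometry: if $p_1$ is a vertex of $T$ and $p_2$ lies on the opposite side, then $|p_1 p_2| \ge h$, since the shortest segment from a vertex to the opposite side is the altitude, whose length is exactly $h = \sqrt{3}/2$ (with equality precisely when $p_2$ is the foot of the altitude, i.e.\ the midpoint of that side). Every ``opposite-position'' pair in $\mathcal{S}$ consists, by definition, of a vertex and a point of the opposing side, so this bound $|p_1 p_2| \ge h$ holds for all such pairs regardless of whether the side point is an endpoint or a midpoint.

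Next I would assume without loss of generality that $t \le t'$ (otherwise swap the two robots, noting that $|t-t'|$ is symmetric), and suppose for contradiction that the two robots meet at some point $M$ at a time $t_M$ with $t \le t_M \le t'$. Let $R$ denote the robot whose trajectory visits $p_2$ at time $t$, and $R'$ the robot whose trajectory visits $p_1$ at time $t'$; since the two points are visited by different robots (as guaranteed by the statement), a meeting at $M$ means both $R$ and $R'$ are simultaneously at $M$ at time $t_M$.

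The core of the argument is then two applications of the speed bound. The robot $R$ travels from $p_2$ at time $t$ to $M$ at time $t_M$, so that $t_M - t \ge |p_2 M|$; the robot $R'$ travels from $M$ at time $t_M$ to $p_1$ at time $t'$, so that $t' - t_M \ge |M p_1|$. Adding these and invoking the triangle inequality and the geometric fact yields
\[
t' - t \;\ge\; |p_2 M| + |M p_1| \;\ge\; |p_1 p_2| \;\ge\; h,
\]
which contradicts the hypothesis $|t - t'| < h$. Hence no meeting can occur between times $t$ and $t'$.

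The argument requires essentially no computation, so there is no substantial obstacle; the only care needed is bookkeeping. One must ensure that $p_1$ and $p_2$ are visited by distinct robots (so the two legs $p_2 \to M$ and $M \to p_1$ genuinely belong to different trajectories), that $t_M$ lies in $[t,t']$ so both legs have nonnegative duration, and that the inequality ``distance travelled $\le$ elapsed time'' is applied to each robot's actual path (which is valid for any trajectory at speed at most one). The genuinely geometric input is just the distance bound $|p_1 p_2| \ge h$, whose verification is immediate from the altitude being the shortest segment from a vertex to the opposite side.
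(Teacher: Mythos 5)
Your proof is correct and follows essentially the same argument as the paper's: both derive a contradiction from the speed-one bound applied to the two legs $p_2 \to M$ and $M \to p_1$, combined with the triangle inequality and the fact that opposite positions force $|p_1 p_2| \geq h$. The only cosmetic difference is the choice of which time is assumed smaller.
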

\begin{proof}
We may assume that $t' < t$. Suppose for a contradiction that the robots meet at time $t_m$  at some point $z$, where  $t'< t_m\leq t$. Since $p_1$ and $p_2$ have opposite positions $|p_1p_2|\geq \eh$. Therefore, $|p_1z|+|zp_2|\geq \eh$. Moreover $|p_1z|\leq t_m-t'$ and $|zp_2|\leq t-t_m$. This implies that
\[
\eh\leq |p_1z|+|zp_2|\leq (t_m-t')+(t-t_m)=t-t'< \eh
\]
which is a contradiction.
\end{proof}

\begin{figure}[t]
\centering
\includegraphics[width=0.85\textwidth]{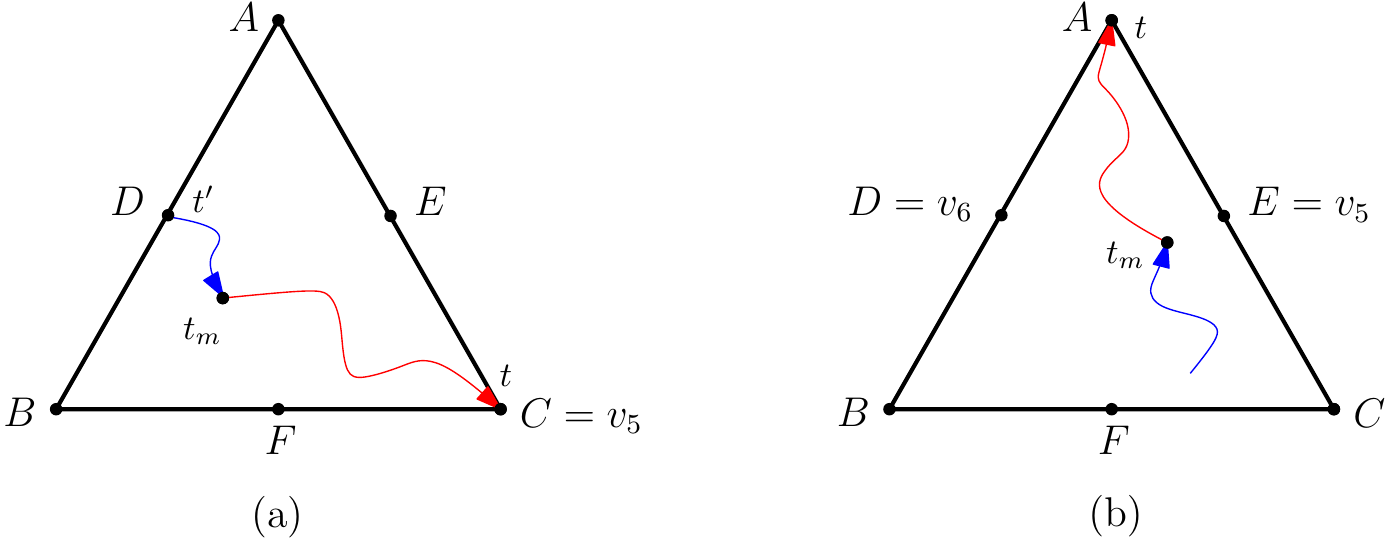}
\caption{(a) An illustration in support of the proof of the Meeting Lemma. (b) An illustration in support of case 2 in the proof of Theorem~\ref{lem:twoRobotsLowerBound}.}
\label{fig:meetingLem}
\end{figure}

\begin{theorem}
\label{lem:twoRobotsLowerBound}
Consider $2$ robots $R_1, R_2$,  initially located at the centroid of an equilateral triangle $T$. If the robots communicate using the face-to-face model, then the evacuation time of the two robots $E^*(T,2) \geq 1+4y = 1+2/\sqrt{3}\approx 2.154 $.
\end{theorem}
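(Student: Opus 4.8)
The plan is to argue by contradiction with an adaptive adversary: assume an algorithm $\mathcal{A}$ achieves worst-case evacuation time $\tau < 1+4y$ and derive an impossibility. The backbone is that, to guarantee evacuation for an arbitrarily placed exit, the two trajectories must together cover all of $\partial T$, which has length $3$; moreover each robot must travel at least $y$ from $O$ before it first touches $\partial T$, so the length of boundary swept by a single robot up to time $t$ is at most $t-y$. First I would use this charging to control the timeline: the combined coverage by time $t$ is at most $2t-2y$, so the boundary cannot be fully explored before time $\tfrac{3}{2}+y$, and, more usefully, at the moment exploration finishes the two robots are necessarily ``committed'' to two far-apart terminal regions of the boundary. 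This already recovers something, but it is weaker than the target, which is why the adversary must also exploit the communication constraint rather than just counting coverage.

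The core of the argument is an adversary that maintains two candidate exit points $p_1,p_2\in\mathcal{S}$ in opposite positions (a vertex together with the midpoint of the opposite side, or two vertices), so that $|p_1p_2|\ge h$. I would run $\mathcal{A}$, observe which robot visits $p_1$ and which visits $p_2$, and at what times $t'$ and $t$, and then plant the true exit at whichever candidate is worse for the algorithm. The decisive tool is the Meeting Lemma (Lemma~\ref{lem:noMeetingBeforeT}): whenever the two candidates are visited within time $h$ of each other, the robots cannot meet in the intervening window, so the robot that did \emph{not} visit the chosen exit cannot have learned its location and is forced to reach it along its own uninterrupted trajectory. This is the ``cannot meet usefully'' horn of the dichotomy flagged in the introduction to the section; the other horn is that, if the algorithm instead spreads the two visits far apart in time (so that a meeting \emph{is} possible), the exploration itself is delayed. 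Either way the lagging robot must, after its last commitment, physically cross a full side (the additive $1$) to reach the adversarial exit vertex, a cost that communication cannot shortcut.

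Finally I would assemble the timing estimate, combining the lower bound on the time at which the critical opposite-position configuration is reached (from the covering/charging argument, which forces travel on the order of $4y$ to position the robots after both opposite regions are explored) with the unavoidable extra side-length traversal by the lagging robot, to contradict $\tau<1+4y$. The main obstacle, and where essentially all of the real work lies, is the case analysis: one must split on which robot is responsible for which candidate, verify in each branch that the two visits occur close enough in time for the hypothesis $|t-t'|<h$ of the Meeting Lemma to hold (choosing the pair $p_1,p_2$ so that both the opposite-position condition and the distance bookkeeping are simultaneously valid), and separately dispose of the degenerate possibility that one robot performs nearly all of the exploration while the other waits near the eventual exit. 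Pinning the constant to exactly $1+4y$ rather than the weaker $3x=\sqrt{3}$ of Theorem~\ref{th:krobots} is precisely what forces the use of the midpoints $D,E,F$ and full-side exploration, instead of reasoning about the three vertices alone.
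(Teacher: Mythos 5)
Your high-level plan does match the skeleton of the paper's proof: restrict attention to the six points $\mathcal{S}=\{A,B,C,D,E,F\}$, let the adversary place the exit at the last of these to be visited, and invoke the Meeting Lemma to show the lagging robot cannot learn the exit's location in the relevant time window. But the proposal stops exactly where the real proof begins. The paper's argument pins down the timeline by showing that on the input where the exit is at $v_6$, the fifth point is visited at a time $t_5$ with $1+y\le t_5<0.5+4y$ (the lower bound because one robot must visit at least three points of $\mathcal{S}$, the upper bound because otherwise the exit, at distance at least $0.5$ from the fifth point, already forces time $1+4y$), and then performs a three-way case split on whether $v_5$ is a vertex or a midpoint and where $v_6$ sits. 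The hardest case ($v_5$ a midpoint, $v_6$ a vertex) alone requires five auxiliary lemmas, each ruling out a specific assignment of $\{B,C,D,F\}$ to the two robots by exhibiting concrete witnesses: an unvisited point $P$ on a segment that neither robot can reach before time $1+4y-|AP|$, or a hypothetical interior meeting point $M$ whose two legs $a+b$ would have to sum to less than the triangle's height. You explicitly defer this case analysis as ``the main obstacle,'' but it is not an obstacle to be flagged --- it \emph{is} the proof; without it there is no argument that every algorithm lands in one of the bad configurations.

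Two further points in the sketch are off. First, your accounting of the constant is not the decomposition that closes: you attribute the additive $1$ to the lagging robot ``crossing a full side,'' whereas in the paper the final leg costs only $h=3y$ (the distance between opposite positions), and the $1+y$ comes from the earlier fact that one robot must visit three points of $\mathcal{S}$ before the critical time, so that $t_5+h\ge(1+y)+3y=1+4y$. Second, the coverage/charging bound you lead with ($2t-2y\ge 3$, forcing $t\ge 3/2+y\approx 1.79$) is genuinely weaker than $1+4y\approx 2.15$ and plays no role in the actual argument; the timeline is controlled by counting visits to the six special points, not by measuring swept boundary length.
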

\begin{proof}
Suppose for the purpose of contradiction,  that there is an algorithm $\mathcal A$ for evacuation by two robots, such that $E_{\mathcal A}(T, 2) < 1 + 4y$. We first focus attention on the set of points ${\cal S} = \{A, B, C, D, E, F \}$. There exists some input $I$ on which the exit is the last point in $\cal{S}$ to be visited by either of the robots, according to the trajectories specified by $\mathcal{A}$. Let $t_5$ be the time the {\em fifth} point of $\cal{S}$ is visited by a robot on input $I$.  Let $v_1,v_2,\dots,v_6$ be the order in which the points in $M$ are visited by the robots, on input $I$; the exit is at $v_6$. Without loss of generality assume that $v_5$ is visited by robot $R_1$. Thus, $v_6$ is not yet visited before  time $t_5$; it may be visited at or after time $t_5$. First, note that since at least five  points are visited at or before time $t_5$, one of the robots must have visited at least three points in $M$. It follows that $t_5\geq 1+\vy$. If $t_5\geq 0.5+4\vy$, since the exit is at $v_6$, which is at least 0.5 away from $R_1$, we obtain $E_{\mathcal A}(T,2) \geq 1+ 4\vy$, a contradiction. We conclude that $1+\vy\leq t_5<0.5+4y$.

We now consider the following exhaustive cases depending on whether $v_5$ is a vertex of $T$ or a midpoint of a side of $T$.

\noindent\emph{Case 1. $v_5$ is a vertex of $T$.} Without loss of generality assume that $v_5$ is $C$. See Figure~\ref{fig:meetingLem}(a). If $v_6$ is any of $A, D, B$, then at time $t_5$, robot $R_1$ needs time at least $\eh$ to arrive to $v_6$, which implies that  $E^*(T,2)\geq t_5+\eh\geq 1+4\vy$, a contradiction. So we conclude that  $v_6$ is at either $E$ or $F$. Since $t_5<0.5+4\vy$, robot $R_1$ could have visited at most one of $A, D, B$ by time $t_5$.
This means that $R_2$ must have visited at least two of $A, D, B$.   Let $v$ be the second vertex of the set $A, D, B$ to be visited by $R_2$, and assume it arrives there at time $t'$.  Note that $0.5+4y>t'\geq 0.5+\vy$. By the Meeting Lemma, the two robots do not meet at any time between $t'$ and $t_5$ on input $I$. 

Now consider an input $I'$ in which the exit is at $v$. Clearly the robots behave identically on both inputs $I$ and $I'$ until time $t'$. After this time, $R_2$ on seeing the exit at $v$ may behave differently; however robot $R_1$ must behave exactly as in $I$ unless it meets robot $R_2$, which by the Meeting Lemma, cannot happen before time $t_5$. Therefore, after time $t_5$, it takes at least an additional $h$ to reach the exit at $v$, giving a total evacuation time of at least $t_5 + h \geq 1 +4y$, a contradiction.

\noindent\emph{Case 2. $v_5$ is a midpoint of a side of $T$, and $v_6$ is another midpoint: } Without loss of generality assume that $v_5$ is $E$, and $v_6$ is $D$; see Figure~\ref{fig:meetingLem}(b). Then, all three vertices must have been visited before or at time $t_5$. Since $R_1$ cannot visit two vertices {\em before} arriving at $E$ at time $t_5 < 0.5+ h + y=0.5+4y $, we conclude that $R_2$ must  visit two vertices by time $t_5$. Referring to Figure~\ref{fig:meetingLem}(b), consider the second vertex visited by $R_2$. Observe that $R_2$ cannot arrive there before time $1+x$. (i) If it is $B$, then we put the exit at $E$. This way, $R_2$ needs time at least $\eh$ to get to $E$ from $B$ and so $E^*(T,2)\geq 1 + x + \eh= 1+ 5\vy$. (ii) If it is $C$, then we put the exit at $D$. Then, $R_2$ needs time at least $\eh$ to get to $D$ from $C$ and so $E^*(T,2)\geq 1+\ex+\eh= 1+ 5\vy$.

We conclude that the second vertex visited by $R_2$ must be $A$. We first note that $R_2$ cannot have visited all of $F, B, A$ or all of $F, C, A$ by time $t_5$, as visiting either set of three points takes time at least $0.5 + x + h > 0.5 + y + h > t_5$. So, $R_1$ must have visited $F$ and $C$ (or $F$ and $B$) before coming to $E$. Let $P$ be the second of the two points visited by $R_1$ before going to $E$. Note that $P$ could be $F$, $C$, or $B$, and suppose $R_1$ visits $P$ at time $t'$.  Clearly, $t'\geq 0.5+\vy$. Since $R_2$ must visit $A$ at or before time $t_5 < 0.5 + 4y$, by the Meeting Lemma, the robots cannot meet at any time  $t'\leq t_m\leq t_5$.  In other words, $R_1$ cannot meet $R_2$ after the latter has visited $P$ and reach $A$ on time. Now consider the input $I'$ in which the  exit is at $P$.
On input $I'$, the robot $R_2$ will have the same behavior as on input $I$ until it reaches $A$ at time $t>1+x$ and then needs to get to the exit (which is at  $F$ or $C$ or $B$). Therefore we have $E^*(T,2)\geq 1+x+\eh= 1+ 5\vy$, a contradiction.

\noindent\emph{Case 3. $v_5$ is a midpoint of a side of $T$, and $v_6$ is a vertex: }  Without loss of generality assume that $v_5$ is $E$. In what follows, we use $t_p$ to denote  the time a point $p$ was visited for the first time by a robot. Then clearly $t_E =t_5$. If $v_6$ is $B$, then $R_1$ needs time $\geq h$ to reach the exit, so on input $I$, the evacuation time is at least $t_5+h \geq 1+h+y$. Therefore, assume without loss of generality that $v_6$ is $A$; see Figure~\ref{fig:meetingLem}(b).
If a single robot visits both $B$ and $C$, it takes time at least $2+x > 1+4y$ to reach vertex $A$. Therefore, $B$ and $C$ must be visited by different  robots. We consider separately the two cases: $R_1$ visits $C$ and $R_2$ visits $B$;  and $R_1$ visits $B$ and $R_2$ visits $C$. 
 
Suppose $R_1$ visits $C$ before visiting $E$, and $R_2$ visits $B$. First observe that $R_1$ cannot also visit $D$, as visiting $C$, $D$, and $E$ takes time at least $0.5+4y$, a contradiction to $t_E < 0.5+4y$. Therefore $R_2$ must visit $D$ in addition to $B$. Either $R_1$ or $R_2$ must visit $F$. If $R_2$ visits $F$, Lemma~\ref{R1-CE} assures that $E_{\mathcal{A}}(T,2) \geq 1+4y$ and if $R_2$ does not visit $F$, Lemma~\ref{R1-CFE} does the same.  

Suppose instead that $R_1$ visits $B$ before visiting $E$ and $R_2$ visits $C$. Then $R_1$ cannot visit both $D$ and $F$, as visiting $D, B, F, E$ takes time at least $1.5+y > 0.5 + 4y$, Lemmas~\ref{R1-BFE}, ~\ref{R1-BDE}, and ~\ref{R1-BE} now assure that $E_{\mathcal{A}}(T,2) \geq 1+4y$ for the cases when $R_1$, in addition to visiting $B$ and $E$,  visits $F$, visits $D$, and visits neither, respectively. 
\end{proof}

For all the lemmas below, we assume that according to algorithm $\mathcal{A}$, we have $v_5 = E$ and $v_6 = A$, and robot $R_1$ visits $E$ at time $1+y \leq t_E < 0.5+4y$. We start with a simple observation that is used repeatedly.

\begin{observation}
\label{two-points}
Let $p$ be a point on the boundary. If at time $1+4y - |Ap|$, both $A$ and $p$ are unvisited then $E_{\mathcal{A}}(T,2) \geq 1+4y$.
\end{observation}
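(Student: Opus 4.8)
The plan is to run an adversary argument on the two candidate exit locations $A$ and $p$. Write $t_0 = 1+4y-|Ap|$, and let $I_A$ and $I_p$ denote the inputs in which the exit is at $A$ and at $p$, respectively. The first thing I would establish is an indistinguishability claim: as long as neither $A$ nor $p$ has been visited, no robot has found the exit under either input, so by the rules of the model (a robot deviates from its fixed trajectory only upon seeing the exit or meeting a robot that has found it) both robots follow exactly the same, input-independent trajectories. Since by hypothesis $A$ and $p$ are still unvisited at time $t_0$, the entire behaviour on $[0,t_0]$ — in particular the two robot positions at time $t_0$ — is common to $I_A$ and $I_p$.

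The key maneuver is then to let the adversary commit the exit to the point of $\{A,p\}$ that is reached \emph{second}. Let $t_A$ and $t_p$ be the first times that $A$, respectively $p$, is visited along these common trajectories; both are at least $t_0$ by hypothesis, and both are finite since a correct algorithm must eventually visit every boundary point. Assume without loss of generality that $t_A \le t_p$, let $R$ be a robot located at $A$ at time $t_A$, and place the exit at $p$. On input $I_p$ the exit is not discovered before time $t_p \ge t_A$, so every robot, and in particular $R$, is still following the common trajectory up to time $t_A$; hence $R$ is at $A$ at time $t_A$. Because $R$ must itself reach the exit $p$ for the team to evacuate, and $R$ sits at distance $|Ap|$ from $p$ at time $t_A$, the speed-one bound forces $R$ to arrive at $p$ no earlier than $t_A+|Ap| \ge t_0+|Ap| = 1+4y$. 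Thus the evacuation time on $I_p$ is at least $1+4y$, giving $E_{\mathcal A}(T,2)\ge 1+4y$; the case $t_p<t_A$ is symmetric, placing the exit at $A$ and stranding the robot that visited $p$.

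The step I expect to be the real crux is finding the right argument rather than executing it. The obvious attempt — observe that on each input both robots must physically reach the exit, so the evacuation time is at least $t_0$ plus the larger of the two robots' distances to that exit — is too weak: since each position $q_i$ only satisfies $|q_iA|+|q_ip|\ge|Ap|$ by the triangle inequality, this yields a bound of roughly $t_0+|Ap|/2$, not $t_0+|Ap|$. The extra leverage comes instead from the \emph{asymmetry} of stranding the robot that reaches the first of $A,p$: that robot is a full $|Ap|$ away from the other candidate, and the time budget $|Ap|$ remaining after $t_0$ is exactly what makes this displacement binding. Beyond that, the only care needed is the model bookkeeping: confirming that passing through the non-exit point triggers no deviation (so the prefix up to $t_A$ stays input-independent), that $t_A,t_p$ are finite for any correct algorithm, and that the tie $t_A=t_p$ — necessarily realized by two different robots at distinct points — is covered by the same distance estimate.
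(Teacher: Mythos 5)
Your proof is correct and follows essentially the same route as the paper's one-line argument: place the exit at whichever of $A$, $p$ is visited second, and observe that the robot standing at the first-visited point at time $\ge 1+4y-|Ap|$ still has to travel $|Ap|$ to reach the exit. You merely make explicit the indistinguishability of the two inputs and the fact that \emph{all} robots must reach the exit, both of which the paper leaves implicit.
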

\begin{proof}
Put the exit at whichever of the two points is visited later. Since at time $1+4y - |Ap|$, neither is visited, the time to evacuate is at least $1+4y-|Ap|+|Ap| = 1+4y$.
\end{proof}

\begin{lemma}
 If $R_2$ visits $B, D$ and $F$, and $R_1$ visits $C$ and $E$, then  $E_{\mathcal{A}}(T,2) \geq 1 + 4y$. 
\label{R1-CE}
\end{lemma}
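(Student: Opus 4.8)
The plan is an adversary argument on top of the trajectory structure already fixed for this lemma: on the reference input $I$ robot $R_1$ reaches $E=v_5$ at a time $t_E\in[1+y,\tfrac12+4y)$ (visiting $C$ before $E$), $R_2$ visits $B,D,F$ all before $t_E$ (as $E=v_5$), and the exit sits at $A=v_6$, which is not discovered before some $t_A>t_E$. The only arithmetic I need is $h=3y$ and $x=2y$, so that $1+4y=(1+y)+h=1+2x$. The guiding principle is: produce an input on which some robot is provably still following its input-$I$ trajectory (hence uninformed of the relocated exit) at a moment $\ge 1+y$ while standing at distance $\ge h$ from the exit; reaching the exit then costs at least $(1+y)+h=1+4y$. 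Let $\tau_2\le t_E$ be the time $R_2$ reaches the last of $B,D,F$; since visiting three points of $\mathcal S$ costs at least $1+y$ (as in Theorem~\ref{lem:twoRobotsLowerBound}), $\tau_2\ge 1+y$. Write $w$ for that last point.

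I would first dispatch $w\in\{B,F\}$ by leaving the exit at $A$ and applying Observation~\ref{two-points} with $p=w$: since $|AB|=1$ and $|AF|=h$ we have $|Aw|\ge h$, so $1+4y-|Aw|\le 1+y\le\tau_2$; thus at time $1+4y-|Aw|$ neither $w$ (first visited at $\tau_2$) nor $A$ (first visited after $t_E$) has been reached, and Observation~\ref{two-points} gives the bound. For $w=D$ this move fails, because $|AD|=\tfrac12<h$, so I relocate the exit to $C$, which is opposite $D$ and hence at distance $|DC|=h$. On this input $R_1$ discovers the exit only on first reaching $C$, at some $t_C<t_E$, and $R_2$ can learn of it only by meeting $R_1$. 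If $t_C\ge t_D$, or if $t_C<t_D$ with $t_D-t_C<h$ — in which case the opposite pair $C,D$ triggers the Meeting Lemma (Lemma~\ref{lem:noMeetingBeforeT}) and forbids a meeting in $[t_C,t_D]$ — then $R_2$ is still at $D$, uninformed, at time $t_D=\tau_2$, and needs a further $h$ to reach $C$, so $E_{\mathcal A}(T,2)\ge t_D+h\ge 1+4y$.

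The residual, and genuinely delicate, case is $w=D$ together with $t_C\le t_D-h$, where the Meeting Lemma on $(C,D)$ no longer applies; note this regime forces $t_D\ge t_C+h\ge x+h$. I would split on whether $R_2$ traverses the sub-segment $BD$ of side $AB$ before reaching $D$. If it does not (it reaches $D$ through the interior), then neither robot can visit the interior of $BD$ before $t_E$: $R_2$'s route touches this segment only at its endpoints, and the slack in $R_1$'s schedule — controlled by $t_E<\tfrac12+4y$ and the fact that $C$ is at distance $h$ from all of $BD$ — is too small for $R_1$ to detour across $T$ and cover it (before $C$, between $C$ and $E$, or soon after $E$). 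Placing the exit at a point $m$ in the interior of $BD$, chosen close enough to $B$ that $|Em|\ge 1+4y-t_E$, then strands $R_1$ at $E$ at time $t_E$, so $E_{\mathcal A}(T,2)\ge t_E+|Em|\ge 1+4y$. If instead $R_2$ does traverse $BD$, I keep the exit at $C$ and use the opposite pair $(C,B)$ in the Meeting Lemma to exclude a meeting up to the time $R_2$ leaves $B$; thereafter $R_2$ lies on $BD$, every point of which is at distance $\ge h$ from $C$, while $R_1$ (knowing the exit at $C$ since $t_C$) cannot reach $BD$ before $t_C+h$. Hence any information transfer happens at distance $\ge h$ from $C$ and no earlier than $t_C+h$, giving at least $(t_C+h)+h\ge x+2h>1+4y$, and $t_D+h$ if no transfer occurs.

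I expect the main obstacle to be exactly this residual analysis, and within it the coverage-and-slack bookkeeping of the first sub-case: it is the single place where the six landmarks of $\mathcal S$ do not suffice and one must place the exit at an ordinary boundary point of side $AB$ and argue, from the whole-boundary exploration requirement together with the tight window $t_E<\tfrac12+4y$, that this point is discovered only after $t_E$ (so that $R_1$ is genuinely stranded at $E$). Quantifying ``$R_1$ has too little slack to cover $BD$'' and ``$R_2$ cannot revisit $BD$ early'' against the distances $|CD|=|EB|=h$, $|Em|$, and $t_E-t_D<\tfrac12-y$ is where the real work lies.
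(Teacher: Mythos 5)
Your reduction to the case $w=D$ is sound and in places cleaner than the paper's own case split: dispatching $w\in\{B,F\}$ directly through Observation~\ref{two-points}, and the sub-case $t_D-t_C<h$ (or $t_C\ge t_D$) with the exit at $C$ via the Meeting-Lemma mechanism on the opposite pair $(C,D)$, are both correct. (The paper instead splits on whether $B$ is visited \emph{first} among $B,D,F$, kills ``not first'' with the exit at $B$ and the pair $(B,E)$, kills the order $B,D,F$ with the exit at $F$, and is left only with the order $B,F,D$.) The problem is exactly where you predicted it: the residual case $t_C\le t_D-h$. Your dichotomy ``$R_2$ traverses $BD$ before reaching $D$'' versus ``$R_2$ touches $BD$ only at its endpoints'' is not exhaustive, and the excluded middle is realizable. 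For instance, with the order $B,F,D$ the slack $t_D-(2y+1)<2y-\tfrac12\approx 0.077$ still permits $R_2$ to sweep an initial portion of $BD$ of length roughly $0.12$ adjacent to $B$ before heading to $F$ (the extra cost of $O\to B\to P'\to F\to D$ over the direct route is $|BP'|+|P'F|-\tfrac12$, which is small for $P'$ near $B$). Against such a trajectory your sub-case (a) fails --- the point $m$ ``close enough to $B$'' is visited early by $R_2$, which can then rendezvous with $R_1$ well before time $1+4y$ --- and your sub-case (b) does not apply, since $R_2$ is not confined to $B$ and then to $BD$; it spends a long stretch on $BF\subset BC$, within distance $\tfrac12$ of $C$, which is precisely why ``any information transfer happens at distance $\ge h$ from $C$'' cannot be asserted without further argument.

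The paper's missing idea is to let the adversarial point adapt to $R_2$'s actual coverage: define $P$ as the closest point to $B$ on $BD$ \emph{not} visited by $R_2$ before it reaches $F$. Having covered $BP$ first forces $t_F\ge |OP|+|PB|+|BF|$, so $R_2$ cannot return to $P$ before $|OP|+|PB|+|BF|+|FP|$, which one checks exceeds $1+4y-|AP|$ for every position of $P$ on $BD$; and $R_1$ cannot reach any point of $BD$ at all within its budget (visiting it before $C$ or between $C$ and $E$ forces $t_E\ge \tfrac12+4y$, using that $|OQ|+|QC|$ and $|CQ|+|QE|$ are minimized over $Q\in BD$ at $Q=D$). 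Observation~\ref{two-points} applied to the pair $(A,P)$ then closes the case uniformly, with no need to distinguish full, partial, or empty coverage of $BD$, and no need to reason about where a post-$t_C$ meeting could occur. I would recommend replacing your residual analysis with this construction (or proving an equivalent ``$R_2$ cannot both cover $BD$ near $B$ and return to its uncovered remainder in time'' statement); as written, the residual case is a genuine gap, not merely unfinished bookkeeping.
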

\begin{proof}
First, observe that if $B$ is not visited {\em first } of the three points $B, D, F$, then $t_B \geq 0.5+y$. Then, since $E$ is visited by $R_1$ and $0.5 +4y >t_E \geq 1+y$, by the Meeting Lemma, $R_1$ and $R_2$ cannot meet between $t_B$ and $t_E$. Thus if the exit is at $B$, it will take $R_1$ time at least $t_E + h \geq 1+4y$ to reach there. We conclude that $B$ must be visited first. If $R_2$ visits $B$, $D$, and $F$ in that order, then $t_F \geq 1+x$, so by Observation~\ref{two-points}, we have $E_{\mathcal A}(T,2) \geq 1+4y$. So, $R_2$ must visit $B, F$ and $D$ in this order. 

Let $P$ be the closest point from $B$ on the $BD$ line segment that is not visited by $R_2$ before it visits $F$. Then the time for $R_2$ to reach $F$ is at least $|OP| + |PB| + |BF|$ (Figure \ref{fig:R1-CE}, the blue trajectory). Therefore, the earliest time $R_2$ can reach $P$ is $|OP| + |PB| + |BF| + |FP|$. It can be verified that for any point $M$ on the $BD$ line segment, this time is more than $1+4y-|AM|$, therefore it is true for the point $P$ defined above.
Also, $R_1$ cannot visit $P$ on time: if it visits $P$ before $C$ (Figure \ref{fig:R1-CE}, the green trajectory), we have $t_E \geq |OP| + |PC| + |CE| \geq |OD| + |DC| + |CE| = 0.5+4y$, and if it visits $C$ before $P$ (Figure \ref{fig:R1-CE}, the red trajectory), we have $t_E \geq |OC| + |CP| + |PE| \geq |OC| + |CD| + |DE| = 2y + 3y + 0.5$. Thus neither robot can visit $P$ before time $1+4y - |AP|$. The lemma now follows from Observation~\ref{two-points}.
\end{proof}

\begin{figure}[t]
\centering
\includegraphics[width=0.35\textwidth]{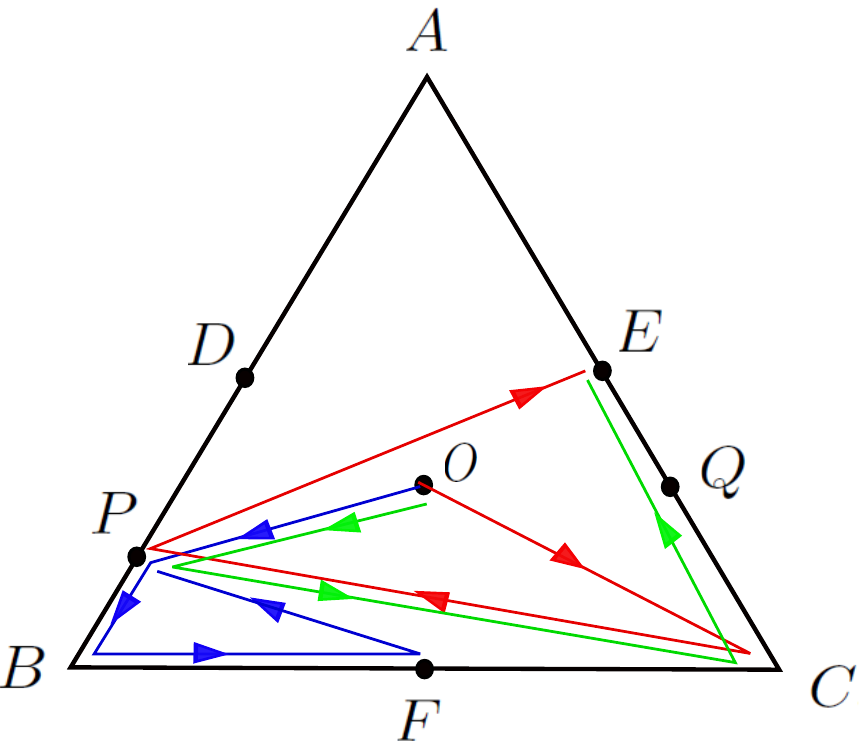}
\caption{An illustration of possible trajectories of $R_1$ and $R_2$, in support of Lemma~\ref{R1-CE}.}
\label{fig:R1-CE}
\end{figure}

\begin{lemma}
 If $R_2$ visits $B$ and $D$ and  $R_1$ visits $C, F,$ and $E$, then  $E_{\mathcal{A}}(T,2) \geq 1 + 4y$. 
\label{R1-CFE}
\end{lemma}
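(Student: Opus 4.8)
The plan is to reproduce, for this slightly different split, the two‑pronged template used in Lemma~\ref{R1-CE}: the Meeting Lemma (Lemma~\ref{lem:noMeetingBeforeT}) disposes of the scenarios in which the exit can be forced onto a vertex opposite a point $R_1$ has just visited, and Observation~\ref{two-points} disposes of the scenarios in which a stretch of boundary near $A$ is left unexplored too long. First I would fix the admissible data: since $R_1$ visits $C$ before $E$ and also visits $F$, its order is $C,F,E$ or $F,C,E$, forcing $t_E\ge 1+x$ or $t_E\ge 1+y$; and $R_2$'s order is $B,D$ or $D,B$, forcing $t_B\ge x$ or $t_B\ge y+\tfrac12$. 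I also record the opposite‑pair lengths $|BE|=|CD|=|AF|=h=3y$, the edge‑halves $|EA|=|DA|=|CE|=|FC|=\tfrac12$, and the fact that $E$ lies at distance $\tfrac{\sqrt3}{4}$ from each of the sides $AB$ and $BC$; these feed every estimate.

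The first prong is the exit‑at‑$B$ argument. If $t_B>0.5+y$, then $t_E-t_B<(0.5+4y)-(0.5+y)=3y=h$, so by the Meeting Lemma $R_1$ and $R_2$ cannot meet on $[t_B,t_E]$; placing the exit at the vertex $B$, which is opposite $E$, keeps $R_1$ uninformed through $t_E$, so $R_1$ sits at $E$ at time $t_E$ and still owes $|EB|=h$, giving $E_{\mathcal A}(T,2)\ge t_E+h\ge 1+4y$. Hence I may assume $t_B\le 0.5+y$. This is the crucial reduction: it forces $R_2$ to reach the lower‑left corner quickly, in particular without first ascending side $AB$ toward $A$, so that by the time $R_2$ arrives there it has swept (essentially) only the segment $BD$ of side $AB$, leaving a central portion of $BF$ and all of $DA$ untouched.

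With $R_2$ thus pinned, I would run Observation~\ref{two-points}. Since $R_1$'s tour of $C,F,E$ touches side $BC$ only along $FC$ and side $AC$ only along $CE$, and $R_2$ has swept only $BD$, the segment $BF$ (of side $BC$) and the segment $DA$ (of side $AB$) are the natural unexplored candidates. I would take $P$ to be whichever of a central point of $BF$ or a near‑$D$ point of $DA$ the trajectories fail to reach before its deadline $1+4y-|AP|$, defined relative to the actual trajectories exactly as the point on $BD$ was defined in Lemma~\ref{R1-CE}; a short path‑length computation then shows $R_1$ can reach either only by inflating $t_E$ beyond $0.5+4y$ (a contradiction) or, after $t_E$, at an added cost of at least $\tfrac{\sqrt3}{4}$ or $\tfrac12$ that overruns the deadline, while for $P\neq F$ on $BF$ one has $|AP|>h$, so $A$ is automatically still unvisited. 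If instead $R_2$ is efficient enough to sweep both candidates in time, then it has committed its whole budget to sides $AB$ and $BC$ and cannot have gone near side $AC$; I would then place the exit at $E$ (which $R_1$ visited at $t_E$) and use the Meeting Lemma on an opposite pair to deny $R_2$ any redirection, so that $R_2$ reaches $E$ only after a detour costing at least the $\tfrac{\sqrt3}{4}$ separation of side $AB$ from $E$, again forcing evacuation time at least $1+4y$.

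The step I expect to be the main obstacle is precisely this last dichotomy. Because $R_2$ carries only two of the five points, it has enough speed to threaten $BD$, $BF$ and $DA$ all at once, so neither the ``unexplored‑segment'' branch nor the ``spread‑out'' branch is automatic; the delicate part is a uniform travel‑budget accounting showing that $R_2$ cannot simultaneously account for its mandated visits to $B$ and $D$, sweep a central point of $BF$, and still reach the relevant point of $DA$ (or $E$) before its deadline — the three required distances exceeding what $R_2$ can cover by time $1+4y-|AP|$. Dovetailing this with the Meeting Lemma in the narrow regime where $R_2$'s efficiency would otherwise close the exploration gap, and checking that the hypothesis $|t-t'|<h$ actually holds for the opposite pair used there, is where the real work lies; verifying these competing distance inequalities uniformly over the finitely many admissible orders of the two robots is the heart of the proof.
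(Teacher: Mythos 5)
Your first prong matches the paper: the exit-at-$B$ threat combined with the Meeting Lemma forces $t_B$ to be small, i.e.\ forces $R_2$ to reach $B$ before $D$. But from there the proposal diverges from what actually closes the case, and the part you defer as ``where the real work lies'' is precisely the content of the lemma. Two concrete gaps. First, you list the two possible orders $C,F,E$ and $F,C,E$ for $R_1$ but never eliminate either; the paper kills the order $C,F,E$ by exhibiting (as in Lemma~\ref{R1-CE}) a point $P'$ on the segment $CE$ that neither robot can reach by time $1+4y-|AP'|$, and is thereby reduced to the order $F,C,E$. Second, and more seriously, in the remaining configuration the paper's contradiction is \emph{not} an unexplored-segment argument on $BF$ or $DA$, nor an exit at $E$: it shows that the whole of $BC$ must be, and can be, covered in time, split at a point $p$ with $R_2$ sweeping $Bp$ and $R_1$ sweeping $pC$ (giving $t_B\ge\sqrt{(0.5-x)^2+y^2}+x$ and $t_C\ge\sqrt{(0.5-x)^2+y^2}+1-x$ with $x=|Bp|$). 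The adversary then puts the exit at $C$; since $R_2$ must still reach $A$, robot $R_1$ would have to intercept $R_2$ at an interior point $M$, and writing $a=|qM|$ (with $q$ the last point of $AB$ that $R_2$ visits before leaving for $M$) and $b=|MC|$, the deadline constraints force $a+b<4y-\sqrt{(0.5-x)^2+y^2}\le 3y$, contradicting $a+b\ge h=3y$ because any path from side $AB$ to $C$ has length at least the height. Your candidate exits ($E$, a central point of $BF$, a near-$D$ point of $DA$) are all reachable within their deadlines in the dangerous trajectories (e.g.\ $R_2$ can be within $\sqrt3/4$ of $E$ well before time $1+4y-$anything), so Observation~\ref{two-points} alone cannot finish; without the interior-meeting-point computation the case where the robots split $BC$ and $R_2$ heads for $A$ remains open.
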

\begin{proof} First observe that in this case $t_E \geq1+y$. If $R_1$ and $R_2$ do not meet between $t_B$ and $t_E$, and if the exit is at $B$, then $R_1$ needs time at least $t_E + h \geq 1+4y$ to reach the exit at $B$, and  $E_{\mathcal{A}}(T,2) \geq 1+4y$. We conclude that $R_1$ and $R_2$ must meet between $t_B$ and $t_E$.
However, if $R_2$ visits $D$ before $B$, then $t_B \geq 0.5+y$. Since $t_E <0.5+4y$, by the Meeting Lemma, $R_1$ and $R_2$ cannot meet between $t_B$ and $t_E$. Therefore, $R_2$ must visit $B$ before $D$. 

Now consider $R_1$'s trajectory, and  assume that $R_1$  visits $C, F$, and $E$ in that order. Using a similar argument as in Lemma~\ref{R1-CE}, we can see that there exists an unvisited point $P'$ on the $CE$ segment at time $1+4y - |AP'|$. It follows from Observation~\ref{two-points} that $E_{\mathcal{A}}(T,2) \geq 1+4y$. 

We conclude that  $R_1$  visits $F$ before $C$.  Notice that $t_C \geq 0.5+y$, and $t_D \leq t_E < 0.5+4y$, so by the Meeting Lemma, $R_1$ and $R_2$ cannot meet between $t_C$ and $t_D$. Consider now the possibility of a meeting after $t_C$ and after $t_D$, as in Figure~\ref{fig:l6}. Clearly, at this time all of segment $BC$ must have been visited; if a point $p$ on $BC$ is unvisited at time $t_D$, then neither $A$ nor $p$ can be visited before time $t_D + \sqrt{3}/4$, which by Observation 4 gives a contradiction. Thus there must exist a point $p$ on $BC$ such that robot $R_1$ explored the segment $pC$ and $R_2$ explored the segment $Bp$. Let $x=|Bp|$. 
Thus $t_B \geq \sqrt{(0.5-x)^2+y^2}+x$, and  $t_C \geq \sqrt{(0.5-x)^2+y^2}+1-x$.

Suppose the exit is at $C$. If $R_2$ proceeds to $A$ without meeting $R_1$, then clearly, $R_2$ cannot reach the exit at $C$ before time $t_b + 2 > 1+4y$. So $R_1$ must be able to meet $R_2$ before the latter can reach $A$. Then note that $R_1$ cannot intercept $R_2$ {\em on} the line segment $AB$, since the time to go there and return to $C$ is at least  $0.5+y+6y\geq 2.5$. So $R_1$ must meet $R_2$  at some point $M$ in the interior of the triangle. Thus, at some point  on the segment $AB$ robot $R_2$ must leave the segment $AB$ to travel to $M$; let $q$ be the last point on the $AB$ segment visited by $R_1$ before it arrives at $M$ at time $t_M$. Further, let $a = |qM|$ and let $b=|MC|$ as shown in Figure~\ref{fig:l6}. 

Now, notice that:
\begin{itemize}
\item  For $R_1$ and $R_2$ to get back to the exit at $C$ in time, it must be that $t_M + b \leq 1+4y$ and since $t_M \geq t_c+b$, we obtain:
$$b< (1+ 4y - t_c)/2 \leq (1+4y- (\sqrt{(0.5-x)^2+y^2}+1-x))/2$$

\item Consider now a point $q'$ infinitesimally close to $q$ on the $qA$ segment. Suppose it is visited at time $t_{q'}$. Then if $t_{q'}  \geq 1 + 4y - |Aq'|$, by Observation 4, we obtain a contradiction. Therefore $t_{q'}< 1 + 4y - |Aq'|$. However, we also have $t_{q'} \geq  t_M + a \geq t_B + |Bq'| + 2a$, which implies that
$$a <  (1 + 4y - |Aq'| - |Bq'| - t_B)/2 \leq (1+4y- (\sqrt{(0.5-x)^2+y^2}+x+1))/2$$
\end{itemize}

This yields 
\begin{align*}
a+b & <  (1+4y-(\sqrt{(0.5-x)^2+y^2}+x+1))/2+(1+4y- (\sqrt{(0.5-x)^2+y^2}+1-x))/2 \\
& =  4y -\sqrt{(0.5-x)+y^2} \leq 3y,
\end{align*}
a contradiction, since $a+b$ is at least the height of the triangle. We conclude that there can be no such meeting point $M$, and therefore, evacuation cannot take place in the claimed time.
\end{proof}

\begin{figure}[t]
\centering
\includegraphics[width=0.4\textwidth]{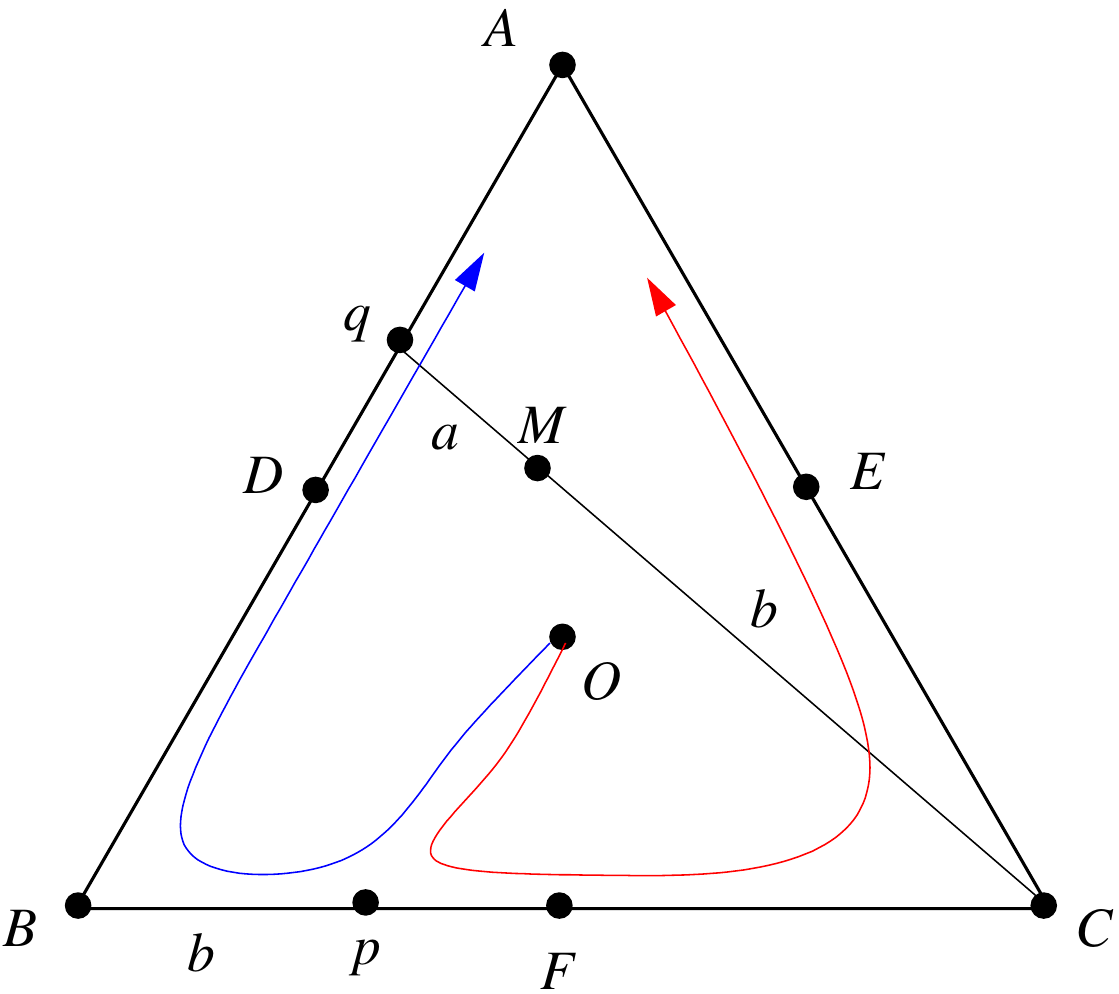}
\caption{Trajectories when  $R_2$ must visit $B$ before $D$ and $R_1$  visits $F$ before $C$.}
\label{fig:l6}
\end{figure}

\begin{lemma}
If robot $R_1$ visits $B, F$ and $E$, and $R_2$ visits $C$ and $D$, then $E_{\mathcal{A}}(T,2) \geq 1+4y$. 
\label{R1-BFE}
\end{lemma}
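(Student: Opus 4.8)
The plan is to mimic Lemmas~\ref{R1-CE} and \ref{R1-CFE}: assume $E_{\mathcal{A}}(T,2) < 1+4y$ and produce a boundary point that is explored too late, so that Observation~\ref{two-points} forces the bound. First I would pin down $R_1$'s order. Since $B$ and $F$ are both among the first five points while $E=v_5$, both precede $E$; and the order $F,B,E$ is impossible, because any trajectory realizing it has length at least $|OF|+|FB|+|BE| = y+\tfrac12+3y = \tfrac12+4y$, contradicting $t_E < \tfrac12+4y$. Hence $R_1$ visits $B,F,E$ in this order, so $t_E \ge |OB|+|BF|+|FE| = 1+2y$, and $R_1$'s total ``detour budget'' along $O\to B\to F\to E\to A$ (length beyond the geodesics, after requiring arrival at $E$ by $t_E$ and at the exit $A$ by $1+4y$) telescopes to strictly less than $2y-\tfrac12$. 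Finally, as $A=v_6$ its first visit satisfies $t_A \ge t_E \ge 1+2y$.

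Next I would isolate where coverage is forced. The exit may lie on $FC\subset BC$, on $CE\subset CA$, or on $DB\subset AB$, and Observation~\ref{two-points} gives clean deadlines: for $p\in CE$ at distance $r$ from $C$ one computes $|Ap|=1-r$, so $p$ must be visited by time $1+4y-|Ap|=4y+r$; every $p\in FC$ has $|Ap|\ge|AF|=3y$, hence deadline at most $1+y$; and a point of $DB$ at distance $d$ from $B$ lies at distance $1-d$ from $A$, giving deadline $4y+d$. All of these (for the portions near the far vertices $B,C$) lie below $t_A\ge1+2y$, so on input $I$ itself they must be explored \emph{before} either robot reaches the exit. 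Because $R_1$ hugs $O\to B\to F\to E\to A$ with budget below $2y-\tfrac12$, it never approaches $C$; thus the portions of \emph{both} $FC$ and $CE$ near $C$ fall entirely to $R_2$.

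The heart of the argument is then a budget dichotomy on $R_1$. Either $R_1$ spends enough of its tiny budget to cover the urgent (near-$B$) part of $DB$, in which case it cannot also relieve $R_2$ at the $C$-wedge, so $R_2$ alone must traverse the whole bent curve $E$--$C$--$F$, then visit $D$ and reach the exit $A$; every such route has length at least $|OE|+|EC|+|CF|+|FD|+|DA| = y+\tfrac12+\tfrac12+\tfrac12+\tfrac12 = y+2 > 1+4y$ (since $1-3y>0$), so $R_2$ cannot reach $A$ by the evacuation deadline, a contradiction on input $I$. Or $R_1$ does not, leaving the near-$B$ part of $DB$ to $R_2$ as well; but $R_2$ is committed to the wedge at $C$, a point at distance about $1$ from $B$, so it cannot visit a near-$C$ point of $CE$ by its deadline $\approx 4y$ and a near-$B$ point of $DB$ by its deadline $\approx 4y$, as $\min(|OC|,|Op|)+|Cp| > 4y$ follows from $1>2y$; the later-visited of the two points, together with the still-unvisited $A$, triggers Observation~\ref{two-points}. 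Either way $E_{\mathcal{A}}(T,2)\ge 1+4y$.

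The step I expect to be the main obstacle is controlling this budget split rigorously: $R_1$ can cheaply absorb slivers of $FC$ near $F$, of $CE$ near $E$, and of $DB$ near $B$ (all of which lie close to its committed path), and the wedge-only estimate beats $1+4y$ by only $1-3y$, so the two prongs must be quantified with exact distances rather than crude length counting to keep a genuine positive margin. If this optimization proves unwieldy, the clean alternative is to run the meeting-point argument of Lemma~\ref{R1-CFE}: place the exit at $C$, observe that $R_1$'s committed trajectory never nears $C$ so $R_2$ (having found $C$) must intercept $R_1$ at some interior point $M$ before it reaches $A$, and derive a contradiction from the two constraints forcing the total intercept-and-return distance below the height $h=3y$.
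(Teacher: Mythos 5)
Your setup matches the paper's: you force the order $B,F,E$ exactly as the paper does (ruling out $F$ before $B$ via $y+\tfrac12+3y=\tfrac12+4y$), deduce $t_E\ge 1+2y$ and a slack of less than $2y-\tfrac12\approx 0.077$ over the polyline $O\to B\to F\to E\to A$, and then try to use the Observation~\ref{two-points} deadlines on $BD$, $FC$ and $CE$. The gap is in the dichotomy. In your first prong, the claim that ``$R_2$ alone must traverse the whole bent curve $E$--$C$--$F$'' does not follow from your own premises: $R_1$'s committed path passes through both $E$ and $F$, so only the portions of $CE$ and $CF$ \emph{near $C$} are forced onto $R_2$, and the corrected obligation (visit a neighbourhood of $C$, visit $D$, reach $A$) admits the route $O\to C\to D\to A$ of length $2y+3y+\tfrac12=5y+\tfrac12\approx 1.94<1+4y$. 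So pure length-counting gives no contradiction in that prong; you still need a timed obligation on $BD$. Moreover the dichotomy is not exhaustive: $R_1$ may split its slack between a prefix $[B,P']$ of $BD$ and slivers of $FC$ and $CE$ near $F$ and $E$, so neither hypothesis holds cleanly, and the margins are razor-thin. Concretely, $R_1$ can afford to cover $BD$ up to distance roughly $0.3$ from $B$ (cost $\sqrt{y^2+(0.5-d)^2}+d-2y$ against a budget of $2y-\tfrac12$), and for the first uncovered point $P$ at that distance the decisive inequality $|OC|+|CP|>4y+|BP|$ holds by only about $0.01$. Your second prong's estimate $\min(|OC|,|Op|)+|Cp|>4y$, justified by $1>2y$, is computed for $p$ essentially at $B$ --- exactly the part of $BD$ that $R_1$ does cover --- so it does not survive the shift of $p$ to distance $\approx 0.3$ from $B$ without the sharper computation you defer.

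The paper closes this by dispensing with the dichotomy altogether: it fixes $P$ at distance $0.3$ from $B$ on $BD$ and $Q$ at distance $0.34$ from $C$ on $CE$, verifies that $R_1$ cannot visit any point of $PD$ or of $QC$ without violating $t_E<\tfrac12+4y$, so both segments fall to $R_2$ \emph{unconditionally}; it then shows $R_2$ must visit $C$ before $P$ and $D$ (else $t_C\ge 4y$) and, in either order of $Q$ and $C$, cannot reach $P$ by its deadline $1+4y-|AP|$, after which Observation~\ref{two-points} finishes. Your proposed fallback (the interception argument of Lemma~\ref{R1-CFE} with the exit at $C$) is also not immediate here: $R_1$ reaches $E$ at time $\ge 1+2y$ and $|EC|=\tfrac12$, so $1+2y+\tfrac12<1+4y$, meaning a meeting near $E$ would still let $R_1$ reach $C$ on time; excluding such a meeting again requires invoking $R_2$'s deadlines on $C$, $D$ and $PD$, not just the height bound used in that lemma. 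In short, the skeleton is right but the quantitative core --- which is the entire content of this lemma --- is missing, and one of the two prongs as stated is false.
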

\begin{proof}
First  observe that $E$ must be visited last, and if $R_1$ visits $F$ before $B$ then $t_E \geq 0.5+4y$. So $B$ must be visited before $F$. Now let $P$ be a point at distance $0.3$ from $B$ on the $BD$ segment, and $Q$ a point at distance $0.34$ from $C$ on the $CE$ segment. It can be verified that if $R_1$ visits a point on the $PD$ segment before arriving at $B$, then $t_E \geq 0.5+4y$. Similarly, $R_1$ cannot visit any point in the $QC$ line segment if it is to reach $E$ by time $0.5+4y$. See Figure \ref{fig:R1-BFE} (a). Therefore the entire $PD$ line segment and the entire $QC$ line segments must be visited by $R_2$. Now we consider the order of visiting $D, Q, C, P$. If $D$ or $P$ are visited before $C$, then $C$ cannot be reached before $4y$ which means that if the exit is at $A$, $R_2$ cannot reach it before time $1+4y$. So, $C$ has to be visited before $P$ or $D$, Figure \ref{fig:R1-BFE} (b). Regardless of whether $Q$ or $C$ is visited first, it can be verified that it is impossible for $R_2$ to reach $P$ before time $1+4y-|AP|$, yielding the desired conclusion, using Observation~\ref{two-points}.
\end{proof}

\begin{figure}[t]
\centering
\includegraphics[width=0.75\textwidth]{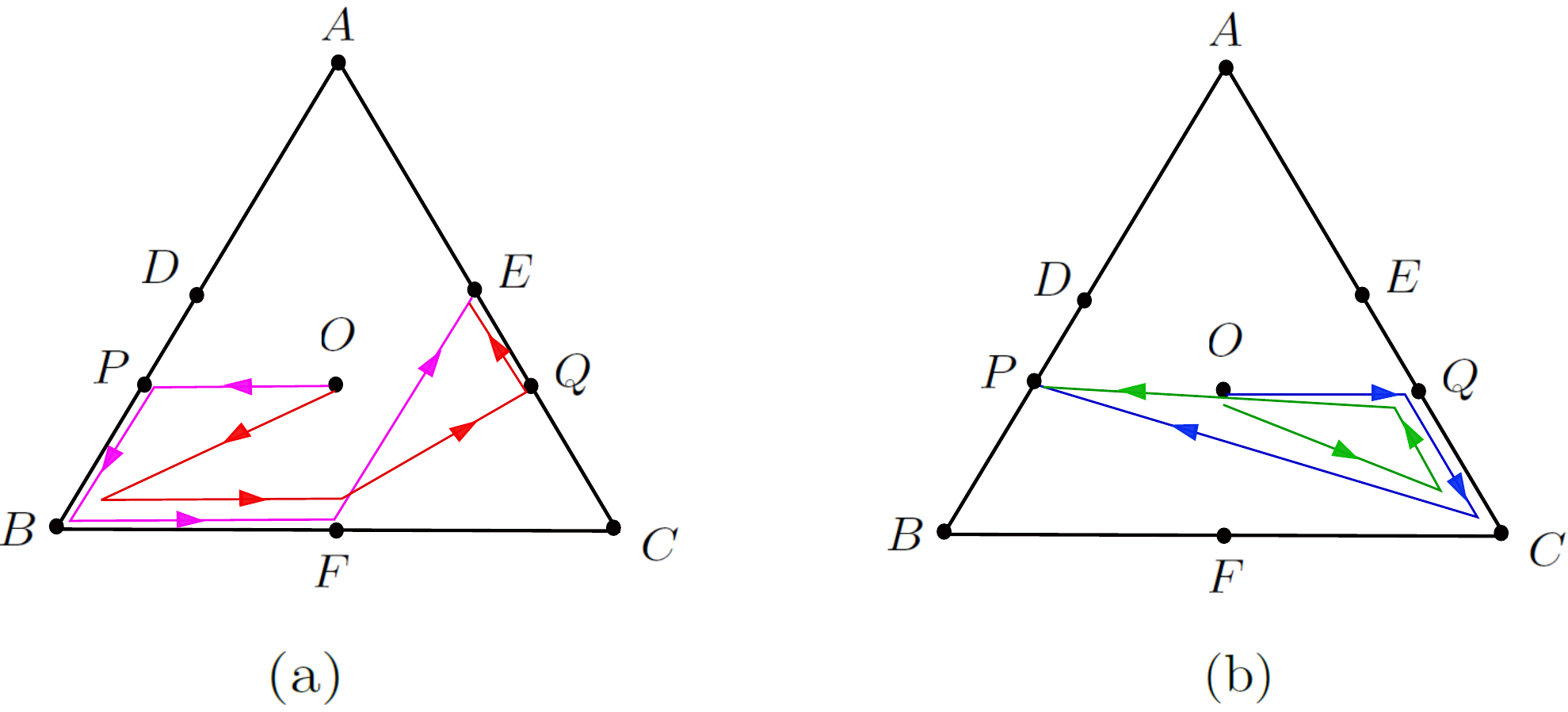}
\caption{(a) An illustration of possible trajectories of $R_1$ and  (b) possible trajectories of $R_2$, in support of Lemma~\ref{R1-BFE}.}
\label{fig:R1-BFE}
\end{figure}

\begin{lemma}
If robot $R_1$ visits $B, D$ and $E$, and $R_2$ visits $C$ and $F$, then $E_{\mathcal{A}}(T,2) \geq 1+4y$.
\label{R1-BDE}
\end{lemma}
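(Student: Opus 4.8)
The plan is to finish via Observation~\ref{two-points}: I want to locate a boundary point $P$ and argue that, under the stated visiting pattern, neither robot can reach $P$, nor the exit vertex $A$, before time $1+4y-|AP|$. So the whole task is to manufacture one such ``exposed'' point. Before doing that I would pin down everything that is forced. Since $R_1$ visits $B,D,E$ with $E=v_5$ reached last, at $t_E<\tfrac12+4y$, the order must be $B,D,E$: the alternative $D,B,E$ already costs $|OD|+|DB|+|BE| = y+\tfrac12+h = \tfrac12+4y$ (using $|OD|=y$ and $|BE|=h=3y$, since $BE$ is a median), contradicting $t_E<\tfrac12+4y$. Hence $t_E\ge |OB|+|BD|+|DE| = 2y+1$, so $R_1$ cannot reach the exit at $A$ before $t_E+\tfrac12\ge 2y+\tfrac32$, and up to time $t_E$ it has only a slack of $(\tfrac12+4y)-(2y+1)=2y-\tfrac12$ to spend on boundary beyond the half-side $BD$.

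Next I would record the corresponding facts for $R_2$ and the key timing of $A$. Because $R_2$ must visit both $C$ and $F$ (both in $v_1,\dots,v_4$, hence before it can reach the exit), the earliest it can arrive at $A$ is $|OF|+|FC|+|CA| = y+\tfrac32$. Combined with $R_1$'s bound $2y+\tfrac32$, the vertex $A$ is first visited no earlier than $\alpha := y+\tfrac32$, and since $y+\tfrac32>4y$ this already shows that $A$ is unvisited at the deadline $1+4y-|AP|$ of \emph{every} point $P$ with $|AP|>3y-\tfrac12$. In particular every point of $BF$ (where $|AP|\in[h,1]$), and every point of $AB$ within $\tfrac12-(3y-\tfrac12)$ of $D$, automatically satisfies the ``$A$ is still unexplored'' half of Observation~\ref{two-points}; for these $P$ it remains only to force $P$ itself to be unvisited in time.

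The substance of the argument is then a dichotomy on $R_2$, in the spirit of Lemmas~\ref{R1-CE} and~\ref{R1-BFE}. For a suitable $P$ on the part of the boundary that $R_1$ is forced to abandon (the stretch of $AB$ just above $D$, or the stretch of $BC$ toward $F$), I would argue: if $R_2$ visits $P$ early, then since $R_2$ must also pass through $C$ and ultimately travel to the exit at $A$, the length of its trajectory from $O$ through $\{P,C\}$ to $A$ exceeds $1+4y$, so evacuation on input $I$ already fails; while if $R_2$ does not visit $P$ early, then (with $R_1$ and $A$ both excluded above) $P$ is unvisited at time $1+4y-|AP|$ and Observation~\ref{two-points} applies. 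To make both horns close I expect a case split on the order in which $R_2$ visits $C$ and $F$, on whether $R_1$ spends its slack $2y-\tfrac12$ reaching toward $F$ or toward $D$, and a use of the Meeting Lemma (Lemma~\ref{lem:noMeetingBeforeT}) to deny $R_2$ any shortcut via an early rendezvous with $R_1$; in each branch a concrete $P$ (a specific fraction along $AB$ or $BC$) would be fixed and checked by a direct distance computation of the ``it can be verified'' kind, exactly as in the preceding lemmas, possibly closing some branch by the meeting-point/triangle-inequality contradiction used in Lemma~\ref{R1-CFE}.

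The main obstacle is that, unlike $R_1$, the trajectory of $R_2$ is not pinned, and the relevant quantities are genuinely tight: a naive single-point dichotomy does not close, because each individual segment is coverable on its own — the stretch of $BF$ near $B$ can be swept by $R_1$ within its slack, the stretch near $F$ is cheap for $R_2$ (which visits $F$), and any point of $AD$ is reachable by $R_2$ after $C$ since $|CD|=h<1+2y-|AD|$. Thus the infeasibility is about covering the abandoned boundary \emph{simultaneously} while both robots still reach the exit, and the inequalities hover near the threshold $1+4y$. I anticipate that the decisive step is a careful choice of $P$ together with an accounting of $R_2$'s unavoidable detour of length about $h$ between the exit region and $C$; nailing the admissible location of $P$, ruling out all of $R_2$'s routing options in the several cases, and verifying that $A$ is still unvisited at $P$'s deadline are the computations I would carry out in full.
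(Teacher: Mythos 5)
Your setup is sound and matches the paper's opening moves: forcing the order $B,D,E$ for $R_1$ (hence $t_E \geq 1+2y$ and a slack of only $2y-\tfrac12$ beyond the path $O\to B\to D\to E$), lower-bounding the first visit to $A$ by $y+\tfrac32$, and aiming to finish with Observation~\ref{two-points}. You also correctly diagnose the real difficulty: no single exposed point works in isolation, so the contradiction must come from the two robots' coverage obligations interacting. But that is exactly where your proposal stops. The ``computations I would carry out in full'' are the entire content of the lemma --- the choice of the witness point(s), the routing bounds on $R_2$, and the resolution of the $C$/$F$ order --- and without them nothing is proved. As written this is a plan, not a proof, and the plan as stated (a dichotomy on whether $R_2$ visits a single point $P$ early) is one you yourself observe does not close.

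For comparison, the paper closes the argument as follows. It first pins $R_2$'s order ($F$ before $C$, via the $CE$-segment argument from Lemma~\ref{R1-CE}) and $R_1$'s order ($B$ before $D$). The coupling step is then quantitative: every point of $BC$ has deadline at most $1+y$; letting $Q$ be the first point of $BC$ (from $B$) not visited by $R_1$, robot $R_2$ must cover $QC$ and still reach $C$ by time $4y$ (since it must later reach the exit at $A$), which forces $|BQ|>0.236$; covering that much of $BC$ delays $R_1$ so that $t_D\geq 1.13155$. With $t_D$ pushed this late, two concrete points --- $R$ at distance $0.05$ from $D$ on $DA$ and $S$ at distance $0.03$ from $E$ on $EC$ --- become unreachable by $R_1$ before time $\tfrac12+4y$; then, since $R_2$ is forced to cover $SE$, its route $O\to F\to C\to S\to R$ has length at least $|OF|+|FC|+|CS|+|SR|>1.75>1+4y-|AR|$, and Observation~\ref{two-points} applied to $R$ finishes. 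Note that the decisive mechanism is a \emph{two}-point interplay (the obligation at $S$ forces a detour that kills $R$), together with the $|BQ|$ bound that transfers $R_2$'s time pressure into a delay for $R_1$ at $D$; to complete your version you would need to supply an argument of exactly this shape.
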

\begin{proof}
$R_2$ must visit $F$ before $C$, as otherwise as shown in the proof of Lemma~\ref{R1-CE}, there will be a point $P$ on the $CE$ segment that cannot be visited before time $1+4y-|AP|$. If $R_1$ visits $D$ before $B$, then $t_E \geq 0.5+4y$. So $R_1$ must visit $B$ before $D$. By Observation~\ref{two-points},  the entire $BC$ edge must be visited at or before time $1+y$. Let $Q$ be the leftmost point on the $BC$ edge that is not visited by robot $R_1$. Then $R_2$ must visit the entire $QC$ segment. Since $R_2$ must visit $C$ before time $4y$, we see that $|BQ| > 0.236$. As a result $t_D \geq 1.13155$. Let $R$ be the point at distance 0.05 from $D$ on the the $DA$ segment, and $S$ be the point at distance $0.03$ from  $E$ on the $EC$ segment. Using the assumption that it reaches $E$ before time $0.5+4y$, it can be verified that $R_1$ cannot visit either $R$ or $S$ before time $1.657 > 0.5+4y$, and  $1.661 > 0.5+4y$ respectively.  Then since the $SE$ segment must be visited by $R_2$ before time $0.5+4y$, $R_2$ cannot reach $R$ before time $|OF|+|FC|+|CS| +|SR| > 1.75 > 1+4y - |AR|$. It follows from Observation~\ref{two-points} that $E_{\mathcal{A}}(T,2) \geq 1+4y$.
\end{proof}

\begin{lemma}
\label{R1-BE}
If robot $R_1$ visits $B$ and $E$, and $R_2$ visits $C, F, $ and $D$, then $E_{\mathcal{A}}(T,2) \geq 1+4y$. 
\end{lemma}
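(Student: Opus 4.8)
The plan is to mirror the reachability arguments of Lemmas~\ref{R1-CE} and~\ref{R1-BDE}: first derive the forced coarse structure of the two trajectories from the timing constraints and the Meeting Lemma (Lemma~\ref{lem:noMeetingBeforeT}), and then exhibit a boundary point $p$ which, together with the exit $A$, satisfies the hypothesis of Observation~\ref{two-points}. First I would record the constraints inherited from the proof of Theorem~\ref{lem:twoRobotsLowerBound}: on input $I$ the exit is at $A=v_6$, so $A$ is the last of the six points to be visited, while $B,C,D,E,F$ are all visited by time $t_E<0.5+4y$, with $R_1$ visiting only $B$ and $E$ and $R_2$ visiting $C,F$ and $D$. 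Since $E=v_5$ is reached late, $R_1$ must visit $B$ before $E$; and because a straight $O\to B\to E$ route already costs $x+h$, any detour $R_1$ inserts between $B$ and $E$ has length at most $t_E-(x+h)<(0.5+4y)-(x+h)=0.5-y\approx 0.211$. Hence $R_1$ can sweep only a short arc of the boundary incident to $B$, and on at most one of the two sides $AB,BC$ meeting there. Symmetrically, after reaching $E$ the time left before the deadline $1+4y$ is $1+4y-t_E$, and since reaching $A$ from $E$ already costs $|EA|=\tfrac12$, any excursion of $R_1$ down $CA$ before heading to $A$ is severely limited: a short computation as in Lemma~\ref{R1-CE} shows $R_1$ can cover only a small initial piece of $CE$ if it is still to reach $A$ by time $1+4y$.

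The core of the argument is a covering deficit. The whole boundary must be explored, yet $R_1$ can contribute only a short arc near $B$ on one of $AB,BC$ together with (essentially) the half-side $EA$. This forces $R_2$, which must visit all three of $C,F,D$ before $A$ and then still reach $A$ by the deadline, to cover the remaining arcs --- the other near-$B$ half-side and almost all of $CE$ --- which its schedule does not allow. I would make this quantitative by charging $R_2$'s obligatory visits to $C,F,D$ and its final approach to $A$ against the length of $CE$ it can sweep, showing that for every admissible choice of visiting orders and turning points there is an interior point $p$ of the orphaned arc (I expect $p$ on $CE$, or on whichever near-$B$ half-side $R_1$ does not use) that is reached by neither robot before time $1+4y-|Ap|$. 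The Meeting Lemma is invoked exactly as in the earlier lemmas to rule out the escape in which a robot is informed of the exit en route and then shortcuts to it: whenever two anchor points in opposite positions are visited within $h$ of one another, no useful meeting can occur in the relevant interval. With such a $p$ in hand, Observation~\ref{two-points} gives $E_{\mathcal A}(T,2)\ge 1+4y$.

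The main obstacle will be pinning down $p$ so that the distance inequalities hold for \emph{every} admissible trajectory rather than for one particular schedule. The difficulty is that either robot can, on its own, visit any single prescribed point of $CE$ early; what actually rules out covering the whole orphaned arc in time is the interaction of $R_2$'s three forced anchor visits with the reach-$A$ deadline, not the inaccessibility of any one point. Thus, as in Lemmas~\ref{R1-CE} and~\ref{R1-BDE}, I would reduce to a finite case analysis over the visiting orders of $C,F,D$ and over which side near $B$ (if any) $R_1$ sweeps, and in each case carry out a geometric optimisation that locates $p$ and verifies numerically that neither robot can reach it before $1+4y-|Ap|$. This geometric bookkeeping, rather than any single clever step, is the technical heart of the lemma.
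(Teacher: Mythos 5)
Your plan follows the paper's proof of this lemma essentially step for step: the paper first pins down the order of $R_2$'s visits (showing $D$ must be visited last among $C,F,D$, the other orders being excluded immediately via Observation~\ref{two-points}), and then defers to the argument of Lemma~\ref{R1-BFE}, which is exactly the forced-coverage-plus-unreachable-witness-point computation you describe. The only difference is one of completeness: like the paper itself, you leave the final location of $p$ and the distance verification to a case-by-case geometric check, so nothing is missing in the approach.
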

\begin{proof}
We observe that $D$ must be visited last; if $F$ is visited last, $t_F \geq 0.5+4y$, and if $C$ is visited last, then $t_C \geq 1+y >4y$. In both cases, Observation ~\ref{two-points} gives the desired result. The rest of the proof is analogous to the case when robot $R_1$ visits $B, F, E$. 
\end{proof}

\subsection{Evacuation Algorithms}
\label{sec:upperBounds}
In this subsection, we give evacuation algorithms for $k$ robots, $k= 2$, $3$, $4$, and $5$ that are initially located on the centroid $O$ of an equilateral triangle, and derive upper bounds on the evacuation time by analyzing their performance. 

\subsubsection{Equal-Travel Strategy}
Consider a straightforward  strategy for evacuating $k\geq 2$ robots, that we call the {\em Equal-travel} strategy: divide the boundary into $k$ contiguous sections and assign each robot to explore a unique section of the boundary. Each robot goes to one endpoint of its assigned section,  it explores it, and then it returns to a meeting point at the centroid to meet  the other robots to share the result of its exploration.

First notice that the time to travel from and to the centroid can vary by almost a factor of 2 for different robots. In particular, robots that are assigned a section of the boundary starting or ending close to a midpoint have to travel a much smaller distance to or from the centroid, than robots that are assigned a section of the boundary that starts/ends close to a vertex. Thus, we divide the boundary into sections in such a way that  any two trajectories of robots are of the same  length and also as short as possible. In other words, we aim to {\em equalize and minimize the  travel time} of all robots. Finally, after meeting at the centroid all robots travel together to the exit that one of them found on the boundary, taking additional time at most $x$. Clearly, at least one of the robots explores at most $3/k$ of the boundary, and needs to travel at most $2x$ to get to and from the boundary;  its trajectory from $O$ and back to $O$, and therefore every robot's trajectory, is thus of length at most $2x+3/k$. 
Thus, this algorithm has evacuation time less than  $3x+3/k=\sqrt{3}+3/k$.  Although very simple, by Theorem \ref{th:krobots}, this bound is asymptotically optimal. As such, we have the following:
\begin{observation}
The Equal-Travel strategy for $k$ robots has worst case evacuation time less than $\sqrt{3}+3/k$.
\end{observation}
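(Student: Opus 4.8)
The plan is to bound the worst-case evacuation time of the Equal-Travel strategy by directly analyzing the longest trajectory any single robot must follow. Since the final answer is an upper bound on a \emph{specific, fully described} strategy (not an optimization over all strategies), the proof is essentially a length-accounting argument rather than an adversary argument, so I expect it to be short.

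First I would set up the trajectory structure precisely. Each robot is assigned a contiguous arc of the boundary of total length $3/k$ (the perimeter is $3$), travels from the centroid $O$ to one endpoint of its arc, traverses the arc, and returns to $O$. The evacuation time is the time at which the \emph{last} robot completes: all robots meet back at $O$, share which robot saw the exit, and then travel together to that exit, costing an additional distance at most $x = \sqrt{3}/3$. So the total time is (maximum over robots of the out-and-back exploration trajectory length) plus at most $x$. The key quantity to control is therefore $\max_i \ell_i$, where $\ell_i$ is the length of robot $i$'s trip from $O$, around its arc, and back to $O$.

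Next I would bound $\max_i \ell_i$. The text already gives the clean estimate: at least one robot explores an arc of length at most $3/k$ and needs to travel at most distance $x$ to reach the boundary and at most $x$ to return, so its exploration trajectory has length at most $2x + 3/k$. Because the sections are chosen to \emph{equalize} all trajectory lengths, every robot's trajectory then has length at most this same value $2x + 3/k$. Adding the final common dash to the exit of length at most $x$ yields total time at most $3x + 3/k = \sqrt{3} + 3/k$, using $x = \sqrt{3}/3$ so that $3x = \sqrt{3}$. I would note the inequality is strict (``less than'') because the worst-case stacking of all three $x$-terms and the full $3/k$ arc cannot be attained simultaneously by the robot with the shortest access distance, matching the statement of the observation.

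The main obstacle, and the one point deserving care, is justifying that the \emph{equalizing} partition of the boundary actually keeps every trajectory at length at most $2x+3/k$, rather than merely keeping the \emph{shortest-access} robot there. The argument is that equalization is achieved by giving robots with cheaper access to the boundary (those whose arcs sit near side-midpoints, at distance $y$ from $O$) a \emph{longer} arc, and robots with expensive access (arcs near vertices, at distance $x$ from $O$) a \emph{shorter} arc, so that all round-trip lengths coincide; the common value cannot exceed the length achieved by the most favorable naive assignment, which is $2x + 3/k$. I would make this rigorous by observing that the function mapping each robot to its round-trip length is continuous in the choice of section endpoints, that a balanced partition exists by an intermediate-value argument, and that the balanced common length is minimized and bounded above by $2x + 3/k$. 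Everything else is elementary substitution of $x=\sqrt{3}/3$ and $y=\sqrt{3}/6$, requiring no delicate computation.
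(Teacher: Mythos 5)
Your proposal is correct and follows essentially the same argument as the paper: by pigeonhole some robot's section has length at most $3/k$, its round trip from $O$ costs at most $2x+3/k$, equalization makes this a bound on every trajectory, and the final common walk to the exit adds at most $x$, giving $3x+3/k=\sqrt{3}+3/k$. The extra care you take with the intermediate-value existence of the balanced partition and with strictness goes slightly beyond what the paper writes, but the core accounting is identical.
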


In what follows, we propose several improvements of the above strategy by allowing the robots to meet before the whole boundary of the region is explored using an {\em Equal-Travel Early-meeting} strategy for $k\geq 3$, or an {\em Equal-Travel with Detours} strategy for $k=2$.

\subsubsection{Equal-Travel with Detour(s) Algorithms for Two Robots.}
\label{ssec:2robots}
In the Equal-Travel strategy algorithm, when we have two robots, the robots would go together to the center of a side of the triangle and the two robots would meet at the opposite vertex of the triangle at the end of the exploration. However, if the exit is at one of the two vertices explored earlier, this would give an evacuation time of $y+1.5 +1 \approx 2.788$. To eliminate this bad case  we have to adjust the trajectories so that the robot can receive early information about the exit being located in the initial part of the trajectory. This could be done by creating trajectories that meet before the exploration is completed. However, with two robots,
there is no need for the robots to meet. Instead, we can add {\em detour(s)} to the trajectories. As in the  Equal-Travel strategy, trajectories of both robots in the
{\em Equal-Travel with Detour(s)} strategy have the same length. The trajectory for each robot consists of multiple sections of the boundary, with a {\em detour} inside the region between any two sections. Each robot starts with reaching the boundary and exploring a section of the boundary. At some point,  the robot  leaves the boundary and makes a detour inside the triangle (see Figure~\ref{fig:2robotsNewAlg} for an example).
The detour is designed so that if a robot found the exit in its first boundary  section, it can intercept the other robot on the other robot's  corresponding detour. 
If it is not intercepted by the other robot during the detour, the robot concludes that the exit was not found by the other robot in its first section, and goes back to the boundary to  explore its second section of  the boundary. After the robot has finished exploring its second section of the boundary, it can make another detour, and so on.

The idea of using a detour to improve the evacuation time was first proposed in the context of evacuating a disk with two robots in \cite{CzyzowiczGGKMP14}. In this paper, we show that a detour can also improve the evacuation time in an equilateral triangle, and multiple detours can improve it further.

We summarize below the salient features of the detour strategy.
\begin{itemize}
\item The trajectories of the robots are disjoint  except for the initial part to get to the boundary of the triangle, and at the very end. Thus, the detour part of trajectories of robots inside the triangle only get close to each other.
\item Each robot is assigned to explore more than one section of the boundary to explore.
\item A detour is added between every two adjacent sections of the boundary to be explored by a robot.
\item  A robot can do multiple detours.
\item The worst case evacuation time is made the same in each section  of the boundary.
\end{itemize}

We first give the details of an Equal-Travel with Detour algorithm with a single detour. 

\begin{figure}[t]
\centering
\includegraphics[width=0.60\textwidth]{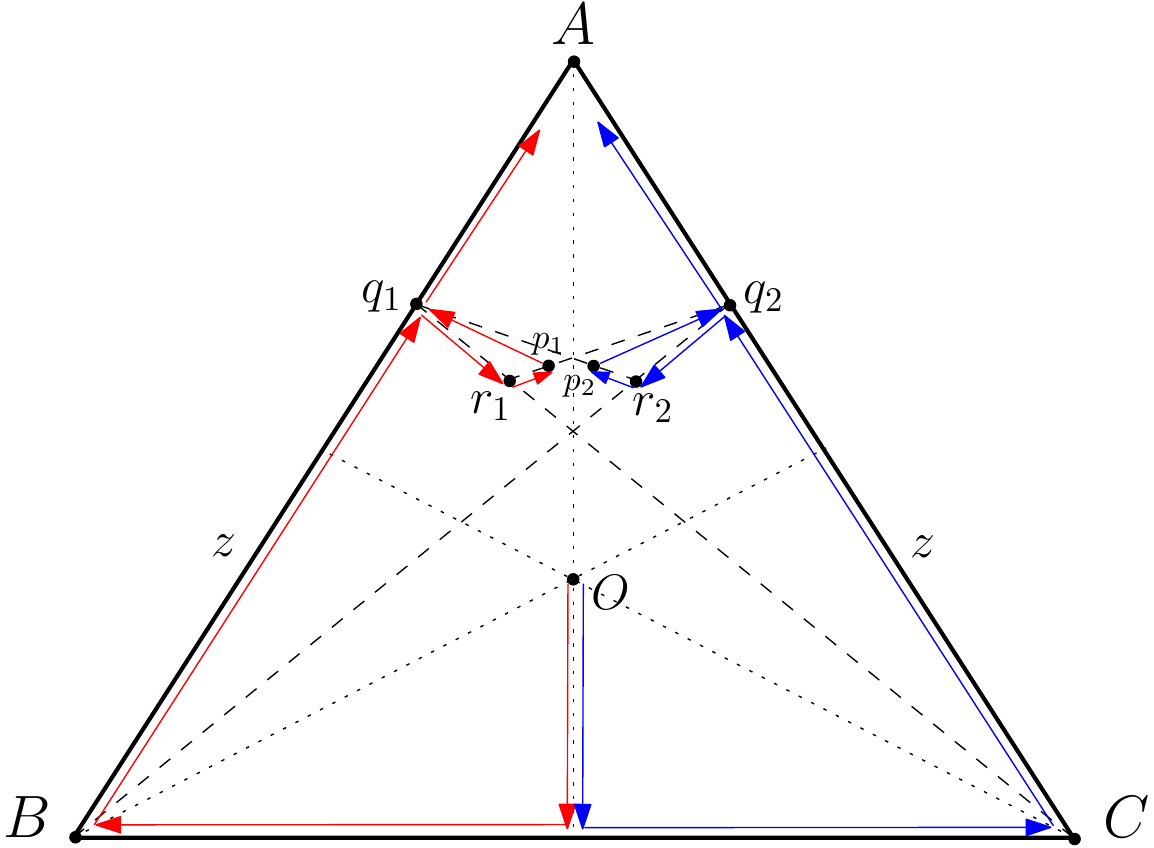}
\caption{The trajectories of two robots with one detour for each robot.}
\label{fig:2robotsNewAlg}
\end{figure}

\begin{theorem}
\label{thm:2robotsNewAlg}
Consider two robots initially located at the centroid of a triangle $T$ with side length 1. There is an Equal-Travel with Detour evacuation algorithm for two robots that  uses a single detour for each robot with the evacuation time $\leq 2.3866$.
\end{theorem}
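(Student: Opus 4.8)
The plan is to construct an explicit Equal-Travel with Detour trajectory for the two robots and verify that the worst-case evacuation time over all possible exit locations is at most $2.3866$. By the symmetry of the equilateral triangle and the centroid starting point, I would have both robots travel together from $O$ to the midpoint of one side (say $F$, the midpoint of $BC$), which takes time $y=\sqrt{3}/6$. From there the two robots split: $R_1$ explores one half of the boundary (proceeding along $FC$ then up toward $A$), and $R_2$ explores the mirror-image half (along $FB$ then up toward $A$), so that by symmetry their trajectories are reflections of each other across the height from $A$ to $F$, guaranteeing equal trajectory lengths. Each robot first explores an initial boundary section, then leaves the boundary to make a single detour into the interior before returning to explore its second section and finally heading toward the meeting at the remaining vertex $A$.

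The key design step is to choose the \emph{turning point} where each robot leaves the boundary for its detour, and the \emph{shape of the detour}, so that two competing worst cases are balanced. The first bad case is that the exit lies in the \emph{first} boundary section a robot explored: here the detour must be positioned so that if the \emph{other} robot has found the exit there, it can intercept this robot on its detour and redirect it, and the detour geometry must make the resulting evacuation time no worse than the target. The second bad case is that the exit is the last point reached, at vertex $A$ (or in the second section): here the total trajectory length plus the final dash to the exit governs the time. Following the stated principle that ``the worst case evacuation time is made the same in each section of the boundary,'' I would set up equations equating the evacuation time across these scenarios and solve for the free parameters (the length of the first explored section, the detour depth, and the detour's return point). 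I expect the detour to be a straight segment in and a straight segment back, or possibly a single reflection point chosen so that the interception condition is tight.

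Concretely, I would parametrize the trajectory by the arc-length position $s$ at which the robot abandons the boundary, compute (i) the time at which a point at boundary position $u$ in the first section is explored, (ii) the latest time by which the other robot, having found the exit at $u$, can catch this robot on its detour, and (iii) the time to reach the exit after interception; then impose that the maximum of these over $u$, together with the no-interception completion time to $A$, equals the common value $2.3866$. The verification that interception is geometrically possible — i.e., that the detour stays close enough to the mirror trajectory that a robot starting from an exit point in the first section can reach the detour path before this robot leaves it — is the crucial feasibility check. The main obstacle will be handling this interception constraint rigorously: one must show that for \emph{every} exit location in the first section, the pursuing robot can meet the detouring robot in time, which requires comparing travel times along the interior detour against straight-line interception paths, and this is where a careful case analysis or an explicit bounding of the worst interception point is needed. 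Once the parameters are fixed and these constraints verified, the claimed bound of $2.3866$ follows by direct (if tedious) computation of the balanced worst-case time.
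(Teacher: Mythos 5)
Your overall strategy is the same as the paper's: send both robots from $O$ to the midpoint $F$ of $BC$, split them onto mirror-image trajectories, have each explore half of $BC$ and continue up the far side to a point $q_i$ at distance $z$ from the vertex, insert one interior detour there, then finish the segment $q_iA$, and finally choose $z$ by balancing the worst case in the first section against the worst case in the second. The gap is that you stop exactly at the two steps that carry the proof. First, the detour is never actually constructed. The paper defines it as a closed tour $q_2\to r_2\to p_2\to q_2$ with $r_2$ on the segment $q_2B$ chosen so that $z+|q_2r_2|=|r_2B|$ (so a robot that finds the exit at $B$ at time $y+1/2$ arrives at $r_2$ exactly when the detouring robot does) and $p_2$ chosen so that $|q_2r_2|+|r_2p_2|=|p_2q_1|$ (the analogous condition for an exit at $q_1$). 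Only with these explicit equal-time conditions can one write $t_1=y+1/2+z+|q_2B|$ and $t_2=y+1/2+z+|q_2r_2|+|r_2p_2|+|p_2q_2|+2(1-z)$, set $t_1=t_2$, and obtain $z\approx 0.70745$ and the value $2.3866$; your proposal names the free parameters but never produces the equations, so the claimed bound is not derived.

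Second, you correctly identify the interception feasibility as ``the main obstacle'' but leave it unresolved, and this is not a routine computation. One must show that for \emph{every} exit position $D$ strictly between $B$ and $q_1$, robot $R_1$ can still reach some point of the segment $r_2p_2$ no later than $R_2$ does, and that the subsequent travel back to $D$ keeps the total below $t_1$ --- otherwise the maximum over the first section might be attained at an interior point and exceed the balanced value. The paper proves this with an explicit geometric argument: through $D$ draw the parallel to $Br_2$ and let $D'$ be its intersection with the line through $q_1$ and $r_2$; using $z>1/2$ and similar triangles one gets $|BD|+|DD'|<|Br_2|$, so $R_1$ is already waiting at $D'$ when $R_2$ passes, and the remaining distance from the interception point to $D$ is again bounded by $|Br_2|$, giving an evacuation time strictly below $t_1$. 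Without this (or an equivalent monotonicity argument in the exit position), the assertion that the per-section worst case occurs at $B$ and at $q_1$ is unsupported, and the theorem is not proved.
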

\begin{proof}
The trajectories of the robots are shown in Figure~\ref{fig:2robotsNewAlg}. Points  $q_1$ and $q_2$ are located symmetrically on the sides $AB$ and $AC$ at distance $z > 0.5 $ (the exact value to be determined later) from $B$ and $C$, respectively. Since the trajectories of robots are symmetric, we specify below the trajectory of $R_2$ only. 
$R_2$ is assigned two sections of the boundary: the section from the midpoint of $BC$ to $C$ and from $C$ to $q_2$, and the section from $q_2$ to $A$. The detour between the two sections consists of segments $q_2r_2$, $r_2p_2$, and $p_2q_2$. Point $r_2$ is located on the segment $q_2B$ and chosen  so that $z +|q_2r_2|=|r_2B|$. Point $p_2$ is located on the segment $p_2q_1$ and chosen so that  $|q_2r_2|+|r_2p_2|=|p_2q_1|$. Thus the positions  of  points on the detour are uniquely determined by $z$.

By the definition of $r_2$ and $p_2$, if  $R_1$ finds the exit at $B$, or  $q_1$, then $R_1$  can intercept $R_2$ at $r_2$, or $p_2$, respectively. Thus if $R_1$ finds the exit on the segment $Bq_1$ then $R_1$  can intercept $R_2$ at a point on the segment $r_2p_2$.

Let $t_1=y+0.5+z+|q_2B|$, and $t_2=y+0.5+z+|q_2r_2|+|r_2p_2|+|p_2q_2| +2(1-z)$. We argue below that the worst case evacuation time of this algorithm is $max(t_1, t_2)$. We consider all possible locations for the exit on the sections of the boundary explored by $R_1$ (the case when the exit is discovered by $R_2$ is symmetric). 

Clearly, if the exit is located on the line segment  $q_1A$, then the evacuation time is at most $t_2$. If the exit is found by $R_1$ on the side $BC$, then it can intercept $R_2$ at point $r_2$ since  $z +|q_2r_2|=|r_2B|$, and the robots reach the exit in time at most $t_1$. Assume now that $R_1$ finds the exit on the line segment $q_1A$ at $D$ as on Figure \ref{figtrboundpr}. Consider the triangle with sides $y_1,y_2,y_3$ formed by segment $BD=y_1$,  side $BA$ and a line though point $D$ parallel with the line going through $q_1$ and $r_2$. Since $z>0.5$, it is easy to see that $|q_1r_2|<|Bq_1|<|Br_2|$ and by the similarity of triangles $y_3<y_1<y_2$. Let $D'$ be the intersection point of the line going through $q_1$ and $r_2$ with the line drawn through $D$ and parallel with $Br_2$.
Since $y_1+|DD'|<y_2+|DD'|=|Br_2|$, when $R_2$ reaches $r_2$ robot $R_1$ is at $D'$ to intercept $R_2$, and the distance traveled by $R_2$ from $r_2$ to $D$
is $y_3+|DD'|<|Br_2|$, and so the total time to reach $D$ is less then $t_1$.
Notice that if the value of $y_1$ is so that the parallel line through $D$ would not intersect the line segment $r_2p_2$ then $R_1$ can intercept $R_2$ at $p_2$ and
the evacuation time remains less than  $t_1$.

We have shown that the worst-case evacuation time is $max(t_1, t_2)$.  We optimize the evacuation time by equating $t_1=t_2$. Solving this  we obtained  $z=0.70745$, which gives  $t_1=2.3866$. Thus the worst-case evacuation time of this algorithm is $2.3866$.
\end{proof}

\begin{figure}
\begin{center}
\includegraphics[width=8cm]{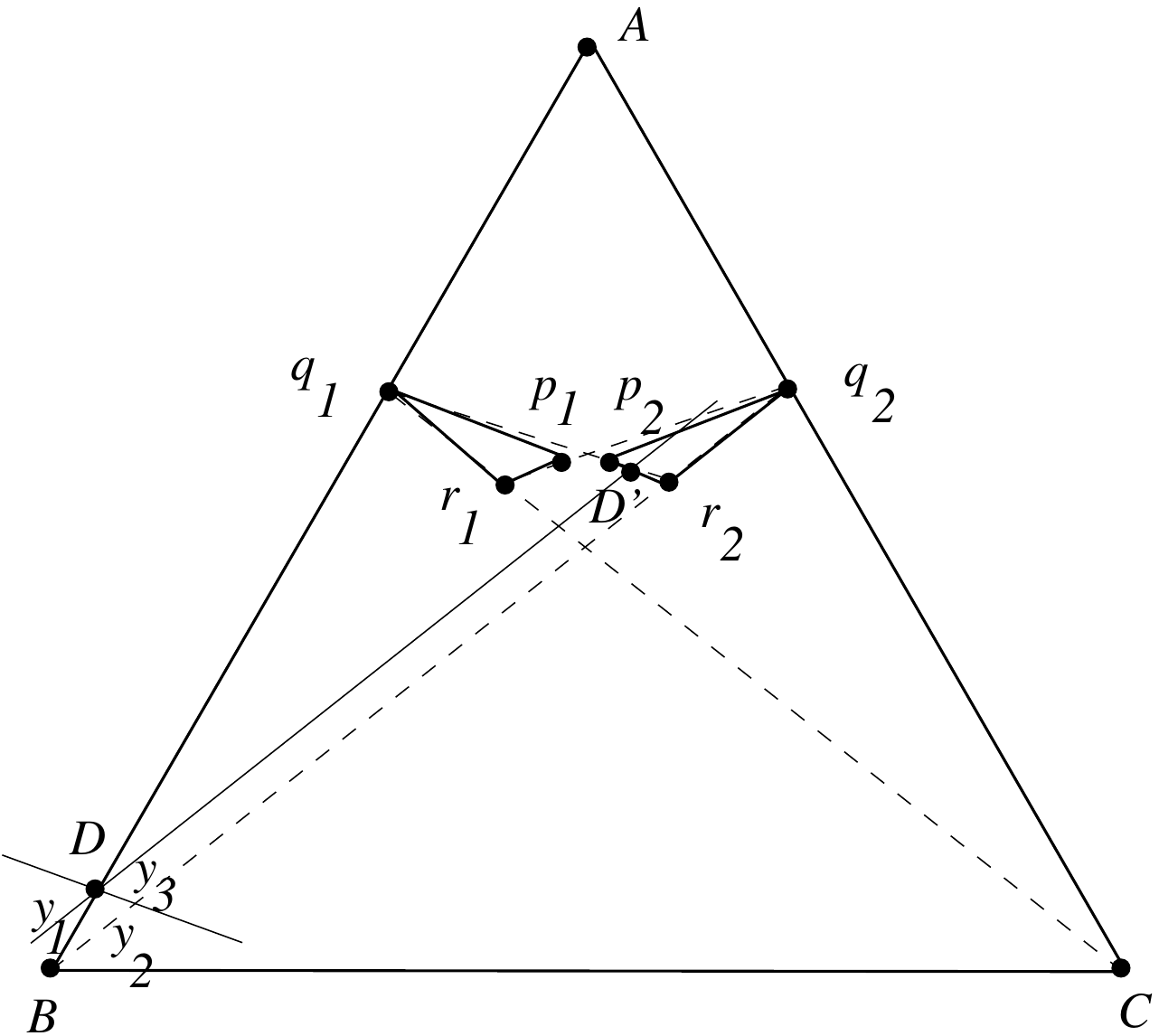}
\end{center}
\caption{Evacuation time for exit at $D$ is less than that for $B$}
\label{figtrboundpr}
\end{figure}

Consider the Equal-Travel with Detour algorithm described above and suppose that the robots do not find the exit in the first phase. That is, the robots will go back to points $q_1$ and $q_2$ to start the exploration of the rest of the boundary of $T$; i.e., the line segments $q_1A$ and $q_2A$. Observe that  at this time in triangle $\triangle Aq_1q_2$, the robots are in the same situation as they were when visiting vertices $B$ and $C$ in the triangle $\triangle ABC$. Furthermore, like for the triangle $\triangle ABC$, where the worst case evacuation time occurs at $B$ and $C$, the worst case for triangle $\triangle Aq_1q_2$ occurs when the exit is located in the neighbourhood of $q_1$ or $q_2$. Therefore,  we can use an additional detour in the triangle $Aq_1q_2$ and improve the evacuation time in the top part as illustrated in Figure~\ref{fig:2robots2LeavesNewAlg}.

Now, each robot is assigned to explore three sections of the boundary. For $R_1$ the first section ends at $q_1$, the second is $q_1q_1'$ and the third $q_1'A$, with one detour added between the sections. Using the same reasoning as in the proof of Theorem \ref{thm:2robotsNewAlg}, the maximum evacuation time in the first, second, third section is when the exit is at $B$, close to $q_1$, close to $q_1'$, respectively. We derive the expressions $t_1$, $t_2$, and $t_3$  for the maximum evacuation time
in each section, and by solving the equations $t_1=t_2, t_2=t_3$ we calculated the optimized positions of $q_1,q_2, q_1',q_2'$ using numerical calculations, obtaining $|Bq_1|=0.666$, $Bq_1'=0.9023$ and the evacuation time $2.3367$. Thus we have the following improved evacuation time.
\begin{theorem}
Consider two robots initially located at the centroid of triangle $T$ with side length 1. There is an Equal-Travel with  Detour evacuation algorithm for two robots that uses two detours for each robot with the evacuation time at most $2.3367$.
\end{theorem}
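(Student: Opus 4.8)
The plan is to generalize the single-detour construction from Theorem~\ref{thm:2robotsNewAlg} by recursing once more inside the residual triangle. The key structural observation, already noted in the excerpt, is that after both robots complete their first detour phase without intercepting each other, they arrive back at the symmetric points $q_1$ and $q_2$ on sides $AB$ and $AC$. At that moment the pair $(R_1,R_2)$ stands in \emph{exactly} the same geometric configuration relative to the smaller triangle $\triangle A q_1 q_2$ as it originally stood relative to $\triangle ABC$: they are at two symmetric points on the two sides emanating from the apex $A$, the unexplored boundary is the portion of $\triangle Aq_1q_2$ near $A$, and the worst case arises when the exit sits near $q_1$ or $q_2$. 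So first I would set up coordinates and define the three boundary sections for $R_1$ --- namely $BC\cup Cq_1$ (shared-start portion up to $q_1$), then $q_1 q_1'$, then $q_1' A$ --- together with one detour inserted between the first and second sections and a second detour inserted between the second and third, each detour designed by the same interception rule $z+|q_2 r_2|=|r_2 B|$ used in the base case, now applied at both scales.

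Next I would write down the three worst-case evacuation times $t_1,t_2,t_3$, one per section. By the proof of Theorem~\ref{thm:2robotsNewAlg}, the worst case within any given section occurs at its starting endpoint (at $B$ for the first section, near $q_1$ for the second, near $q_1'$ for the third), because the detour between that section and the next guarantees early interception for any exit strictly interior to the section, while the endpoint itself is the point for which interception is tightest. Thus $t_1$ is the time for the exit at $B$ (first robot explores $BC$ and up to $q_1$, then the detour lets the partner relay), $t_2$ the time for an exit just past $q_1$, and $t_3$ the time for an exit just past $q_1'$, each expressed as a sum of the explored boundary length, the detour lengths, and the final dash to the exit, all in terms of the two free parameters $|Bq_1|$ and $|Bq_1'|$ (equivalently the scale factors of the two nested triangles). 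The symmetry of the two robots' trajectories means I only need to analyze exits on the sections explored by $R_1$; the $R_2$ case is identical by reflection across the median through $A$.

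The optimization step is then to equalize $t_1=t_2=t_3$, which the excerpt reports yields $|Bq_1|=0.666$, $|Bq_1'|=0.9023$, and common value $2.3367$; this is the standard equal-worst-case balancing principle, and since all three section-times are monotone in the free parameters in opposite directions, the solution of the two equations $t_1=t_2$ and $t_2=t_3$ is the minimizer. I would verify feasibility constraints: that $0.5<|Bq_1|<|Bq_1'|<1$ (so the detour-interception geometry of Theorem~\ref{thm:2robotsNewAlg} applies verbatim at each scale, since the condition $z>0.5$ is what forces $|q_1 r_2|<|Bq_1|<|B r_2|$ and hence $y_3<y_1<y_2$ in the similar-triangles argument), and that each detour actually fits inside its triangle. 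The main obstacle is checking that the interception argument of the base case transfers cleanly to the second, nested detour: one must confirm that the same similar-triangles comparison (exit at an interior point $D$ evacuates faster than the endpoint $B$, via the auxiliary point $D'$) still holds inside $\triangle Aq_1q_2$, where the relevant ``$z>0.5$'' condition becomes the requirement that $|q_1 q_1'|$ measured from $q_1$ exceeds half of $|q_1 A|$. Granting that the recursion preserves this inequality at each scale --- which follows because the nested triangle is a scaled copy and the balancing solution keeps the ratios in the required range --- the worst case on every section is genuinely realized at its starting endpoint, the three times equalize to $2.3367$, and the theorem follows.
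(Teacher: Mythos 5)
Your proposal matches the paper's own argument essentially step for step: recurse the single-detour construction of Theorem~\ref{thm:2robotsNewAlg} inside the residual triangle $\triangle Aq_1q_2$, observe that the worst case in each of the three boundary sections occurs at its starting endpoint ($B$, near $q_1$, near $q_1'$), equalize the three section times $t_1=t_2=t_3$, and solve numerically to obtain $|Bq_1|\approx 0.666$, $|Bq_1'|\approx 0.9023$ and evacuation time $2.3367$. The extra feasibility checks you flag (the analogue of $z>0.5$ at the second scale, and the transfer of the similar-triangles interception argument) are points the paper handles only by saying ``using the same reasoning,'' so your version is, if anything, slightly more careful; the only slip is the harmless mislabeling of $R_1$'s first section as $BC\cup Cq_1$ rather than the half of $BC$ together with $Bq_1$.
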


\begin{figure}[t]
\begin{center}
\includegraphics[width=0.65\textwidth]{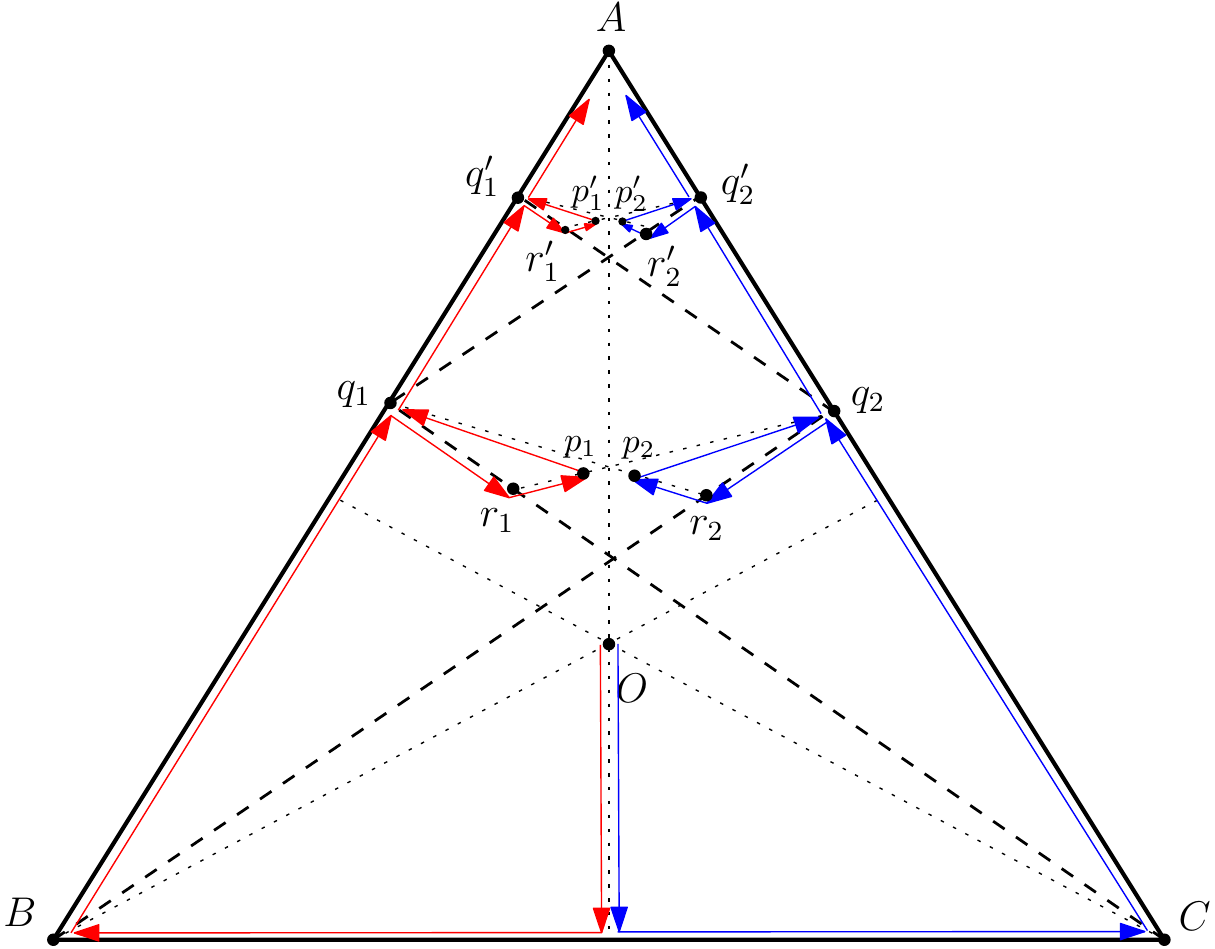}
\end{center}
\caption{Trajectories of two robots with two detours.}
\label{fig:2robots2LeavesNewAlg}
\end{figure}

It is easy to see that the reasoning we used to add a second detour can be applied to the triangle $q_1'Aq_2'$ in the algorithm with two detours per robot to obtain an  Equal-Travel with  Detour evacuation algorithm  with $3$ detours and, recursively,  we can obtain an evacuation algorithm for two robots with $j$ detours for any $j\geq 2$. 
With $j$ detours, the trajectory of a robot consists of $j+1$ segments of the boundary, any two segments separated by a detour. The maximal evacuation time in each segment is when the exit is found at the beginning of a segment, and we have to make the maximal evacuation times equal. This leads to a  system of $j$ equations for the optimal partitioning of the boundary into segments. However, with each additional detour the upper part of the triangle becomes much smaller and the additional segments very short. Already after the second detour, $|q_1'A| <0.1$, and thus the improvement in the evacuation time is  getting extremely tiny for more than two detours. 

\subsubsection{Equal-Travel Early-Meeting Algorithms for $k\geq 3$ Robots}
We derive here better upper bounds for $k\leq 5$, however, our strategies can be easily extended to obtain evacuation algorithms for other values of $k$.

In the simple  Equal-Travel strategy used to derive an upper bound on the evacuation of the triangle by $k$ robots, the robots meet in the interior of the triangle {\em after} the entire boundary has been explored, and then travel together to the exit. For two robots the evacuation time was improved by adding a detour(s) to the trajectories before the entire boundary was explored, as shown in the previous subsection. This allowed the robot that found the exit to inform the other robot about the exit early, making the travel to the exit shorter. The idea of  detours cannot be used with $k\geq 3$, since the robot that found the exit would need to intercept the all other robots on their disjoint detours, leading to a significant delay. In the following we show that for $k\geq 3$ the evacuation time of a triangle can be improved if the robots stop the exploration of the boundary \emph{early} and go to a common meeting point before the boundary is explored entirely. After this \emph{early meeting}, either the robots go together to the exit, or they all go together to explore the rest of the boundary where the exit is now known to be. 

In the sequel, we describe  the salient features of an Equal-Travel Early-Meeting algorithm for $k$ robots that combines the equal travel strategy with an early meeting. 
\begin{itemize}
\item It selects the location of the early meeting point.
\item It divides the boundary into $k+1$ sections, of which $k$ require the same travel times, and are assigned to unique robots, and the last section is a {\em common section}, to be explored together by all robots.
\item In the first phase each robot first explores its assigned section (the last section remains unexplored) and returns to the meeting point. If the exit has been found by one of them, all robots proceed to the exit.
\item If the exit has not been found, it must lie in the unexplored last section. In the second phase the robots go together to explore the common last section and evacuate together.
\end{itemize}

By optimizing  the position of the meeting point, the location of the sections, and the length of the common section, we obtain Early-Meeting algorithms with better performance than those using only the Equal-Travel strategy. Next we give the details of the Early-Meeting algorithm for each $k\in\{3, 4, 5\}$.

\begin{figure}[t]
\begin{center}
\includegraphics[width=6.5cm]{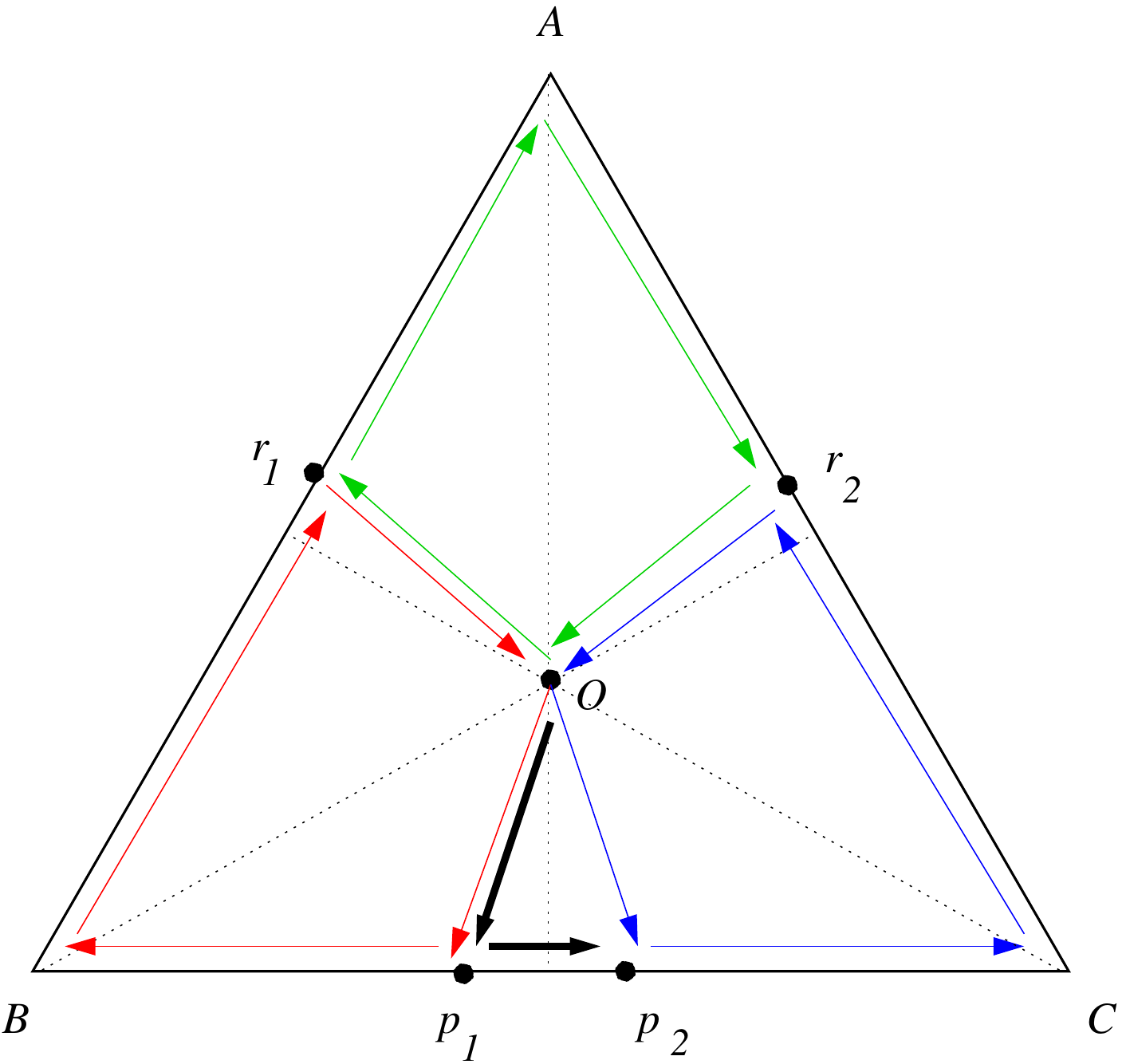}
\end{center}
\caption{Equal-Travel Early-Meeting trajectories for three robots. The common part of trajectories of the robots is shown as thick black line}
\label{fig2and3robots}
\end{figure}

\begin{theorem}
\label{thm:3Robots}
Consider three robots initially located at the centroid of triangle $T$ with side length 1. There are  Equal-Travel Early-Meeting algorithm $\mathcal A$ for three robots with evacuation time $E_{\mathcal A}(T, 3)\leq 2.0887$.
\end{theorem}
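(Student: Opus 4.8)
The plan is to exhibit an explicit Equal-Travel Early-Meeting algorithm $\mathcal{A}$ and bound its worst-case evacuation time directly. First I would set up coordinates for $T = ABC$ and fix a meeting point $M$ together with a partition of the boundary into four arcs: three arcs $S_1, S_2, S_3$, each assigned to one robot, and a common arc $S_0$ that is left unexplored in the first phase. I would exploit the mirror symmetry of the configuration, about the axis through one vertex and the midpoint of the opposite side, so that two of the three robots follow congruent reflected trajectories; this cuts the free parameters down to the location of $M$ and the endpoints separating the arcs. The algorithm then runs in two phases: in phase~1 each robot leaves $O$, explores its assigned arc $S_i$, and returns to $M$, where the arc endpoints are chosen precisely so that all three round trips have the same length and hence all robots reach $M$ simultaneously at a common time $\tau_1$. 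If some robot found the exit, all three learn its location at $M$ and travel there together; otherwise the exit lies in $S_0$, and in phase~2 all robots travel together from $M$ to sweep $S_0$ and evacuate.

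Next I would reduce the worst-case analysis to two scenarios. If the exit $e$ lies in an assigned arc $S_i$, then since every robot reaches $M$ at time $\tau_1$ regardless of where $e$ sits in its arc, the evacuation time is exactly $\tau_1 + |Me|$; the worst such $e$ is the point of $\bigcup_i S_i$ farthest from $M$, giving a bound $T_1 = \tau_1 + \max_{e \in \bigcup_i S_i} |Me|$. If instead $e \in S_0$, the robots reach it in phase~2 at time $\tau_1$ plus the length of the path from $M$ that sweeps $S_0$ up to its far end; call this bound $T_2$. I would argue that these two families exhaust all exit positions on the boundary and that within each family the stated extreme point is genuinely the worst case, so that $E_{\mathcal{A}}(T,3) = \max\{T_1, T_2\}$. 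Here I would also check the key feasibility point of the face-to-face model: a robot that discovers $e$ early cannot be relied on to shorten anything before the common meeting at $M$, so $\tau_1 + |Me|$ is the correct quantity to control, and the early meeting is what makes it beat the plain Equal-Travel bound of Theorem~\ref{th:krobots}'s regime.

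Finally I would optimize. Writing $\tau_1$, $T_1$, and $T_2$ as explicit functions of the position of $M$ and the arc endpoints (each is a sum of Euclidean segment lengths), I would impose the equal-travel constraints and then balance $T_1 = T_2$, since decreasing one while the other remains larger cannot be optimal. Minimizing $\max\{T_1,T_2\}$ subject to the constraints is a small nonlinear optimization that I would solve numerically, expecting the optimal placement to yield $\max\{T_1,T_2\} \le 2.0887$. The main obstacle is the geometric bookkeeping: for a candidate placement I must correctly identify which boundary point maximizes $|Me|$ over the assigned arcs and what the exact sweep path for $S_0$ is, because the farthest point can switch between a vertex and an arc endpoint as the parameters vary, and the phase-2 path changes form when $S_0$ straddles a vertex. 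Once the worst-case point in each region is pinned down as a function of the parameters, the remaining minimization is routine and is carried out numerically to obtain the claimed value.
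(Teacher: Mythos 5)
Your proposal is correct and follows essentially the same route as the paper: the paper also partitions the boundary into three assigned sections plus a common section, uses the centroid as the meeting point, bounds the post-meeting travel by the distance $x$ from the centroid to a vertex (and places $p_2$ so the common-section sweep also costs $x$), imposes the equal-travel equations $t_1=t_2=t_3$, and optimizes the remaining free endpoint numerically to get $2.0888$. The only cosmetic difference is that you leave the meeting point $M$ and the worst boundary point $\max_e|Me|$ as free quantities to be pinned down during optimization, whereas the paper fixes $M=O$ at the outset and uses the uniform bound $x$.
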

\begin{proof}
Consider the following instance of the  Early-Meeting strategy. Place points $p_1, r_1,r_2$ on sides $BC, BA, AC$ as shown in Figure~\ref{fig2and3robots}(b), the exact locations to be determined later.  Centroid $O$ is designated as the meeting point of the robots. Point $p_2$ is placed so that the distance $|Op_1|+|p_1p_2|=x$, the line segment $p_1p_2$ being designated as the common section. $R_1$ is assigned the section of the boundary from $p_1$ to $B$ and to $r_1$, $R_2$  the section from $r_1$ to $A$  to $r_2$, and $R_3$ the section from $r_2$ to $C$  to $p_2$. If one of the robots discovers the exit in its section, the other robots need to travel to it from the meeting point at the centroid, which adds distance at most $x$.
In this setup the maximum distance  robots travel are\\
$t_1=|Op_1|+|p_1B|+|Br_1|+|r_1O|+x$ for $R_1$,\\
$t_2=|Op_2|+|p_2C|+|Cr_2|+|r_2O|+x$ for $R_2$, and\\
$t_3=|Or_1|+|r_1A|+|Ar_2|+|r_2O|+x$ for $R_3$. \\
Solving the set of equations
$t_1=t_2$, $t_2=t_3$ 
and then optimizing the position of $p_1$, we get that the optimal placement of points is $|Bp_1|\approx 0.38601$,
$|Br_1|\approx 0.5252$ and $|Cr_2|\approx 0.5454$, and the evacuation time of the algorithm
at most $2.0888$. All equations were solved and optimizations done using the Maplesoft \cite{th2016}.
\end{proof}

Clearly, the approach used in Theorem~\ref{thm:3Robots} can be generalized to any  $k>3$. Start with selecting a position $p_1$ for the beginning of the common section and partition arbitrarily the boundary by placing points $r_1,r_2,\ldots,r_{k-1}$, see Figure \ref{fig4and5robots} for $k=4$ and 5. Point $p_2$ is placed so that the distance $|Op_1|+|p_1p_2|=x$. Similarly as above, we obtain a system of $k-1$ equations for the values  of $r_1,r_2,\ldots,r_{k-1}$ that produce equal travel time for all robots in the first phase. Finally, by optimizing the value of $p_1$ we get the final assignment of the trajectories of robots. In this manner we obtained the following theorem. 

\begin{figure}
\begin{center}
\includegraphics[width=12.5cm]{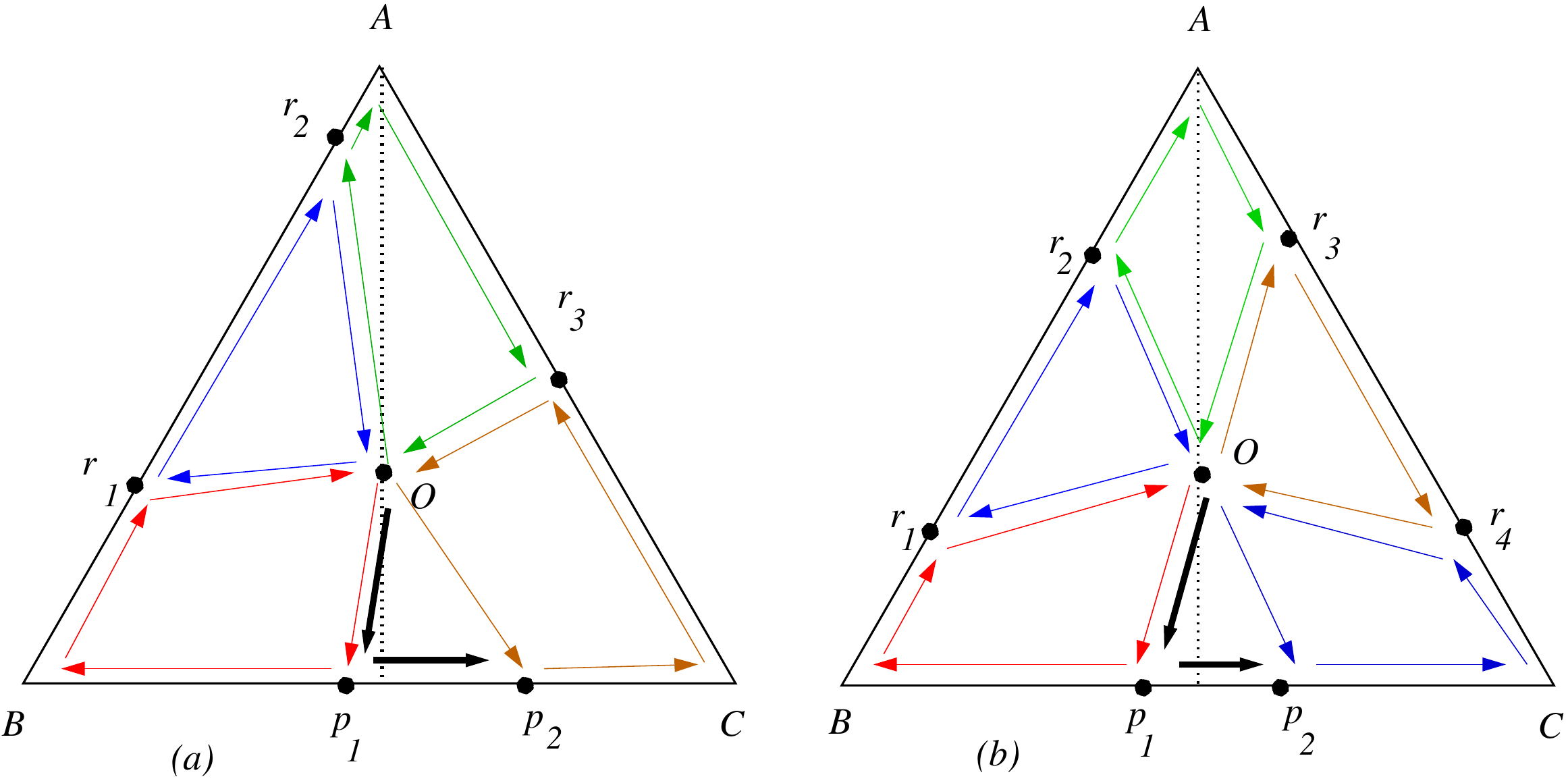}
\end{center}
\caption{Establishing Equal-Travel Early-Meeting Trajectories of (a) four robots, and (b) five robots. The common part of trajectories of the robots is shown as thick black line}
\label{fig4and5robots}
\end{figure}

\begin{theorem}
\label{thm:4Robots}
There are Equal-Travel Early-Meeting algorithms for $k=4$ and  $5$ robots with $E^*(T,4) \leq 1.9816$ and $E^*(T,5) \leq 1.8760$,  respectively.
\end{theorem}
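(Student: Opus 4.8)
The plan is to instantiate the generalized Equal-Travel Early-Meeting construction described immediately above the statement, specializing it to $k=4$ and $k=5$ and then carrying out the resulting optimization. I take the centroid $O$ as the common meeting point, as in the three-robot case. I choose a point $p_1$ on the boundary where the common (unexplored) section begins and place $p_2$ so that $|Op_1|+|p_1p_2|=x$, the segment $p_1p_2$ being the common section. The remaining $k-1$ division points $r_1,\dots,r_{k-1}$ cut the rest of the boundary into $k$ contiguous arcs, and arc $i$ is assigned to robot $R_i$, which leaves $O$, reaches one endpoint of its arc, sweeps to the other endpoint, and returns to $O$.

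First I would record the two relevant worst-case quantities. In the first phase, if some robot discovers the exit, the robots can act on this only after they have all returned to $O$; since the maximum distance from $O$ to any boundary point is $x$, attained at the vertices (all of which are arc endpoints), they then travel at most an additional $x$. Thus the first-phase evacuation time for $R_i$ is $t_i=\ell_i+x$, where $\ell_i$ is the length of $R_i$'s out-and-back loop. Because information is exchanged only once all robots are back at $O$, the meeting time is $\max_i \ell_i$, so minimizing the worst case $\max_i \ell_i + x$ over the partition forces us to equalize the loops to a common length $\ell$. In the second phase, if the exit was not found it lies on $p_1p_2$; the robots leave $O$ together, enter the common section at $p_1$, and sweep to the far end, so by the choice of $p_2$ the worst-case second-phase time is $\ell+x$ as well. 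Hence the overall worst-case evacuation time equals $\ell+x$, and the problem reduces to minimizing $\ell$.

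Next I would express each loop length $\ell_i$ explicitly. Measuring arc positions along each side from its midpoint, which is the foot of the perpendicular from $O$, the distance from $O$ to a boundary point at offset $d$ from a midpoint is $\sqrt{y^2+d^2}$ with $y=\sqrt{3}/6$, whereas distances measured along the sides are linear; these ingredients give closed-form expressions for every $\ell_i$. Imposing $\ell_1=\ell_2=\cdots=\ell_k$ yields a system of $k-1$ equations in $r_1,\dots,r_{k-1}$ for each fixed $p_1$, which determines the $r_i$, and hence $\ell$, as functions of $p_1$. Finally I would minimize $\ell(p_1)+x$ over the admissible range of $p_1$. Since the distance functions are transcendental, this last step has no closed form and is carried out numerically, producing the optimal placements and the claimed bounds $E^*(T,4)\le 1.9816$ and $E^*(T,5)\le 1.8760$.

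The main obstacle is not conceptual but computational and bookkeeping-heavy: setting up the square-root distance expressions for all arcs under a single consistent parametrization, keeping track of which side each division point lies on and of the worst-case endpoint within each arc, and then solving and optimizing the resulting nonlinear system reliably. One must also check feasibility, namely that the equalizing solution places every $r_i$ and both $p_1,p_2$ in valid positions giving contiguous, non-overlapping arcs that cover the whole boundary, and confirm that the worst case within each arc is indeed attained at a vertex so that the uniform $+x$ is correct. These verifications are routine but are what justifies that the numerically obtained configuration is a legitimate instance of the Equal-Travel Early-Meeting strategy.
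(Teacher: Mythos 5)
Your proposal follows essentially the same route as the paper: the paper's proof of this theorem is omitted as "similar to that of Theorem~\ref{thm:3Robots}," and the construction you describe (centroid as meeting point, common section $p_1p_2$ with $|Op_1|+|p_1p_2|=x$, equalizing the $k$ loop lengths via a system of $k-1$ equations, then numerically optimizing over $p_1$) is exactly the generalization the authors sketch in the paragraph preceding the statement. Your added feasibility checks and the verification that the worst case within each arc justifies the uniform $+x$ are reasonable bookkeeping but do not change the argument.
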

\begin{proof}
The proof is similar to that of Theorem \ref{thm:3Robots} and so it is omitted.
\end{proof}

Unlike for two robots, where we could obtain better evacuation algorithms  by recursively adding more detours, we do not have any evacuation algorithm for three or more robots in which the evacuation time is improved by having more than one meeting point. This is due to the fact that after reaching the meeting point, the remaining exploration  is not similar to the initial configuration and is thus not amenable to recursion.

\section{Evacuation of the Square}
\label{sec:square}
In this section, we consider the problem of evacuating $k=2,3,4$ robots from a square $S$ with side length one.  We denote  the centroid of $S$ by $O$ and denote the vertices of $S$ by $A, B$, $C$ and $D$ as in Figure~\ref{fig:square2robotsLowerBound}. Thus we sometimes write $ABCD$ to refer to $S$. We define $E_{\mathcal{A}}(S,k)$ to be the worst-case evacuation time of the unit-side square $S$ by $k$ robots using algorithm $\mathcal{A}$, and we define  $E^*(S,k)$ to be the {\em optimal} evacuation time of the square by $k$ robots in the face-to-face model. We apply to the square the techniques and strategies that we developed for the equilateral triangle in the preceding section. However important details of proofs and algorithms are different and thus the square needs to be considered separately.

\subsection{Lower Bound}

\begin{figure}[t]
\begin{center}
\includegraphics[width=0.65\textwidth]{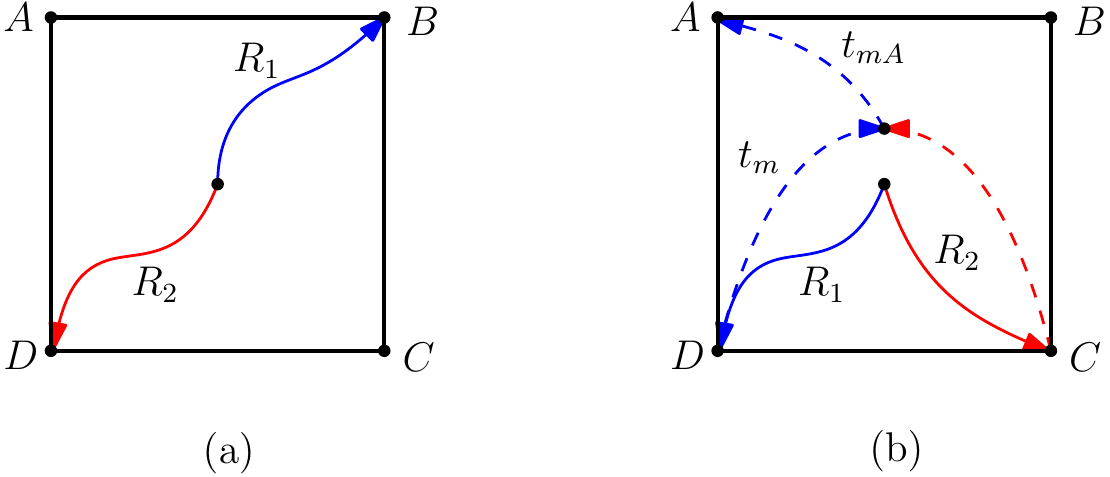}
\end{center}
\caption{An illustration in support of the proof of Theorem~\ref{thm:lb2Square}.}
\label{fig:square2robotsLowerBound}
\end{figure}

\begin{theorem}
\label{thm:lb2Square}
Consider two robots $R_1$ and $R_2$ initially located at the centroid of  square $S$. If the robots communicate using the face-to-face model, then the evacuation time of these two robots is $E^*(S,2)\geq 1+3\sqrt{2}/2\approx 3.121$.
\end{theorem}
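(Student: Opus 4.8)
The plan is to adapt the adversary argument of Theorem~\ref{lem:twoRobotsLowerBound} to the square, retaining its two ingredients: a Meeting Lemma that certifies when the two robots provably cannot have exchanged information, and the fact that a robot which is later revealed to sit a full diagonal away from the exit must still pay that distance. Set $T^{*}=1+3\sqrt{2}/2$ and suppose for contradiction that $\mathcal A$ achieves $E_{\mathcal A}(S,2)<T^{*}$. The only geometric facts I will use are that $|Oq|=\sqrt{2}/2$ for every corner $q$, that two adjacent corners are at distance $1$ and two opposite corners at distance $\sqrt{2}$, and the two consequences that a robot made to visit two corners cannot reach the second before time $\sqrt{2}/2+1$ (the cheapest route being $O$ to a corner and then along a side to an adjacent corner), while reaching an opposite corner always costs an additional $\sqrt{2}$, since communication conveys the exit location but cannot shorten a straight-line distance.

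First I would record the square analogue of the Meeting Lemma (Lemma~\ref{lem:noMeetingBeforeT}): if one robot is at a point $p_1$ at time $t'$ and the other is at $p_2$ at time $t$ with $|p_1p_2|\ge d$ and $|t-t'|<d$, then the two robots cannot meet between $t'$ and $t$, because a common meeting point $z$ would give $d\le|p_1z|+|zp_2|\le(t-t')<d$. The instantiation that does the work is $d=\sqrt{2}$ with $p_1,p_2$ a pair of opposite corners. I would also set up the square version of Observation~\ref{two-points}: if at time $T^{*}-|pq|$ both points of a pair $p,q$ are still unvisited, then placing the exit at whichever of $p,q$ is reached later forces evacuation time at least $T^{*}$.

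The core argument then runs the fixed trajectories and uses a pigeonhole step. Since the two robots together must explore all four corners, one of them, say $R$, visits at least two corners, and it cannot reach its second corner $q$ before time $s\ge 1+\sqrt{2}/2$. The adversary places the exit at the corner opposite $q$, at distance $\sqrt{2}$ from $q$. Because \emph{both} robots must physically reach the exit, $R$ in particular must travel this diagonal, so provided $R$ arrives at $q$ at time $s$ without yet knowing the exit, it cannot reach the exit before $s+\sqrt{2}\ge 1+3\sqrt{2}/2=T^{*}$. The only way the algorithm could beat this is for the other robot to discover the exit at the opposite corner and redirect $R$ before $R$ has committed the diagonal; ruling this out is exactly what the Meeting Lemma with $d=\sqrt{2}$ provides, since $q$ and its opposite corner form a diagonal pair and any such redirection would require a meeting within a time window shorter than $\sqrt{2}$.

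I expect the real work, exactly as in the triangle, to be an exhaustive case split discharged through the Observation~\ref{two-points} analogue, and this is the main obstacle. The delicate sub-case is when the opposite corner is discovered so early that the gap between its visit time and $s$ exceeds $\sqrt{2}$; there the Meeting Lemma no longer applies directly and the adversary must instead choose a different exit, typically forcing the over-eager robot to back-track across a side. To organize all such configurations cleanly I would follow the precise case structure of Theorem~\ref{lem:twoRobotsLowerBound}, very likely enlarging the special set $\mathcal S$ to include the four side–midpoints so that degenerately fast explorations are excluded and the timing anchors are available in every branch. The square differs from the triangle in having two genuinely distinct kinds of corner pairs, adjacent versus opposite, so the crux is to verify that for every way the two trajectories partition the exploration of the corners, either some robot is stranded a full diagonal from the adversary's exit or the boundary simply cannot be explored in time, and that the resulting chains of inequalities always bottom out at exactly $1+3\sqrt{2}/2$ rather than a smaller constant — the threshold $d=\sqrt{2}$ of the Meeting Lemma being precisely what drives that value.
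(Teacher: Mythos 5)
Your ingredients are the right ones---the Meeting Lemma instantiated with threshold $\sqrt{2}$ for a diagonal pair, and the accounting $\sqrt{2}/2+1+\sqrt{2}=1+3\sqrt{2}/2$ for a robot that reaches its second, adjacent corner uninformed and is then sent across the diagonal---and they are exactly what the paper's proof uses. But the proposal stops where the proof actually has to be done: you yourself label ``the real work'' as an exhaustive case split (adjacent versus opposite corner pairs, the sub-case where the opposite corner is discovered too early, a possible enlargement of $\mathcal{S}$ by the four side-midpoints) and you do not carry it out. As written, the pigeonhole ``some robot $R$ visits at least two corners over the whole execution'' is too weak: it leaves open the configurations where $R$'s two corners are themselves diagonal, where the corner opposite $q$ is $R$'s own first corner (so $R$ finds the exit itself and never proceeds to $q$), and where the opposite corner is the last vertex visited by anyone (so neither robot knows the exit and the Meeting Lemma is beside the point). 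Your proposed repair---replaying the five-lemma midpoint machinery of Theorem~\ref{lem:twoRobotsLowerBound}---is not shown to close, and turns out to be unnecessary.

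The missing idea is to anchor the argument at $t_3$, the first time three distinct corners have been visited. If $t_3\geq 3\sqrt{2}/2$, place the exit at $v_4$; since $|v_3v_4|\geq 1$ and the robot sitting at $v_3$ at time $t_3$ must still reach $v_4$, evacuation costs at least $1+3\sqrt{2}/2$. Otherwise $t_3<3\sqrt{2}/2$, and since visiting three corners costs at least $2+\sqrt{2}/2$ and visiting two opposite corners costs at least $3\sqrt{2}/2$, by time $t_3$ one robot (say $R_1$) has visited exactly two \emph{adjacent} corners, say $D$ then $A$ with $1+\sqrt{2}/2\leq t_A<3\sqrt{2}/2$, and the other robot exactly one. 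Writing $B$ for the corner opposite $A$, only two cases remain: either $B=v_4$, in which case with the exit at $B$ robot $R_1$ is at $A$ at time $t_A$ with the diagonal still to cross; or $B$ is the single corner visited by $R_2$ by time $t_3$, in which case with the exit at $B$ the Meeting Lemma (using $t_B\geq\sqrt{2}/2$ and $t_A<\sqrt{2}/2+\sqrt{2}$) forbids any meeting that could redirect $R_1$ before $t_A$, and again $R_1$ pays $t_A+\sqrt{2}\geq 1+3\sqrt{2}/2$. No midpoints and no extended case analysis are needed; until you supply this (or an equivalent) structural step, the proposal is a plan rather than a proof.
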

\begin{proof}
Suppose for a contradiction that there exists an algorithm $\mathcal{A}$ that evacuates two robots $R_1$ and $R_2$ from $S$ before time $1+3\sqrt{2}/2$. Let $v_1, v_2, v_3, v_4$ be the order in which the four vertices of $S$ are visited by the algorithm, at times $t_1, t_2, t_3, t_4$ respectively.  If $t_3 \geq 3 \sqrt{2}/2$, then if the exit is at $v_4$, since $dist(v_3, v_4) \geq 1$, we obtain an evacuation time of at least $1+3\sqrt{2}/2$, a contradiction. So we conclude that $t_3 < 3\sqrt{2}/2$. Since a single robot requires time at least $2 + \sqrt{2}/2 > 3 \sqrt{2}/2$ to visit 3 vertices, and time at least $3 \sqrt{2}/2$ to visit two diagonally opposite vertices, it must be that one robot visited 2 {\em adjacent} vertices and the other robot the remaining one of $\{v_1, v_2, v_3 \}$ at or before  time $t_3$. Without loss of generality, we assume that $R_1$ visited $D$ and then $A$. Let $t_v$ denote the time of first visit for vertex $v$. Notice that $1 + \sqrt{2}/2 \leq t_A < 3 \sqrt{2}/2$.  If $v_4$ is $B$, and the exit is at $B$ then clearly $R_1$ needs time $1 + 3 \sqrt{2}/2$ to arrive at $B$, a contradiction. Therefore it must be that $R_2$ visited $B$ at time $\leq t_3$ and  $v_4=C$. We now claim (as in the Meeting Lemma for the triangle) that the trajectory for $R_1$ cannot be such that $R_1$ meets $R_2$ after $R_2$ visits $B$ and before $t_A$, because this would imply that  $t_A \geq t_B + |BM| + |MA| \geq \sqrt{2}/2 + \sqrt{2} = 3 \sqrt{2}/2$, a contradiction. Therefore, if the exit is at $B$, $R_1$ cannot reach $B$ before time $t_A + \sqrt{2} \geq 1 + 3 \sqrt{2}/2$.
\end{proof}

\subsection{Evacuation Algorithms}
In this subsection  we derive upper bounds on the evacuation time by giving the evacuation algorithms for specific values of $k$. Similarly as for the triangle, we give an  Equal-Travel with Detour algorithm for $k=2$, and  Equal-Travel with Early Meeting algorithms for $k \geq 3$.

\subsubsection{Equal-Travel with Detour for Two Robots}
For $k=2$ the simple equal travel algorithm, which goes initially to the center of side $DC$, has the maximal evacuation time of $2.5$,  achieved when  the exit is located at $C$ or $D$.  
Consider the  Equal-Travel with Detour algorithm evacuation algorithm $\mathcal{A}$ shown in Figure~\ref{fig:square2robotsAlg}. We first explain the trajectories of each robot and then will discuss the details of each section of the trajectories, depending on where the exit is located.

Starting from the centroid $O$, the robots  go together to $F$. Then, $R_1$ explores the boundary from $F$ to  $D$, to $A$, and then to $E$. There it takes the detour in the direction of $C$ to $G$, from $G$ in the direction of $J$ until $I$, and then returns to $E$.  From $E$ it explores the second section of the boundary to $M$. The  positions of $E$, $G$, $I$, and $J$ are specified below. Robot $R_2$ follows the symmetric trajectory that explores the other part of the boundary with one symmetric detour. 

\begin{figure}[t]
\begin{center}
\includegraphics[width=0.45\textwidth]{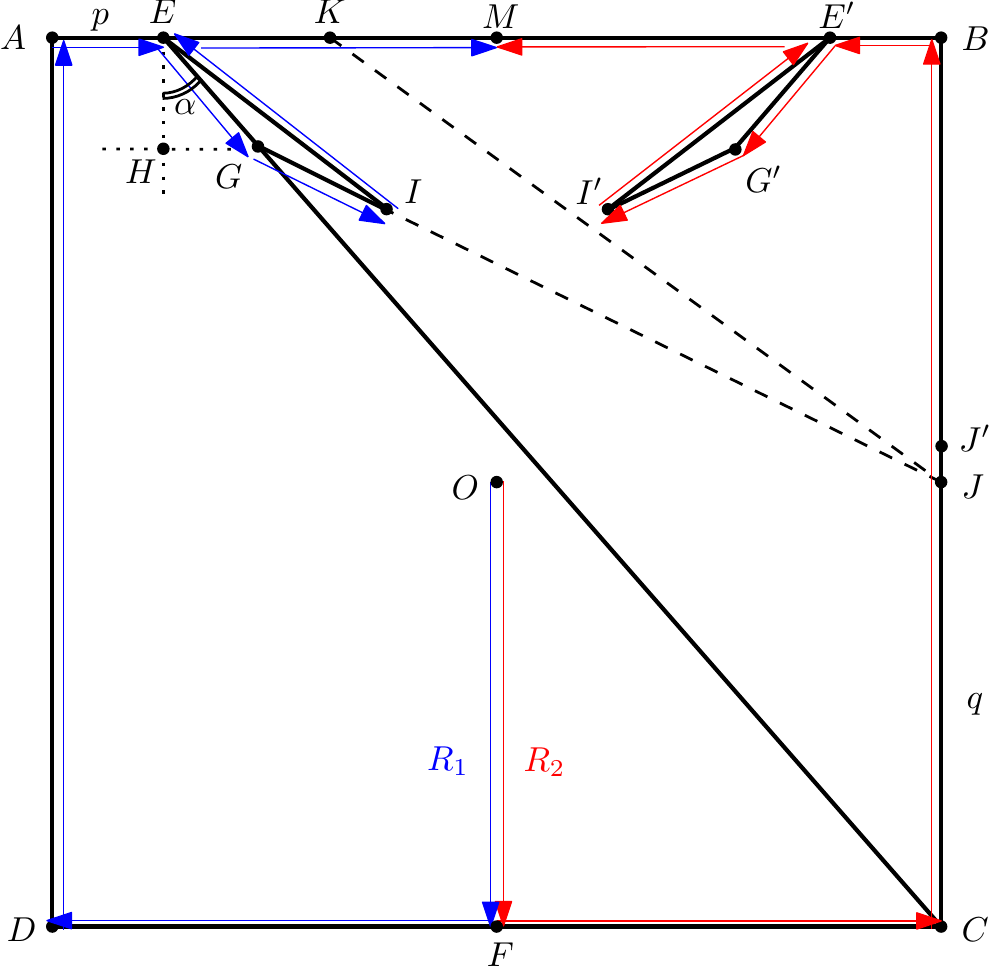}
\end{center}
\caption{Trajectories of two robots for evacuating a square.}
\label{fig:square2robotsAlg}
\end{figure}

\begin{enumerate}
\item The positions of $E$ and $J$ are parameters whose values are optimized by numerical calculations, $p<0.25$ and $q<1-|EH|$.
\item The  point $G$ is chosen so that $|DA|+|AE|+|EG|=|CG|$. This ensures that if the exit is found on the segment $FC$, then $R_2$ intercepts $R_1$ at $G$ and informs it about the exit.
\item The point $I$ is chosen such that $|CJ|+|JI|=|DA|+|AE|+|EG|+|GI|$. This ensures that if the exit is somewhere on the segment $CJ$, then $R_2$ intercepts $R_1$ at point $I$ and informs it about the exit.
\end{enumerate}

Now, consider the case when the exit is at a point $J'$ located on segment $JB$ arbitrarily close to  point $J$ (i.e., $|JJ'|=\epsilon$ for some small $\epsilon>0$). Then $R_2$ can intercept $R_1$ only after $R_1$ finishes its detour, since $|JI|<|JJ'|+|J'I|$. That is, $R_2$ can intercept $R_1$ at some point $K$ on segment $EM$ and inform it about the exit. Clearly, $K$ must be to the left of $M$, since even without taking a shortcut the trajectories of the two robots are of the same length and they meet at $M$. Notice that this implies  a discontinuity in the evacuation time in  at $J$ from above $J$.

If the exit is at a point on segment $JB$ then $R_2$ can intercept $R_1$ on the segment $KM$. If the exit is somewhere on $BM$, then the two robots meet at $M$ and then go to the exit together from there.

We first show that due to our selection of the trajectories, the maximum evacuation time occurs at one of the three points $C, J, B$, or is  equal to $\lim_{e\leftarrow 0}EvacuationTime(J')$. By constraint (1), the maximum evacuation time  when the exit is located on $FC$ occurs when it is at $C$. In Lemma~\ref{lem:shorter} below, we show that the maximum evacuation time  when the exit is on $CJ$ occurs when it is at $C$.  However, as mentioned above, the evacuation time is greater than that at $J$ if the exit is just above $J$, say at $J'$ (i.e., at a point that is right above $J$); this is because $R_2$ will then need to intercept $R_1$ at a point $K$ on $EB$. After this the evacuation time decreases as $R_2$ finds the exit further away  from $C$ up to $J$. Finally, it is easy to see that after this jump  the evacuation time decreases as $R_2$ approaches  $B$, and also from $B$ to $M$.

\begin{figure}[t]
\begin{center}
\includegraphics[width=0.45\textwidth]{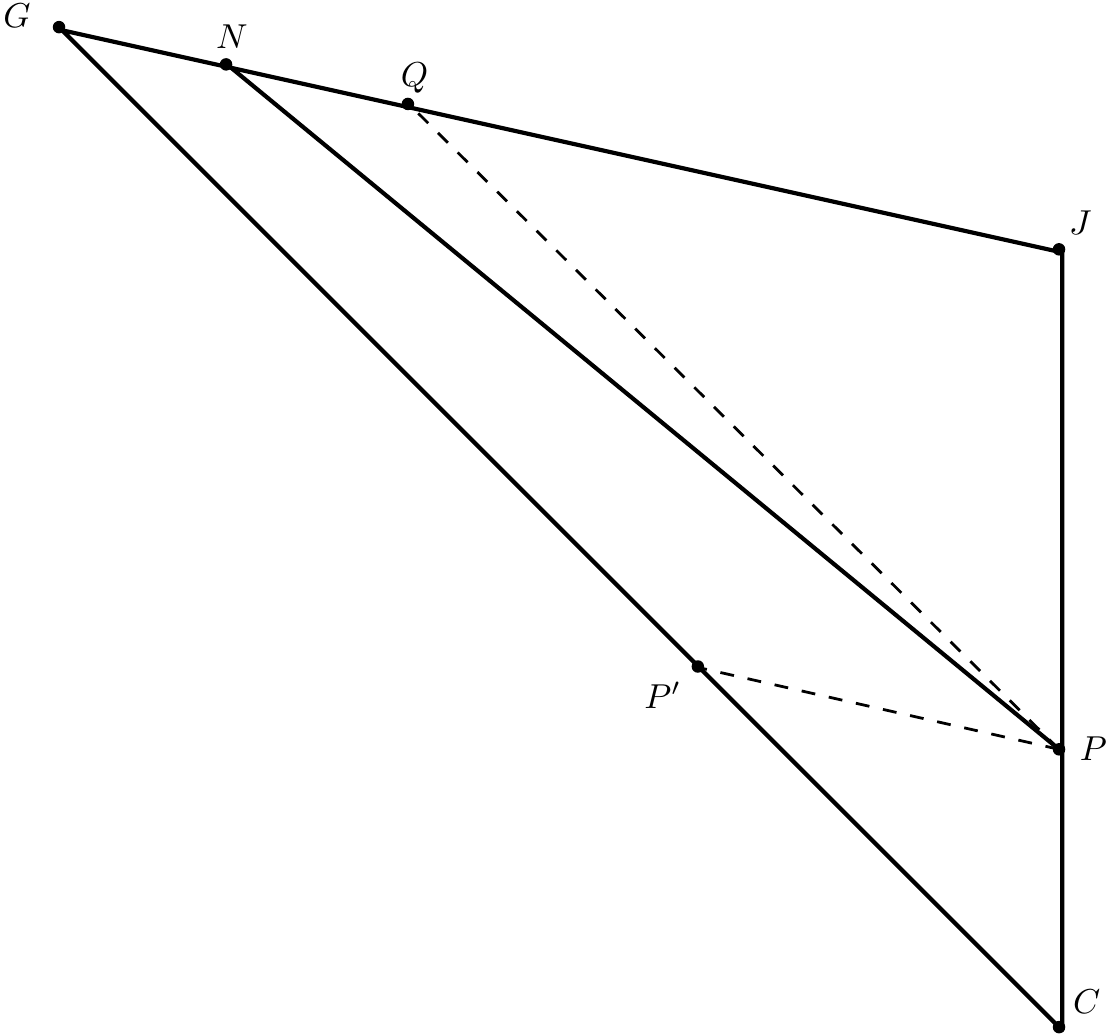}
\end{center}
\caption{An illustration  supporting the proof of Lemma~\ref{lem:shorter}.}
\label{fig:shorter}
\end{figure}

\begin{lemma}
\label{lem:shorter} 
Assume that $R_2$ finds the exit at $P$ on segment $CJ$ where $P \neq C$.  Then the evacuation time  at $P$ is less than the evacuation time at $C$.
\end{lemma}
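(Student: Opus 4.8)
The plan is to write both evacuation times as the length of $R_1$'s trajectory — the two robots always reach the exit together — and then reduce the claim to a single geometric inequality. When the exit is at $C$, robot $R_2$ explores $FC$, meets $R_1$ at $G$ (exactly what constraint~(2) is designed to guarantee), and they return to $C$; since $|OF|=|FD|=\tfrac12$ and, by constraint~(2), $|DA|+|AE|+|EG|=|CG|$, the common arrival time is
\[
\mathrm{Evac}(C)=|OF|+|FD|+|DA|+|AE|+|EG|+|GC|=1+2|CG|.
\]
When the exit is at $P\in CJ$ with $P\neq C$, robot $R_2$ explores $FC$ and part of $CB$, reaches $P$ at time $|OF|+|FC|+|CP|=1+|CP|$, and diverts to intercept $R_1$ at a point $W$ on the detour leg $GI$ (the regime governed by constraint~(3)); after the meeting both return to $P$, giving
\[
\mathrm{Evac}(P)=1+|CP|+2|PW|.
\]

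Next I would record the interception identity. Since $R_1$ follows its fixed trajectory while $R_2$ reaches $W$ at the same instant, equating the two travel times to $W$, cancelling the common initial length $|OF|+\tfrac12=1$ and again using constraint~(2) yields
\[
|CG|+|GW|=|CP|+|PW|.
\]
Subtracting the two evacuation times along $R_1$'s path and substituting this identity collapses everything:
\[
\mathrm{Evac}(P)-\mathrm{Evac}(C)=|GW|+|PW|-|CG|=2|GW|-|CP|.
\]
Hence the lemma is equivalent to the purely geometric claim $|GW|<\tfrac12|CP|$ — equivalently $|PW|<|CG|-\tfrac12|CP|$ — for the meeting point $W\in GI$ and the exit $P\in CJ$ linked by the interception identity above.

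The main obstacle is this geometric inequality, and it is precisely here that the placement of $G$ enters. By constraint~(2) with $|DA|=1$ and $|AE|$ small, $G$ lies on $EC$ very close to the top side, so its height is large; consequently, as $P$ climbs $CJ$ away from $C$, the distance $R_1$ still has to cover to reach the exit falls off faster than $R_2$'s interception detour $|PW|$ grows. To make this precise I would parametrise the exit by $s=|CP|$ and the meeting point by its arclength $\delta=|GW|$ along $GI$; because $W$ moves linearly along $GI$, squaring the interception relation $|PW|=|CG|+\delta-s$ causes the $\delta^2$ terms to cancel, so $\delta$ is an explicit rational function $\delta(s)$, and the target reduces to the one-variable inequality $\delta(s)<s/2$ on $0<s\le|CJ|$. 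I expect this to hold with a comfortable margin (numerically $d\delta/ds$ near $s=0$ is well under $\tfrac12$, since the detour heads toward $J$, away from the exit, while $G$ sits high), so the verification is a quadratic/monotonicity check in $s$ done by hand or confirmed symbolically as elsewhere in the paper. The remaining care is bookkeeping at the endpoints: as $s\to0^{+}$ we get $W\to G$ and the difference vanishes, matching continuity of the evacuation time at $C$, while at $s=|CJ|$ the identity becomes constraint~(3) and gives $W\to I$, confirming that $W$ stays on the intended leg $GI$ throughout and that constraint~(3), rather than the $FC$ regime of constraint~(2), is the operative one.
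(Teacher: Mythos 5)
Your setup is correct and, after unwinding, lands exactly where the paper's proof starts: the evacuation times $1+2|CG|$ and $1+|CP|+2|PW|$ are right, the interception identity $|CG|+|GW|=|CP|+|PW|$ is right, and the reduction $\mathrm{Evac}(P)-\mathrm{Evac}(C)=|GW|+|PW|-|CG|=2|GW|-|CP|$ is a faithful algebraic repackaging of the inequality the paper proves, namely $|GN|+|NP|<|GC|$ for the interception point $N=W$ on $GI$. The problem is that you stop there. The entire mathematical content of the lemma is the inequality $|GW|<\tfrac12|CP|$, and for it you offer only ``I expect this to hold with a comfortable margin,'' a local remark about $d\delta/ds$ near $s=0$ (which, even if verified, would only give the inequality for $P$ near $C$), and a plan to check a one-variable rational inequality numerically or symbolically for the particular optimized values of $p$ and $q$. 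None of that is carried out, so as written this is not a proof; it is a correct reduction followed by a conjecture.

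For comparison, the paper closes this gap with a short synthetic argument that needs no computation and no specific values of $p,q$: take $Q$ on $GJ$ with $PQ\parallel CG$ and $P'$ on $GC$ with $PP'\parallel GJ$, so $GQPP'$ is a parallelogram; since $\angle GJC>\pi/2$, the angle $\angle P'PC$ is obtuse, hence $|P'C|>|PP'|=|GQ|$ in triangle $PP'C$, and therefore $|GC|=|GP'|+|P'C|>|QP|+|GQ|=|QP|+|GN|+|NQ|>|GN|+|NP|$ by the triangle inequality in $NQP$. If you want to salvage your route, you must actually establish $|GW|<\tfrac12|CP|$ for all $P\in CJ$, $P\neq C$ --- either by completing the explicit $\delta(s)<s/2$ computation globally on $(0,|CJ|]$ (and saying for which $p,q$ it holds), or by replacing it with a geometric argument of the above kind.
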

\begin{proof}
If the exit is found at $P$ then $R_2$ intercepts $R_1$ at point $N$ on the segment $GI$ as in Figure \ref{fig:shorter}. Consider the triangle $GJC$. Obviously, $\angle GJC>\pi/2$. Let $Q$ be a  point on  $GJ$ such that $PQ$ and $CG$ are parallel.  Consider the line passing through $P$ and parallel to $GJ$ and let $P'$ be the intersection point of this line and $GC$; see Figure~\ref{fig:shorter}. Since $\angle P'PC>\pi/2$, the edge $P'C$ has the largest length among all the three edges of triangle $PP'C$; hence, $|P'C|>|PP'|$. Therefore,
\[
|GC|=|GP'|+|P'C|=|QP|+|P'C|>|QP|+|PP'|=|QP|+|GI|=|QP|+|GN|+|NQ|.
\]
By the triangle inequality for triangle $NQP$, we have $|NQ|+|QP|>|NP|$. Therefore, 
$$|GC|>|GN|+|NP|.$$
To complete the proof, observe that the evacuation time at $C$ is $2+|AE| + |EG| + |GC|$ while the evacuation time at $P$ is $2+|AE| + |EG| + |GN| + |NP|$, proving the result.
\end{proof}

We now compute the evacuation time for each of these critical points. Let $p=|AE|$ and $q=|CJ|$. Then, the angle $\angle HEG$ as shown in Figure~\ref{fig:square2robotsAlg} is
\[
\alpha=\arcsin(\frac{1-p}{\sqrt{1+(1-p)^2}}).
\]
By having $\alpha$, we can compute $|EG|, |GI|$ and $|EI|$, which gives us the time\\ $detour=|EG|+|GI|+|EI|$ to traverse the detour.
Then, we have  the following evacuation times:\\
 \hspace*{4mm}$1+2\;|GC|$ if the exit is at $C$,\\
  \hspace*{4mm}$1+q+2\; |IJ|$ if the exit is at $J$,\\
  \hspace*{4mm}$1+q+2\; |KJ|$ if the exit is at $J'$, and \\
  \hspace*{4mm}$3+detour$ if the exit is at $B$,\\
where $|GC|, |IJ|$ and $|KJ|$ can be expressed only in terms of $p$ and $q$. Using Python code, we computed each of the above four evacuation times for $0.1\leq p\leq 0.2$ with step size $0.00005$ and $0.2\leq q\leq 0.8$ with step size $0.00005$ to find the values of $p$ and $q$ that optimize the evacuation times. We obtained that for $p\approx 0.1556$ and $q\approx 0.5010$ the evacuation time is $E_{\mathcal{A}}(S,2)\leq 3.4644$. Thus we have the following theorem.

\begin{theorem}
\label{thm:square2robotsAlg}
Consider two robots initially located at the centroid of square $S$ with side length 1. There is an Equal-Travel with Detour algorithm $\mathcal{A}$ for two robots with evacuation time $E_{\mathcal{A}}(S,2)\leq 3.4644$ in the face-to-face model.
\end{theorem}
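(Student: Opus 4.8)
The plan is to take the trajectory already specified in Figure~\ref{fig:square2robotsAlg}, with the placement constraints (1)--(3) fixing $G$, $I$ relative to the free parameters $p=|AE|$ and $q=|CJ|$, and to reduce the continuum of possible exit locations to a finite set of critical points; the worst case over that finite set is then a two-variable function of $p$ and $q$ to be minimized. By the symmetry of the two trajectories it suffices to let the exit range over the portion of the boundary explored by $R_2$, so the analysis amounts to walking the exit from $F$ around to the common meeting point $M$ and tracking where interception occurs.

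First I would handle the three easy regimes. On $FC$, constraint (2) forces $R_2$ to intercept $R_1$ exactly at $G$, and the resulting evacuation time increases with the exit's distance from $F$, so it is maximized at $C$. On the detour segment $CJ$, I would invoke Lemma~\ref{lem:shorter} verbatim: it shows that for any exit $P\neq C$ on $CJ$ the evacuation time is strictly below its value at $C$, so $CJ$ contributes its maximum only at the endpoint $C$. Finally, once the exit lies on $BM$, the two robots simply meet at $M$ and proceed together, and this regime is clearly dominated.

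The delicate point, and the main obstacle, is the transition at $J$. For an exit at $J'$ with $|JJ'|=\epsilon$ just above $J$, the shortcut inequality $|JI|<|JJ'|+|J'I|$ fails, so $R_2$ can only intercept $R_1$ after the detour is completed, at a point $K$ on $EM$; this yields a genuine one-sided jump, so $\lim_{\epsilon\to 0}\textrm{EvacuationTime}(J')$ must be recorded as a separate critical value alongside the value at $J$ itself. I would then argue that past $J'$, as the exit slides from $J'$ toward $B$, the interception point $K$ moves left and the evacuation time decreases monotonically, ensuring that no interior point beats the finite critical set $\{C,\,J,\,J',\,B\}$. Establishing these monotonicity claims cleanly and correctly handling the discontinuity at $J$ is where the care lies; everything after is computation.

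For the final step I would make the four candidate times explicit in $p$ and $q$. The detour geometry fixes the angle $\alpha=\arcsin\!\left(\frac{1-p}{\sqrt{1+(1-p)^2}}\right)$, from which $|EG|$, $|GI|$, $|EI|$, the detour length, and the quantities $|GC|$, $|IJ|$, $|KJ|$ all become closed-form functions of $p$ and $q$ alone. The four critical evacuation times are $1+2\,|GC|$ (exit at $C$), $1+q+2\,|IJ|$ (exit at $J$), $1+q+2\,|KJ|$ (exit at $J'$), and $3+\textrm{detour}$ (exit at $B$). Minimizing the maximum of these four expressions over $p\in[0.1,0.2]$ and $q\in[0.2,0.8]$ by a fine grid search locates the optimum at $p\approx 0.1556$, $q\approx 0.5010$, at which the worst-case value is at most $3.4644$, giving the claimed bound.
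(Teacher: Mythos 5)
Your proposal follows essentially the same route as the paper's own proof: the same trajectory with constraints fixing $G$ and $I$ in terms of $p=|AE|$ and $q=|CJ|$, the same reduction to the critical set $\{C, J, J', B\}$ via Lemma~\ref{lem:shorter} and the one-sided discontinuity at $J$, the same four candidate expressions, and the same numerical grid search yielding $p\approx 0.1556$, $q\approx 0.5010$, and the bound $3.4644$. The approach and conclusion are correct and match the paper.
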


In our algorithm  $\mathcal{A}$, the evacuation time of  $3.4644$ is reached at $C$ and also $J'$ and is strictly lower everywhere else. Notice that the evacuation time at $J'$ can be decreased if $R_1$ makes a ``smaller'' second detour on $EK$ just prior to reaching $K$, so that robot $R_2$ can intercept $R_1$ on the second detour. By decreasing the evacuation time at $J'$ we can then improve the evacuation times at $C$, which slightly decreases the overall evacuation time. Thus, additional detour(s) can be used recursively in the evacuation algorithms in the square for the $k=2$, similarly as shown for the triangle.     

\begin{figure}[t]
\begin{center}
\includegraphics[width=0.80\textwidth]{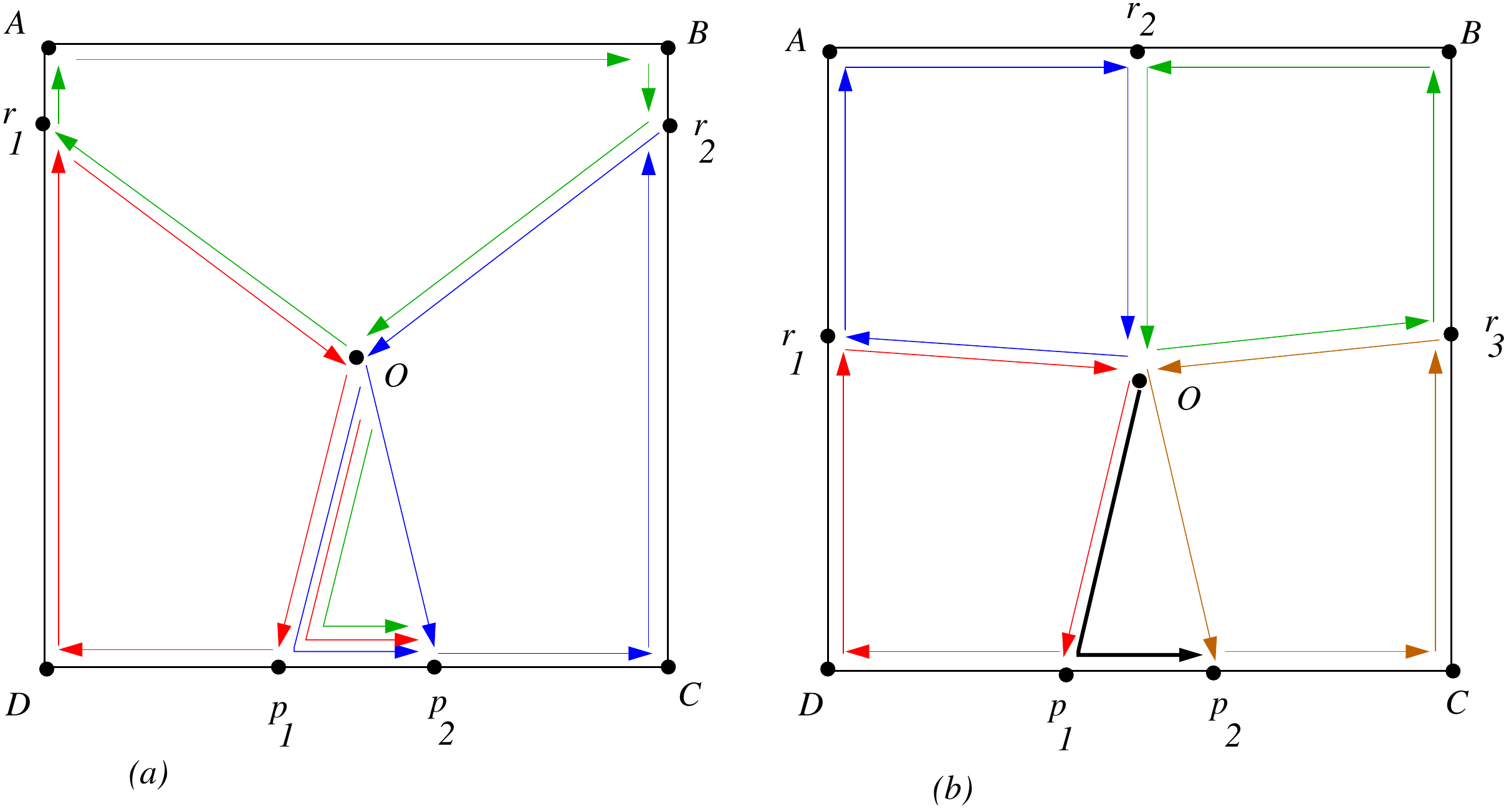}
\end{center}
\caption{Trajectories of three and four robots for evacuating a square. Section $p_1p_2$ is a common part of trajectories  of all robots.}
\label{fig:square3and4robots}
\end{figure}

\subsubsection{Equal-Travel Early-Meeting Algorithms for $k\geq 3$ Robots}
We now consider the case of more than two robots. We specify trajectories for the  Equal-Travel Early-Meeting strategy, and determine the evacuation time of these algorithms for $k=3$ and $k=4$.
\begin{theorem}
\label{thm:square3and4}
There are Equal-Travel Early-Meeting algorithms for three and four robots initially located in the centroid of  square $S$ with side length 1 with evacuation times $E^*(S,3)\leq 3.178$ and $E^*(S,4)\leq 2.664$, respectively.
\end{theorem}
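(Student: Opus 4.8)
The plan is to construct explicit Equal-Travel Early-Meeting trajectories for $k=3$ and $k=4$ robots in the square, following the template laid out for the triangle in Theorem~\ref{thm:3Robots}, and then to optimize the free parameters numerically exactly as was done there. For each value of $k$, I would designate the centroid $O$ as the meeting point, fix points $p_1$ and $p_2$ on the boundary so that the segment $p_1p_2$ is the \emph{common section} to be explored jointly in the second phase, subject to the constraint $|Op_1| + (\text{first-phase travel along the boundary}) = (\text{path length to }p_2)$ that keeps the two endpoints of the common section reachable at consistent times. The remaining boundary is partitioned by placing points $r_1, r_2, \ldots, r_{k-1}$ so that each of the $k$ robots is assigned a unique arc. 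In the first phase each robot walks from $O$ to one end of its arc, traverses the arc, and returns to $O$; if some robot has found the exit, all robots proceed to it from $O$ (adding travel at most the centroid-to-farthest-corner distance, here $\sqrt{2}/2$), and otherwise the exit lies on the unexplored common section and all robots traverse $p_1p_2$ together and evacuate.

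The main computational step is to write down the worst-case first-phase travel time $t_i$ for each robot $R_i$ as a sum of Euclidean segment lengths plus the constant ``go to exit from $O$'' term, then impose the equalization conditions $t_1 = t_2 = \cdots = t_k$. This yields a system of $k-1$ equations in the positions of $r_1, \ldots, r_{k-1}$, which pins those positions down as functions of the remaining free parameter $p_1$ (the start of the common section). I would then optimize over $p_1$ (and, for the square, possibly the exact placement of $p_2$ consistent with the length constraint) to minimize the resulting common evacuation time, just as the triangle proof optimized the position of $p_1$ after solving for the $r_i$. Carrying out these numerical optimizations, I expect to obtain $E^*(S,3) \leq 3.178$ and $E^*(S,4) \leq 2.664$, matching the claimed bounds.

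Because the argument is structurally identical to the triangle case and differs only in the geometry of the square boundary (four sides of length one, centroid-to-corner distance $\sqrt{2}/2$, centroid-to-side distance $1/2$), I would set up the figure analogous to Figure~\ref{fig:square3and4robots}, verify that the common section length and the first-phase arcs genuinely cover the whole boundary, and confirm the geometric constraint on $p_2$ (analogous to $|Op_1| + |p_1p_2| = x$ in the triangle) holds so that the second-phase exit-search time does not exceed the first-phase time. The key correctness check is the \emph{consistency of the two phases}: one must confirm that the time for all robots to complete the first phase and return to $O$ is no less than is needed to then traverse the common section, so that the worst case is genuinely $\max_i t_i$ and is attained either by an exit in one of the explored arcs or by an exit in the common section.

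The part I expect to be most delicate is not the algebra of equalizing the $t_i$ but the \emph{placement of $p_1, p_2$ relative to the corners}, since the square's corners are the ``hard'' points (by Theorem~\ref{thm:lb2Square} they force large detours), and the common section must be positioned so that no single explored arc is forced to include two corners in a way that blows up its length. Once the partition is chosen so each robot's arc contains at most one corner and the common section straddles a corner or a midpoint appropriately, the remaining optimization is routine and can be discharged numerically (as the paper does via Maple or Python), so I would simply state the optimized parameter values and the resulting evacuation times, omitting the mechanical computation exactly as the triangle proof of Theorem~\ref{thm:4Robots} does.
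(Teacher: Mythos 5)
Your proposal follows essentially the same route as the paper's proof: centroid as the meeting point, a common section $p_1p_2$ on one side with $|Op_1|+|p_1p_2|=\sqrt{2}/2$ so the second phase costs no more than the return-to-exit term $\sqrt{2}/2$ of the first phase, partition of the remaining boundary by $r_1,\ldots,r_{k-1}$, equalization $t_1=\cdots=t_k$, and numerical optimization over the placement of $p_1$ (the paper obtains $|Dp_1|=|Cp_2|=0.4012$, $|Dr_1|=|Cr_2|=0.9124$ for $k=3$, and $|Cr_3|=|Dr_1|=0.5445$, $|Ar_2|=0.5$ for $k=4$). The correctness checks you flag (phase consistency and arc placement relative to the corners) are sensible but the argument is the same one the paper gives.
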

\begin{proof}
The algorithms for both $k=3$ and $k=4$ are similar to the Equal-Travel Early-Meeting algorithm for $k=3$ in case of the equilateral triangle (see Theorem~\ref{thm:3Robots}). The centroid $O$ is used as the meeting point. Place the points $p_1$ and $p_2$ on the side $DC$ as shown in Figure~\ref{fig:square3and4robots}(a); the exact locations will be determined later. Let $|Op_1|+|p_1p_2|=\sqrt{2}/2$. The remainder of the boundary is divided using points $r_1,r_2,\ldots, r_k$  into $k$ segments.

When $k=3$, robot $R_1$ is assigned the section of the boundary from $p_1$ to $D$, to $r_1$, and then it goes back to $O$, $R_2$ is assigned the section from $r_1$ to $A$, $A$ to $B$, $B$ to $r_2$ and then $r_2$ back to $O$, and $R_3$ is assigned the section form $r_2$ to $C$, $C$ to $p_2$ and then $p_2$ back to $O$. If the exit is discovered by one of the robots in its section, then the other robots need to travel to it from the meeting point at the centroid, which adds distance at most $\sqrt{2}/2$. Otherwise the robots travel together to $p_1$ to explore $p_1p_2$, also for time  $\sqrt{2}/2$.  Therefore, the maximum distance traveled by robots is one of\\
$t_1=|Op_1|+|p_1D|+|Dr_1|+|r_1O|+\sqrt{2}/2$ for $R_1$,\\
$t_2=|Or_1|+|r_1A|+|AB|+|Br_2|+|r_2O|+\sqrt{2}/2$ for $R_2$, and\\
$t_3=|Or_2|+|r_2C|+|Cp_2|+|p_2O|+\sqrt{2}/2$ for $R_3$.

Solving the set of equations $t_1=t_2$, $t_2=t_3$ and then optimizing the position of $p_2$ as we did in the proof of Theorem~\ref{thm:3Robots}, we get $|Dp_1|=|Cp_2|=0.4012$ and $|Dr_1|=|Cr_2|=0.9124$. This gives us $E^*(S,3)\leq 3.178$.

When $k=4$, the trajectories of the robots are shown in Figure~\ref{fig:square3and4robots}(b). By an analogous calculations  those above for $k=3$, we get $|Dp_1|=|Cp_2|=0.4012$, $|Cr_3|=|Dr_1|=0.5445$ and $|Ar_2|=0.5$. This gives us $E^*(S,4)\leq 2.664$.
\end{proof}

Clearly, our algorithms for $k=3$ and $4$ can be easily generalized for any $k>4$.

\section{Discussion}
\label{sec:conclusion}
We studied the evacuation of an equilateral triangle or a square by $k$ robots, initially located at its centroid,  when the robots can communicate with each other only when they meet; i.e., they use face-to-face communication.

For the triangle we showed a lower bound of $\sqrt{3}$ on the evacuation time by any number $k$ of robots, and gave a simple Equal-travel strategy that achieves this bound asymptotically. For $k=2$ robots,  we proved a lower bound of $1 + 2 /\sqrt{3} \approx 2.154$.  We then showed that for $k=2$  the Equal-travel strategy can be improved   by  adding  detours  in the interior of the triangle,  and obtained an upper bound of $2.3367$ on the evacuation time. This upper bound is achieved with two detours by each robot. We showed that detours can be used recursively to improve the evacuation time. For $k \geq 3$, we studied the  Equal-Travel Early-Meeting strategy for evacuation algorithms in which the robots meet at an {\em early  meeting point} inside the triangle before the whole boundary is examined. This strategy  gave us upper bounds of  $2.08872$, $1.9816$, and
$1.876$ for $k=3, 4$ and 5, respectively. We then showed that the same strategies can be applied to obtain evacuation algorithms for a unit square.

Our work points to  a number of directions open for future work. First, closing the existing gaps in our results for the evacuation time in the face-to-face model remains open for both the equilateral triangle and the square. Moreover, although we limited our study to the evacuation of the equilateral triangle and square of unit side, the lower bound and algorithmic strategies used in this paper should be applicable to other convex search domains. Finally, a clear understanding of the search domains in which early meetings or detours are {\em provably} useful remains elusive.

\subparagraph{Acknowledgment.} The authors thank Iman Bagheri for catching and helping to fix an error in the proof of Lemma 6. The second author thanks Prosenjit Bose for useful discussions on the problem.

\bibliographystyle{plain}
\bibliography{ref}

\begin{thebibliography}{10}

\bibitem{bookAW}
R.~Ahlswede and I.~Wegener.
\newblock {\em Search problems}.
\newblock Wiley-Interscience, 1987.

\bibitem{Baeza-YatesCR88}
R.~A. Baeza{-}Yates, J.~C. Culberson, and G.~J.~E. Rawlins.
\newblock Searching with uncertainty (extended abstract).
\newblock In {\em Proceedings of the 1st Scandinavian Workshop on Algorithm
  Theory (SWAT 88)}, pages 176--189, 1988.

\bibitem{Baeza-YatesS95}
R.~A. Baeza{-}Yates and R.~Schott.
\newblock Parallel searching in the plane.
\newblock {\em Computational Geometry}, 5:143--154, 1995.

\bibitem{Beck1964}
A.~Beck.
\newblock On the linear search problem.
\newblock {\em Israel Journal of Mathematics}, 2(4):221--228, 1964.

\bibitem{bookBN}
A.~Bonato and R.~Nowakowski.
\newblock {\em The Game of Cops and Robbers on Graphs}.
\newblock American Mathematical Society, 2011.

\bibitem{BrandtFRW17}
S.~Brandt, K.-T. Forster, B.~Richner, and R.~Wattenhofer.
\newblock Wireless evacuation on $m$ rays with $k$ searchers.
\newblock In {\em Proceedings of SIROCCO 2017}, pages 140--157, 2017.

\bibitem{BW2017}
S.~Brandt, F.~Laufenberg, Yuezhou Lv, D.~Stolz, and R.~Wattenhofer.
\newblock Collaboration without communication: Evacuating two robots from a
  disk.
\newblock In {\em Proceedings of the 10th CIAC}, pages 104--115, 2017.

\bibitem{ChrobakGGM15}
M.~Chrobak, L.~Gasieniec, T.~Gorry, and R.~Martin.
\newblock Group search on the line.
\newblock In {\em Proceedings of {SOFSEM} 2015: 41st International Conference
  on Current Trends in Theory and Practice of Computer Science}, pages
  164--176, 2015.

\bibitem{ChuangpishitMNO17}
H.~Chuangpishit, S.~Mehrabi, L.~Narayanan, and J.~Opatrny.
\newblock Evacuating an equilateral triangle in the face-to-face model.
\newblock In {\em Proceedings of the 21st International Conference on
  Principles of Distributed Systems (OPODIS 2017)}, pages 11:1--11:16, 2017.

\bibitem{CzyzowiczDGKM18}
J.~Czyzowicz, S.~Dobrev, K.~Georgiou, E.~Kranakis, and F.~MacQuarrie.
\newblock Evacuating two robots from multiple unknown exits in a circle.
\newblock {\em Theor. Comput. Sci.}, 709:20--30, 2018.

\bibitem{CzyzowiczGGKMP14}
J.~Czyzowicz, L.~Gasieniec, T.~Gorry, E.~Kranakis, R.~Martin, and D.~Pajak.
\newblock Evacuating robots via unknown exit in a disk.
\newblock In {\em Proceedings of the 28th International Symposium on
  Distributed Computing (DISC 2014), Austin, TX, USA}, pages 122--136, 2014.

\bibitem{CzyzowiczGGKKRW17}
J.~Czyzowicz, K.~Georgiou, M.~Godon, E.~Kranakis, D.~Krizanc, W.~Rytter, and
  M.~Wlodarczyk.
\newblock Evacuation from a disc in the presence of a faulty robot.
\newblock In {\em Proceedings of SIROCCO 2017}, pages 158--173, 2017.

\bibitem{CzyzowiczGKKKNO18}
J.~Czyzowicz, K.~Georgiou, R.~Killick, E.~Kranakis, D.~Krizanc, L.~Narayanan,
  J.~Opatrny, and S.~M. Shende.
\newblock Priority evacuation from a disk using mobile robots - (extended
  abstract).
\newblock In {\em proceedings of the 25th International Colloquium on
  Structural Information and Communication Complexity (SIROCCO 2018), Ma'ale
  HaHamisha, Israel}, pages 392--407, 2018.

\bibitem{CzyzowiczGKNOV15}
J.~Czyzowicz, K.~Georgiou, E.~Kranakis, L.~Narayanan, J.~Opatrny, and
  B.~Vogtenhuber.
\newblock Evacuating robots from a disk using face-to-face communication.
\newblock In {\em Proceedings of CIAC 2015}, pages 140--152, 2015.

\bibitem{CzyzowiczKKNOS15}
J.~Czyzowicz, E.~Kranakis, D.~Krizanc, L.~Narayanan, J.~Opatrny, and S.~M.
  Shende.
\newblock Wireless autonomous robot evacuation from equilateral triangles and
  squares.
\newblock In {\em Proceedings of the 14th International Conference on Ad-hoc,
  Mobile, and Wireless Networks (ADHOC-NOW 2015), Athens, Greece}, pages
  181--194, 2015.

\bibitem{EmekLUW14}
Y.~Emek, T.~Langner, J.~Uitto, and R.~Wattenhofer.
\newblock Solving the {ANTS} problem with asynchronous finite state machines.
\newblock In {\em Proceedings of the 41st International Colloquium on Automata,
  Languages, and Programming, (ICALP 2014), Part {II}}, pages 471--482, 2014.

\bibitem{FeinermanK13}
O.~Feinerman and A.~Korman.
\newblock Theoretical distributed computing meets biology: {A} review.
\newblock In {\em Proceedings of the 9th International Conference on
  Distributed Computing and Internet Technology, (ICDCIT 2013), Bhubaneswar,
  India}, pages 1--18, 2013.

\bibitem{FeinermanKLS12}
O.~Feinerman, A.~Korman, Z.~Lotker, and J.{-}S. Sereni.
\newblock Collaborative search on the plane without communication.
\newblock In {\em {ACM} Symposium on Principles of Distributed Computing (PODC
  2012), Funchal, Madeira, Portugal}, pages 77--86, 2012.

\bibitem{FlocchiniPSW05}
P.~Flocchini, G.~Prencipe, N.~Santoro, and P.~Widmayer.
\newblock Gathering of asynchronous robots with limited visibility.
\newblock {\em Theor. Comput. Sci.}, 337(1-3):147--168, 2005.

\bibitem{GhoshK10}
S.~K. Ghosh and R.~Klein.
\newblock Online algorithms for searching and exploration in the plane.
\newblock {\em Computer Science Review}, 4(4):189--201, 2010.

\bibitem{LiC09a}
L.~Hua and E.~K.~P. Chong.
\newblock Search on lines and graphs.
\newblock In {\em Proceedings of the 48th {IEEE} Conference on Decision and
  Control, {CDC} 2009, China}, pages 5780--5785, 2009.

\bibitem{JezL09}
A.~Jez and J.~Lopuszanski.
\newblock On the two-dimensional cow search problem.
\newblock {\em Information Processing Letters}, 109(11):543--547, 2009.

\bibitem{LenzenLNR14}
C.~Lenzen, N.~A. Lynch, C.~C. Newport, and T.~Radeva.
\newblock Trade-offs between selection complexity and performance when
  searching the plane without communication.
\newblock In {\em {ACM} Symposium on Principles of Distributed Comp. (PODC
  2014),}, pages 252--261, 2014.

\bibitem{Lopez-OrtizS01}
A.~L{\'{o}}pez{-}Ortiz and G.~Sweet.
\newblock Parallel searching on a lattice.
\newblock In {\em Proceedings of the 13th Canadian Conference on Computational
  Geometry (CCCG 2001)}, pages 125--128, 2001.

\bibitem{bookN}
P.~Nahin.
\newblock {\em Chases and Escapes: The Mathematics of Pursuit and Evasion}.
\newblock Princeton University Press, 2012.

\bibitem{bookS}
L.~D. Stone.
\newblock {\em Theory of Optimal Search}.
\newblock Academic Press New York, 1975.

\bibitem{th2016}
I.~Thompson.
\newblock {\em Understanding Maple}.
\newblock Cambridge University Press, 2016.

\end{thebibliography}

\newpage
\appendix
\section*{Appendix A}
\label{sec:apxA}
The Python source code for computing the evacuation time claimed in Theorem~\ref{thm:square2robotsAlg} is given below.
\begin{verbatim}
import math

p=0.1
q=0.2
bestP=p
bestQ=q
minimum=5
while p<0.2:
    q=0.2
    while q<0.8:
        alpha=math.asin((1-p)/math.sqrt(1+math.pow(1-p,2)))
        ec=math.sqrt(1+math.pow(1-p,2))
        x=(ec-1-p)/2
        eg=x
        hg=x*math.sin(alpha)
        gc=ec-eg
        jc=q
        gj=math.sqrt(gc*gc+jc*jc-2*gc*jc*math.cos(alpha))
        #We must have 1+p+x+gi = q+ij
        f=1+p+x
        
        ij=(f-q+gj)/2
        gi=gj-ij

        #Time to meet:
        t1=1+q+ij
        t2=1+1+p+x+gi

        #The remaining time:
        #We need ic
        cosBeta=(gc*gc-gj*gj-jc*jc)/(-2*gj*jc)
        beta=math.acos(cosBeta)
        ic=math.sqrt(ij*ij+jc*jc-2*ij*jc*cosBeta)

        ei=math.sqrt(eg*eg+gi*gi-2*eg*gi*math.cos(alpha+beta))

        detour=eg+gi+ei

        lhs=1+detour-q
        A=1+detour-q

        r=(2+q*q-2*q-A*A)/(2*A+2)
        kj=math.sqrt((1-r)*(1-r)+(1-q)*(1-q))

        m1=max(1+2*gc, 1+q+2*ij)
        m2=max(1+q+2*kj, 3+detour)
        maximum=max(m1, m2)

        if maximum<minimum:
            minimum=maximum
            bestP=p
            bestQ=q

        q+=0.00005
    p+=0.00005

print('The evacuation time is:', minimum)
print('The value of p is:', bestP)
print('The value of q is:', bestQ)
\end{verbatim}

The output of the above source code is shown below.
\begin{verbatim}
The evacuation time is: 3.4644205988599026
The value of p is: 0.15559999999999388
The value of q is: 0.5009999999999669
\end{verbatim}

\end{document}